\documentclass{article}
\usepackage{appendix}
\usepackage[a4paper, total={160mm, 247mm}]{geometry}
\usepackage[affil-it]{authblk}

\usepackage{cite}
\usepackage{amsmath,amssymb,amsfonts}
\usepackage{algorithmic}
\usepackage{graphicx}
\usepackage{textcomp}
\usepackage{bussproofs}
\usepackage{xcolor}
\usepackage{stmaryrd}
\usepackage{url}
\usepackage{todonotes}

\usepackage{amsthm}
\usepackage{xypic}

\newtheorem{lemma}{Lemma}
\newtheorem{corollary}{Corollary}
\newtheorem{theorem}{Theorem}
\newtheorem{proposition}{Proposition}

\theoremstyle{definition}
\newtheorem{definition}{Definition}
\newtheorem{example}{Example}

\newcommand{\cc}{{\cdot}}

\newcommand{\m}[1]{\mathcal{#1}}

\newcommand{\Ag}{\mathbb{A}}

\newcommand{\Tok}{\mathbb{T}}
\newcommand{\Mes}{\mathbb{M}}
\newcommand{\Form}{\mathbb{F}}

\newcommand{\X}{\mathtt{X}}
\newcommand{\tP}{\mathtt{P}}
\newcommand{\tF}{\mathsf{P}}

\newcommand{\ot}{\scalebox{0.5}[0.75]{$\leftarrow$}}

\newcommand{\RN}{\mathtt{N}}

\newcommand{\nval}[4]{\mathsf{val}_{#1} (#2 \ot #3) (#4)}

\newcommand{\mtrust}[4]{\mathtt{trust}_{#1} \, #2 , #3 \, (#4)}
\newcommand{\ntrust}[4]{\mathsf{trust}_{#1} (#2 \ot #3) (#4)}

\newcommand{\p}{A}

\newcommand{\I}{\mathcal{I}}
\newcommand{\J}{\mathcal{J}}

\newcommand{\F}{\mathcal{F}}
\newcommand{\G}{\mathcal{G}}
\newcommand{\B}{\mathcal{B}}
\newcommand{\D}{\mathcal{D}}

\newcommand{\T}{\mathcal{T}}
\newcommand{\R}{\mathcal{R}}
\newcommand{\M}{\mathcal{M}}
\newcommand{\N}{\mathcal{N}}
\newcommand{\E}{\mathcal{E}}

\newcommand{\Var}[1]{\texttt{var}(#1)}

\newcommand{\Lam}[2]{\lambda #1.(#2)}
\newcommand{\App}[2]{#1 \cdot #2}

\newcommand{\tLock}[2]{\texttt{lock}_{#1}(#2)}
\newcommand{\tKey}[3]{\texttt{key}_{#1 \Rrightarrow #2}(#3)}

\newcommand{\ip}{\rotatebox[origin=c]{180}{$\pi$}}

\newcommand{\valu}[1]{\widehat{\I_{#1}}}
\newcommand{\valm}[1]{\widehat{#1}}
\newcommand{\valw}[1]{\widehat{\J_{#1}}}

\def\BibTeX{{\rm B\kern-.05em{\sc i\kern-.025em b}\kern-.08em
    T\kern-.1667em\lower.7ex\hbox{E}\kern-.125emX}}
\begin{document}

\title{Modal Logic for Distributed Trust\thanks{This research has been supported by Estonian Research Council, grant No. PRG1780.}
}

\author{Niels Voorneveld \qquad \qquad Peeter Laud}
\affil{Cybernetica AS}

\maketitle

\begin{abstract}
	We propose a method for reasoning about trust in multi-agent systems, specifying a language for describing communication protocols and making trust assumptions and derivations. This is given an interpretation in a modal logic for describing the beliefs and communications of agents in a network. 
	We define how information in the network can be shared via forwarding, and how trust between agents can be generalized to trust across networks.
	
	We give specifications for the modal logic which can be readily adapted into a lambda calculus of proofs.
	We show that by nesting modalities, we can describe chains of communication between agents, and establish suitable notions of trust for such chains.
	We see how this can be applied to trust models in public key infrastructures, as well as other interaction protocols in distributed systems.

\end{abstract}


\section{Introduction}\label{sec:intro}

When protocols are distributed across multiple parties, you must consider those parties in their proper context in order to make guarantees. Who has access to what information? Who is claiming that the information is correct. These are factors necessary for determining both whether something is correct, and whether someone has the necessary knowledge to learn about this correctness.
Properties and guarantees we want to establish within distributed contexts involve claims regarding credentials, and correctness of distributed processes according to some specifications.
Cooperation is needed to prove such guarantees, which fundamentally requires the sharing of statements and the consideration of other peoples statements. 

There are two sides to cooperation. The first is to facilitate the building of strong claims by combining knowledge. This may not necessitate full transparency, and entities can choose what and with whom they share their claims, in order to cut costs or keep confidential information hidden. Such entities would like to learn the effectivity of sharing facts, and what others can do with them. They can prove that they can convince others, or that collaborators can use their facts effectively.
Though the partial \emph{hiding} of information is assumed, this paper is not about \emph{privacy}. No guarantees are made that information cannot leak. It is simply specified in which contexts claims are officially shared, and proofs are built only on such specified assumptions. Still, some basic properties of privacy across communication channels can be asserted, and one can reason about consequences of leaked information.

The flipside to cooperation is \emph{responsibility} and \emph{accountability}. Sharing information is based on mutual trust, and entities making false claims can break this trust. As such, entities should take care not to share false facts, and careful entities should stick to their expertise. 
Entities are not just accountable for the claims they make, but also for the logical consequences of their claims. Implied lies are still lies. As such, it is important to specify exactly what information an entity has access to. This is another reason why the destination of claims matters, as it tells us who has received the information. Entities should not be allowed to ignore information shared with them. They should consider consequences of received claims, and be accountable for them.
That said, it is important to distinguish between the claim as a whole, and the content of the claim. It is ok to accept that a claim has been made, yet to deny that the content of the claim is true. Whether or not to accept the contents of a claim as true is a matter of trust. 

Trust is the cornerstone of this work, and has been studied from many points of view. If you trust a claim, you accept its content as true. Trust is formulated in a conditional way: it does not matter whether the claim has actually been made, trust states that if a certain claim is made by a certain entity, then it is true. 
Making trust conditional facilitates cooperation. Entities may declare their trust in others. This can then be used by other entities, which could incorporate it in their model in order to make guarantees. For instance, a shop may declare trust in an authority in its ability to authenticate credentials. A buyer then knows it can prove their credentials to the shop via this authority. Such interacting trust relations can then be built up in order to prove guarantees of larger authentication networks and protocols.
Note however, that if an entity receives a claim which they have previously declared as trustworthy, they are now accountable for the consequences of the claim. Hence, \emph{declared trust}, which can be interpreted as \emph{vouching} for the expertise of someone, requires one to accept accountability for a trusted entity's statements regarding that area of expertise.

\subsection{Constructive Modal Logic}

Constructive logics have many useful applications in computer science and security information systems.
Such logics force one to be more specific regarding how a property is satisfied. Proving a disjunction requires one to specify explicitly which of the two options are true (or give a decision procedure as such given the assumptions). Moreover, it disallows double negation elimination: ``it cannot fail" is not a sufficient enough reason of why something is guaranteed to work.

In short, constructive logic requires one to build an explicit guarantee using a set of assumptions. This also increases accountability: the precise reason for the guarantee is embedded within the proof itself, allowing one to trace back which assumptions are used, and how a guarantee is satisfied. If the guarantee is later shown to be false, then it is easier to identify the entity that made a mistake.

To express both the claims being shared, and the internal perspectives of entities, we use modalities. Guided by our pursuit of accountable claims, we use modalities satisfying the axiom K and the necessity rule, which are considered standard axioms in epistemic (knowledge-based) logics. 
We moreover consider a collection of well-behaved axioms relating the many perspectives. Though many varieties can be considered, we start with a basic model of \emph{belief} $\B$ and \emph{interaction} $\I$ to mark internal and communicated knowledge.

A core building block of our logic is the consideration of modal \emph{splitting axioms}, which declare that $\M A \Rightarrow \N \R A$ holds for any statement $A$. These express how information $A$ from modality $\M$ can be adopted by modality $\N$. This latter domain does not necessarily accept the statement as true. Instead, they consider it as flagged by modality $\R$, and can reason with this statement. A specific instance of such an axiom is an \emph{awareness axiom} of the form $\M A \Rightarrow \N \M A$ for all $A$, where we say $\N$ is aware of all statements $A$ provable by $\M$.
This would hold for instance if $\M$ represents a communication channel which $\N$ has access to.

Another aspect to consider is the \emph{decidability} of the logic; does an algorithm exist which shows whether a guarantee is true given certain assumptions? Decidability is useful in order to increase accountability: it enables entities to compute and hence be aware of certain consequences of their claims. 
Checking for particularly significant consequences of one's claims should be part of an entity's due diligence.
In order to facilitate decidability, we limit ourselves to a specified set of modal axioms, as well as avoiding the use of polymorphism and quantifiers.

\subsection{Distributed Trust and Decentralized Identity}

One of the main motivations for this paper is its applications to decentralized identities.
We believe that the emergence of \emph{self-sovereign identity}~\cite{Reed-building-blocks-SSI} acutely necessitates the ability to assign truth values to complex statements about and including trust. An entity's sovereignty over their identity means the proliferation of identifiers that that entity intends to use in different contexts. These identifiers are presumably created by the entity themselves; some authority may or may not have bound them to identifiers in a more generally known namespace (e.g. the names of the residents of some country).

The statements that the various authorities make about an entity will be bound to the various identifiers created by that entity (as expressed by e.g. the ``Privacy and minimal disclosure'' principle of~\cite{sovrin-principles-of-SSI}). While it should generally be the responsibility of that entity to make sure that the claims they want to present together are bound to the same identifier, we expect the reality to not be so simple. When a relying party (i.e. a party that makes decisions on the basis of the claims it receives about an entity) receives a number of statements pertaining to a number of different identifiers, as well as some statements that relate these identifiers to each other (as per the ``Delegation'' principle of~\cite{sovrin-principles-of-SSI}), it becomes highly non-trivial to find out which of these claims can be combined with each other. This paper provides a system for relying parties to make such inferences.

The same question --- what can be reasonably and confidently derived for which identifier? --- is also something that the individual entities themselves want to find the answers to. Our logical approach provides them with tools to make these inferences \emph{from the point of view of some other party}, allowing them to forecast the decisions made by the relying parties. Indeed, the decisions can be derived for various parties, considering what we know about their trust relationships; these decisions may be combined with each other and considered from the point of view of different parties.

\subsection{Misplaced Trust in Certificate Authorities}

A motivating example we focus on is that of networks for public key validation.
A certificate authority (CA) is an entity in such networks which makes publicly verifiable statements that bind public keys to entities. These statements are the \emph{certificates}. A CA is trusted to be able to properly verify that an entity is in control of a public key. We consult CAs to find out an entity's public key. There are various mechanisms to find out whether an entity is a CA, e.g. a higher-level CA may issue a certificate to another entity stating that it is a (sub-)CA, too. We use such statements in hierarchical public-key infrastructures (PKI), where the knowledge of the public key of a \emph{root} CA allows us to verify a \emph{certificate chain} that ultimately binds a public key to an (end-)entity.

Of course, when an entity is declared to be a CA, then the declarer has to perform due diligence on that entity's ability to be a CA. Sometimes these declarations are made by CAs that should not be declaring other entities to be a CA. This creates confusion when verifying certificate chains, because the last certificate in that chain may be issued by a CA that should not be seen as a CA. Several incidents (see~\cite{spof23report}, subsections A.1.11, A.1.13, A.1.14, A.1.17) have taken place due to such misplaced trust. Even more (see subsections A.1.10 and A.1.25 in~\cite{spof23report}) have happened due to other usage restrictions that have been absent in the issued certificate.

We employ our logic as a tool for specifying the function and duties of entities in distributed networks, like CAs in public key infrastructures. This way, they can be held accountable for mistakes made. Given a formal constructive proof of a property, if the property is later shown to be false, we can trace back the dependencies and find who is at fault, ensuring accountability. Moreover, definitions of trust can be fine-tuned to safeguard against certain errors and attacks. For instance, threshold trust can be employed to deal with issues of unresponsiveness or corruption of CAs.

\subsection{Related Work}

There exist a number of calculi and logics for defining and computing trust in distributed networks in general and PKIs in particular. 
The paper \cite{Rangan88} considers a modal logic with belief modality, with the description of messages modeled by the underlying Kripke style semantics instead of explicitly described in the logic as we do.
Non-modal approaches include a predicate calculus \cite{Bakkali01} studying a generalization of the usual graph description of PKI networks, and \cite{Aziz07} which focuses on modeling the concurrent properties of such networks. 
Especially interesting are \cite{JingWei09,Singh2011TrustAD} which both study trust statements and inferences which can be made using them, similar to our treatment of trust for PKI, which we describe within our logic. 
The goals of these calculi are similar to our formalisms --- turning statements made by trusted entities into beliefs of users. Though they do not support \emph{higher-order} reasoning, with messages referring to other messages, which our modal logic does consider.

We use a modal logic to describe and give meaning to trust in distributed systems. Modal logics have had many applications in computer science due to their flexibility and versatility, which shows their usefulness as a verification tool. Examples include modeling programming features such as variable scoping \cite{Nanevski08}, and reasoning about cryptographic processes \cite{FRENDRUP2002124}. Using a modal logic to describe trust has been investigated in several different contexts. We ground our work on the definition of \emph{cautious trust} by Liau \cite{LIAU2003}, given in terms of belief and communication. This has been further investigated theoretically in a plethora of other works \cite{Dastani05,Dundua2010TrustAB,Leturc18}.
It has been shown that such trust definitions could be extended to consider probability \cite{Herzig03,Kohlas08} and time \cite{Aldini14}.

We specifically look at \emph{intuitionistic} modal logic \cite{Wolter1999}, in order to keep proofs constructive and expressible by various lambda calculi \cite{Bellin03,MDTT,Acclavio2023CanonicityIM} via the Curry-Howard correspondence. We maintain that calculi such as these are one of the requirements for a trust logic, due to the support it can give for the derivation of trust in software libraries.

\subsection{Comparison to Authorization and Access Logics}

The work has a strong connection to \emph{authorization and access logics} (AL) \cite{Authentication,Garg06,Becker07,Becker12}. Though the basic motivation and foundations of the formalisms are distinct, they are not wholly incompatible. ALs are about certain commands, and whether they were made by an authorized entity. This entails authenticating proofs of authority. In this paper, we do not focus on specific commands and control statements as such, but instead focus on the sharing and evaluating of information across distributed networks.

Many works mainly consider two primitives, a modality $S_a$ for what a principal $a$ \emph{says}, and a primitive formula stating that $a$ \emph{speaks for} $b$. 
This logical operation already existed in foundational work \cite{Authentication}, which focused on information regarding who owns which keys, and used in particular descriptions of what information people see, believe and control. Soon after, in \cite{Abadi93}, a more thorough investigation of the \emph{says} constructor was done, studying a lattice structure and defining \emph{control} as the logical affirmation: saying a statement makes that statement true.

Many different axiomatizations of $S_a$ were considered, with order structure, consistency $S_a \bot \Rightarrow \bot$ (axiom D) and idempotency $S_a A \Leftrightarrow S_a S_a A$ occurring in \cite{Abadi93} already.
Consistently throughout further work \cite{Abadi06, Modal_Access, GargInterference} the axioms $A \Rightarrow S_a A$ and $S_a S_a A \Rightarrow S_a A$ were assumed, with some (mainly \cite{Abadi06}) assuming even $S_a S_b A \Rightarrow S_b S_a A$.
It is perhaps this last axiom which highlights a fundamental feature of authorization logic different from out approach: in ALs they are concerned about who has authority, not about who said something.

The axiom $A \Rightarrow S_a A$ signifies that whoever is reasoning about statements from $a$ may complement any of $a$'s utterances with other things which are true. This comes from the perspective that $a$ may want access or authority over something, and hence is ok with others completing their proof of access with additional details which may help them. This is in stark opposition to the aim in this paper, where $a$ is accountable for their statements, and would not want others to put words into their mouth. Additionally, we would like $a$ to be able to reason about consequences of their claims, and hold a precise model of how other people regard them. As such, they should only need to be accountable for knowledge they are aware of.

The paper\cite{Abadi03} considers a variation in which $A \Rightarrow S_a A$ is replaced by $S_b A \Rightarrow S_a S_b A$, meaning one may adopt claims made by others. Utterances by $b$ are considered as universally shared, which would make $a$ aware of them. We would like to disallow this as well in general, as communications are not necessarily shared with everyone. Other work\cite{Hirsch2013,Hirsch2020,Hu10} allow $S_a A \Rightarrow S_a S_a A$ as we do when considering belief, though still have $S_a S_a A \Rightarrow S_a A$.

We drop axioms of the form $\M \N A \Rightarrow \R A$ mostly in order to keep decidability in the absence of S4's unit axioms. We predominantly focus on how information is shared and adopted, which entails axioms of the form $\M A \Rightarrow \N A$ and $\M A \Rightarrow \N \R A$. These include $\B_a A \Rightarrow \B_a \B_a A$, as well as agents adopting information they have access to, such as $\I_{a \ot b}A \Rightarrow \B_a\I_{a \ot b} A$. 
We moreover do not consider axiom D stating that $\M \bot \Rightarrow \bot$ \cite{Iranmanesg08,Sirer11}, since we do not wish the inconsistency of one party's statements to necessarily imply the inconsistency of the whole.

Certain foundational works consider lattice structures on their agents\cite{Abadi93,Abadi06} which give an axiomatic order on $S_a$, as well as meet and join axioms. We adopt the order structure to define how belief is shared, which allows us to model \emph{broadcasting}. Other works have this order as dynamic statements in the logic, with derivable \emph{speaks for} formulas. We do not consider those, as they are statements of authority, not necessarily trust.
Some work formulating \emph{can} statements\cite{Becker07,Becker12} does bear resemblance to our treatment of trust. There, if $a$ says that $b$ can do $A$, and if $b$ does it, then $a$ says $A$ happens. Similarly, we say that if $a$ claims that $b$'s opinion on $A$ is valid (trust), and if $b$ claims $A$, $a$ is responsible for claim $A$. 

The DKAL language \cite{Gurevich08,Gurevich09} considers a similar trust statement as primitive in a different context. In terms of treatment of trust itself, DKAL is perhaps the closest to our logic, with our derivations of validity (Sec.~\ref{sec:trust}) having similarities with their operational semantics~\cite{Gurevich09}. 
DKAL mostly focuses on parametric modalities satisfying axiom K and Necessity, and does not consider many further axioms, such as the connections between modalities we explore in this paper.
Our treatment of multi-step information sharing (Sec.~\ref{sec:chains} and~\ref{sec:forward}) is more fine-grained, with proper significance given to both the senders and the \emph{intended} recipients of messages. We are also able to handle \emph{disjunction}, which is useful to model certain protocols. But our greatest difference lies in semantics --- the Kripke semantics of our logic gives us greater confidence in our axioms, inference rules, and proof search methods. DKAL instead has a translation to Liberal Datalog~\cite{EFPL=LiberalDatalog} for deriving truth. Their follow-up DKAL~2~\cite{MSRTR-2009-11} does have a Kripke model, but lacks some of the expressivity discussed above.

We consider our logic to be a specialization of former work on trust management systems~\cite{JingWei09,Singh2011TrustAD,Aldini14} in the direction of distributed systems with defined information access. We focus in particular on how to specify protocols for sharing, forwarding, and trusting claims across networks. We do this by adopting and modifying the theory of belief and interaction from the past \cite{LIAU2003} and forming a decidable logic framework for specifying trust and interaction protocols.

\subsection{Contributions and Paper Outline}

Our first contribution is the formulation of a theory to reason about communications over distributed networks, enabling us to prove the existence of trust across chains of communications. We establish a set of axioms (Figure~\ref{fig:axioms}) for sharing and considering information, which have not been the focus of previous work, and discuss how to reason about indirect communication and trust in the logic. We also touch upon a calculus for building proofs in the logic.

The language for sharing and considering information allows us to state, what it means for one agent to trust another one with respect to the validity of a statement. Our calculus gives us some useful inference rules for validity in multi-agent systems (Figure~\ref{fig:validityproperties}).

We extend the trust with \emph{levels}, allowing us to model examples such as public key infrastructures (PKI) that may involve several steps of delegation. Our language will be rich enough to express these levels, and to derive certain properties of them (Theorem~\ref{lem:shared}) that turn to be rich enough to concisely describe the trust relations in networks of agents.

We start in Section~\ref{sec:modal} by discussing how we describe the relevant concepts with modalities and axioms. In Section \ref{sec:chains} we show different ways of formulating indirect communications. We discuss reasoning with trust in Section \ref{sec:trust}, and formulate a formalism for sharing information and trust in a network in Section \ref{sec:forward}. We give further examples in Sections~\ref{sub:exam} and~\ref{sec:threshold}.
We discuss two calculi in Section \ref{sec:calculi}, and a Kripke model in \ref{sec:model},
 and give final remarks in Section~\ref{sec:conclusions}.

\paragraph{Related papers:} Since the writing of this article, two papers have been published focusing on and expanding on particular parts of this article.
One paper~\cite{TAB} develops the decidability of a schema of intuitionistic modal logic, to which is partially covered here, and furthermore develops how to derive relevant consequences. Another paper~\cite{NordSec} covers how you can reason about trust thresholds.

\section{Modal Reasoning}\label{sec:modal}

We use a constructive modal logic to specify both the perspectives of different entities, as well as claims entities make to each other.
We consider the general concept of \emph{domain}, which subsumes specified entities, groups of entities, the public domain, and other entities identified by public keys. 
Let $\Ag$ be a finite set of entities, domains or agents.

Following the traditional literature on trust, we consider two modalities which we will give a specified meaning:
\begin{itemize}
	\item For each $a \in \Ag$, the modality $\B_a$ signifies what $a$ believes. In general the statement $\B_a A$ means that we can prove that given presumed assumptions of agent $a$, we can show that $a$ should accept $A$ as true. We can think of it as $a$ can verify $A$ given their own knowledge and expertise, and given what has been communicated to and shared with them.
	\item For each pair $a, b \in \Ag$, the modality $\I_{a \ot b}$ signifies interaction between $a$ and $b$. The statement $\I_{a \ot b} A$ expresses that through interaction between $a$ and $b$, $b$ has claimed and communicated to $a$ enough information (claims) to imply that statement $A$ is true.
\end{itemize}

There is a significant difference between what one believes and what one claims to be true. Firstly, in general one would not assume that $\I_{a \ot b} A \Rightarrow \B_b A$ as $b$ may be more liberal with the truth than what they themselves truly believe and act upon. On the other hand, $\B_b A \Rightarrow \I_{a \ot b}A$ is not necessarily true as $b$ may not want to communicate all that they know to $a$.

Entities are accountable for the logical consequences of their claims on top of the claims themselves. Similarly, we assume they can make logical inferences within their own model of belief. As such, we assume each modality $\M$ satisfies the axiom K and the ``necessity'' inference rule.

\begin{itemize}
	\item Axiom K states that $\M A \Rightarrow \M (A \Rightarrow B) \Rightarrow \M B$ for any formulas $A$ and $B$, signifying that if we learn that both $A$ and $A \Rightarrow B$ hold in belief or interaction expressed by $\M$, then the consequence $B$ should also hold there. This axiom allows us to reason about inference that can be made within $\M$ as an external observer.
	\item Necessity states that if formula $A$ is undeniably true (a logical tautology), then so is $\M A$. This means everyone should accept obviously true logical statements; those which can be proven without further assumptions. This is weaker than saying $A \Rightarrow \M A$, which states that if $A$ happens to be true (e.g. given as an optional assumption), then $\M A$ is true. The latter implies $\M$ knows all details of the current situation, whereas necessity only states $\M$ knows about facts which hold in all possible situations.
\end{itemize}

On top of the general axioms, we assume some additional ones which signify who has access to what information. We predominantly consider \emph{splitting} axioms of the form $\M X \Rightarrow \N \R X$.
Awareness axioms state that agents down the line can reflect on and consider domains they have knowledge about:
\begin{itemize}
	\item For any $A$, $\B_a A \Rightarrow \B_a\B_a A$.
	\item For any $A$, $\I_{a \ot b} A \Rightarrow \B_a\I_{a \ot b} A$.
\end{itemize}
So an agent is aware of their own internal viewpoint, as well as everything communicated to them.

Lastly we express the fact that claims made are intended as expressions of belief. For instance, if an agent $a$ says some property is satisfied, then it is intended to mean that $a$ claims \emph{they believe} the property is satisfied. One may claim the latter is a weaker statement, but at the very least the there exists an implication, which we express as a modal axiom.
\begin{itemize}
	\item For any $A$, $\I_{a\ot b} A \Rightarrow \I_{a\ot b}\B_b A$.
\end{itemize}

\subsection{Optional hierarchy of agents} 

We consider one more type of axiom we may add in order to increase the power of the logic.
These are associated to the definition of \emph{shared domains}, 
which combine several domains into one in order to evaluate consequences of their combined knowledge.
We define a preorder $\sqsubseteq$ on agents $a \sqsubseteq b$ denoting that $b$ inherits all knowledge of~$a$:
\begin{itemize}
	\item If $a \sqsubseteq b$ then $\B_a A \Rightarrow \B_b A$ for any $A$.
	\item If $a \sqsubseteq c$ and $b \sqsubseteq d$ then $\I_{a \ot b} A \Rightarrow \I_{c \ot d} A$ for any $A$.
\end{itemize}
Consequently, $\B_a A \Rightarrow \B_b \B_c A$ if $a \sqsubseteq b$ and $a \sqsubseteq c$,
and $\I_{a \ot c} A \Rightarrow \B_b\I_{a \ot c} A$ if $a \sqsubseteq b$.

As an example, we can take our domains to be subsets of entities, $\Ag = \mathcal{P}(\Ag')$, and let $S \sqsubseteq Z$ if $S$ is a subset of $Z$.
For $S \subseteq \Ag'$ we have that $\B_S$ is the combined belief of all entities $a \in S$.
We can for instance show that $\B_{\{a\}}A \Rightarrow \B_{\{b\}}(A \Rightarrow B) \Rightarrow \B_{\{a,b\}}(B)$.
Though we have that $\B_{\{a\}}A \vee \B_{\{b\}}A \Rightarrow \B_{\{a,b\}}(A)$, the converse does not hold as we cannot prove $\B_{\{a\}}A \Rightarrow \B_{\{b\}}(A \Rightarrow B) \Rightarrow (B_{\{a\}}(B) \vee \B_{\{b\}}(B))$.

We can also consider a special domain marking knowledge accessible to all.
This is in line with the traditional necessity modality $\Box$ which marks properties which are certainly true. We can incorporate this in our scheme by considering a special agent $\Omega \in \Ag$ for which $\Omega \sqsubseteq a$ for all other agents and assert the above axiom. This makes $\I_{\Omega \ot a}$ into a kind of \emph{broadcast} modality, as in what $a$ makes public.

One the other side of the spectrum, we could consider the theoretical gathering of all perspectives, which can be used to check if there are any inconsistent opinions. This would be given by some agent $\Upsilon$ such that $b \sqsubseteq \Upsilon$ for any other agent~$b$.
In terms of our $\Ag = \mathcal{P}(\Ag')$ example, $\Omega = \emptyset$ and $\Upsilon = \Ag'$.

Last but not least, we could consider the $\Box$ modality itself as well, to easily mark assumptions which we want all domains to adopt. We assume that $\Box A \Rightarrow \Box \Box A$, $\Box A \Rightarrow A$ and $\Box A \Rightarrow \M A$ for any other modality $\M$.

\begin{figure}
	\begin{center}
\begin{tabular}{| l | l |}
	\hline
	Axiom & Name \\
	\hline
	$\vdash \M A \Rightarrow \M (A \Rightarrow B) \Rightarrow \M B$ & Axiom K \\
	If $\vdash A$ then $\vdash \M A$ & Necessity \\
	\hline
	$\vdash \B_a A \Rightarrow \B_a \B_a A$ & Self awareness \\
	$\vdash \I_{a \ot b} A \Rightarrow \B_a \I_{a \ot b} A$ & Recipient awareness \\
	$\vdash \I_{a \ot b} A \Rightarrow \I_{a \ot b}\B_b A$ & Intent of claim \\
	\hline
	If $a \sqsubseteq b$ then $\vdash \B_a A \Rightarrow \B_bA$ & Belief inheritance \\
	If $a \sqsubseteq b$ then $\vdash \I_{a \ot c} A \Rightarrow \I_{b \ot c} A$ & Receiver inheritance \\
	If $a \sqsubseteq b$ then $\vdash \I_{c \ot a} A \Rightarrow \I_{c \ot b} A$ & Sender inheritance \\
	\hline
	$\vdash \Box A \Rightarrow \M \Box A$ & Public awareness \\
	$\vdash \Box A \Rightarrow A$ & Public verifiability\\
	\hline
\end{tabular}
\end{center}
\caption{Axiom system. $\M$ ranges over modalities, $a$, $b$ over agents, and $A$, $B$ over formulas.}
\label{fig:axioms}
\end{figure}

\subsection{Logical Foundation}

We define the set of formulas $\Form$ inductively as follows:
\[
A, B := \top \mid \bot \mid t \mid \M A \mid A \Rightarrow B \mid A \wedge B \mid A \vee B
\]
Note that by extension, we can formulate conjunction and disjunction over finite sets of formulas as well.

We consider a finite set of modalities $\Mes$ satisfying axioms K and necessity, and other axioms given in the following form:

\begin{definition}
	A modal unfolding axiom is given by a pair $(\M \Rrightarrow l)$ where $\M \in \Mes$ is a modality, and $l \in \Mes^*$ is a list of modalities. 
\end{definition}

For each $l \in \Mes^*$ and formula $A$ we define $l(A)$ inductively as $\varepsilon(A) = A$ and $(\M \cdot l)(A) = \M (l (A))$. The unfolding axiom $\M \Rrightarrow l$ denotes the axiom $\M X \Rightarrow l(X)$.

Given some set of unfolding axioms $\mathsf{Ax}$, we define our logic to be the intuitionistic multimodal logic, where each modality satisfies axiom K and necessity, and where furthermore the axioms from $\mathsf{Ax}$ are asserted. We denote this $\mathcal{L}_{\mathsf{Ax}}$. In the rest of the paper, we simply take the sets of assertions $\mathsf{Ax}$ as discussed before, summarised in Figure \ref{fig:axioms}. 

There are many ways to set up such a logic. In Figure \ref{Fig:basic} is a setup for for a sequent calculus for the logic, with proof steps. A \emph{sequent} is a pair $\Gamma \vdash A$ consisting of a set of formulas $\Gamma$ denoting the assumptions, and a formula $A$ denoting the consequent. The fractions in the figure, called \emph{judgements}, tell us how to construct new true sequents out of known true sequents. The calculus is can be improved to better suit applications, and is mostly included here for reference. Later, in Figure \ref{fig:Fitch}, we formulate a Fitch-style lambda-calculus which forms a fragment of dependent modal type theory \cite{MDTT}, and is more practical for optimizing proofs\footnote{We shall consider the Proofs=Programs Curry-Howard correspondence there. What we are mainly interested in in the formulation of that variant is to get the \emph{substitutivity} property, which allows us to simplify proofs.}.

\begin{figure}
	\[
		\frac{}{\Gamma , A \vdash A}
		\qquad 
		\frac{}{\Gamma \vdash \top}
		\qquad 
		\frac{\Gamma \vdash \bot}{\Gamma \vdash A}
	\qquad
		\frac{\Gamma \vdash A \qquad \Gamma \vdash A \Rightarrow B}{\Gamma \vdash B}
		\qquad 
		\frac{\Gamma , A \vdash B}{\Gamma \vdash A \Rightarrow B}
	\qquad
		\frac{\Gamma \vdash A \wedge B}{\Gamma \vdash A}
		\qquad 
		\frac{\Gamma \vdash A \wedge B}{\Gamma \vdash B}
	\]
	\[
		\frac{\Gamma \vdash A \qquad \Gamma \vdash B}{\Gamma \vdash A \wedge B}
	\qquad
		\frac{\Gamma \vdash A}{\Gamma \vdash A \vee B}
		\qquad 
		\frac{\Gamma \vdash B}{\Gamma \vdash A \vee B}
	\qquad
		\frac{\Gamma \vdash A \vee B \qquad \Gamma , A \vdash C \qquad \Gamma , B \vdash C}{\Gamma \vdash C}	
	\]
	\[
		\frac{\Gamma \vdash \M A_1 \dots \Gamma \vdash \M A_n \qquad A_1 , \dots , A_n \vdash B}{\Gamma \vdash \M B}
	\qquad
		\frac{\Gamma \vdash \M A \qquad \M \Rrightarrow l}{\Gamma \vdash l(A)}
	\]
	\caption{Intuitionistic Multi-Modal K Logic}
	\label{Fig:basic}
\end{figure}

The rest of this subsection is there to consider \emph{decidability} of the logic. That is, do we have an algorithm to determine whether or not a sequent $\Gamma \vdash A$ is provable. This is very useful in practice, as it allows us to check and verify the truth of properties given certain assumptions. This in turn allows us to hold entities accountable for their claims, as they should be aware of consequences of their claims and can be blamed if these uncover a lie.

\subsection{Decidability}
The main property is that the logic of Figure \ref{Fig:basic} given the modal axioms of Figure \ref{fig:axioms} given that $\sqsubseteq$ is a preorder and $\Ag$ is finite, forms a decidable logic. The rest of this subsection explores the nuances of what kind of axiomatic systems $\mathsf{Ax}$ gives rise to a decidable logic, and can be skipped. More details can be found in \cite{TAB}.

To get decidability, $\mathsf{Ax}$ needs to satisfy three additional properties.

\begin{itemize}
	\item $\mathsf{Ax}$ is \emph{reflexive} if $\M \Rrightarrow \M \in \mathsf{Ax}$ for each modality.
	\item $\mathsf{Ax}$ is \emph{transitive} if for any $\M \Rrightarrow l \cc \N \cc l' \in \mathsf{Ax}$ and $\N \Rrightarrow r \in \mathsf{Ax}$, we have $\M \Rrightarrow l \cc r \cc l' \in \mathsf{Ax}$.
	\item $\mathsf{Ax}$ is \emph{decomposable} if for any $\M \Rrightarrow \N \cc \N' \cc l$ there is an $\R \in \Mes$ such that $\M \Rrightarrow \N \cc \R \in \mathsf{Ax}$ and for any  $\M \Rrightarrow \N \cc \T \cc r \in \mathsf{Ax}$, we have $\R \Rrightarrow \T \cc r \in \mathsf{Ax}$.
\end{itemize}

The first two expose the categorical structure of the axioms, whereas the third property asserts that any ``unfolding'' of a modality into three or more modalities can be factored into a composition of axioms of the form $\M \Rrightarrow \N \cc \R$, as well as such decompositions are universal allowing for easier proof search.
We denote a choice of $\R$ as given in the third point by $\M \ominus \N$, meaning that if $\M \Rrightarrow \N \cc \N' \cc l$, then $\M \ominus \N$ exists, $\M \Rrightarrow \N \cc (\M\ominus\N)$, and $\M \ominus \N \Rrightarrow \N' \cc l$.

The reflexive and transitive closure of $\mathsf{Ax}$ is denoted as $\widehat{\mathsf{Ax}}$.

\begin{lemma}
	For any $\mathsf{Ax}$, the logics $\mathcal{L}_{\mathsf{Ax}}$ and $\mathcal{L}_{\widehat{\mathsf{Ax}}}$ are equivalent.
\end{lemma}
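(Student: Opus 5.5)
Since $\mathsf{Ax} \subseteq \widehat{\mathsf{Ax}}$, every derivation in $\mathcal{L}_{\mathsf{Ax}}$ is already a derivation in $\mathcal{L}_{\widehat{\mathsf{Ax}}}$, so one direction is immediate. For the converse we show that every axiom in $\widehat{\mathsf{Ax}}$ is derivable in $\mathcal{L}_{\mathsf{Ax}}$, i.e.\ that for each $\M \Rrightarrow l \in \widehat{\mathsf{Ax}}$ and each formula $X$ we have $\vdash_{\mathsf{Ax}} \M X \Rightarrow l(X)$. A derivation in $\mathcal{L}_{\widehat{\mathsf{Ax}}}$ can then be translated step by step. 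Each use of the unfolding rule of Figure~\ref{Fig:basic} with an axiom from $\widehat{\mathsf{Ax}}$, applied to $\Gamma \vdash \M A$, is replaced by modus ponens with the derived implication $\M A \Rightarrow l(A)$, which is provable in the empty context and hence, by weakening, in context $\Gamma$. All other rules are shared by the two systems.

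To derive the axioms of $\widehat{\mathsf{Ax}}$, take $\widehat{\mathsf{Ax}}$ to be the least set containing $\mathsf{Ax}$ and closed under the reflexivity and transitivity conditions, and induct on its generation.
\begin{itemize}
\item Base case, $\M \Rrightarrow l \in \mathsf{Ax}$: immediate from the unfolding rule.
\item Reflexivity, $\M \Rrightarrow \M$: here $\M X \Rightarrow \M X$ is an instance of the identity rule.
\item Transitivity: assume $\M X \Rightarrow (l\cc\N\cc l')(X)$ and $\N Y \Rightarrow r(Y)$ are derivable for all $X, Y$. We must derive $\M X \Rightarrow (l \cc r \cc l')(X)$.
\end{itemize}

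For the transitivity case, instantiate $Y := l'(X)$. This gives $\vdash \N(l'(X)) \Rightarrow r(l'(X))$, that is, $\vdash (\N\cc l')(X) \Rightarrow (r\cc l')(X)$. We then prove an auxiliary monotonicity lemma: if $\vdash B \Rightarrow C$ then $\vdash l(B) \Rightarrow l(C)$ for every list $l \in \Mes^*$. The proof is by induction on $l$, with the empty list trivial. For the step $\N'\cc l$, Necessity turns $\vdash l(B)\Rightarrow l(C)$ into $\vdash \N'(l(B)\Rightarrow l(C))$, and axiom K then gives $\N' l(B) \Rightarrow \N' l(C)$. Alternatively, the modal rule of Figure~\ref{Fig:basic} with $n=1$ applies directly to the hypothesis $l(B) \vdash l(C)$. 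Applying the lemma with prefix $l$ yields $\vdash (l\cc\N\cc l')(X) \Rightarrow (l\cc r\cc l')(X)$, and composing with the first hypothesis finishes the case.

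The only step with real content is this monotonicity lemma, since the inserted list $r$ may sit deep inside the modal prefix $l$. The point to get right is the \emph{order}: first instantiate the schematic axiom for $\N$ at the formula $l'(X)$, which is allowed because axioms are schemes valid for every formula, and only then push the implication under the prefix $l$. Everything else is routine.
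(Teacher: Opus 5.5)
Your proof is correct and follows the paper's argument: induction on the generation of the reflexive--transitive closure, with the transitivity case handled by instantiating the $\N$-axiom at $l'(X)$ and then pushing the resulting implication under the prefix $l$. Your monotonicity lemma (Necessity followed by axiom K, iterated along $l$) spells out what the paper compresses into ``applying necessity''. You also make explicit the trivial inclusion direction and the replacement of unfolding steps by modus ponens, which the paper leaves implicit.
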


\begin{proof}
	We prove this by induction on the reflexive transitive closure.
	If $\M \Rrightarrow l \in \widehat{\mathsf{Ax}}$ then either:
	\begin{itemize}
		\item $\M \Rrightarrow l \in \mathsf{Ax}$ in which case we are done.
		\item $l = \M$, in which case the axiom asserts that $\M A \Rightarrow \M A$ for any formula $A$, which is trivially true.
		\item It came from the composition of $\M \Rrightarrow l_1 \cc \N \cc l_2$ and $\N \Rrightarrow r$, upon which we can apply the induction hypothesis. So for any formula $A$, we have that $\M (A) \Rightarrow l_1 \N l_2 (A)$ and $\N(l_2 A) \Rightarrow r(l_2 A)$ are provable. Applying necessity on the latter, we get $l_1 \N l_2 A \Rightarrow l_1 r l_2 A$, and hence by intuitionistic reasoning ($A \Rightarrow B, B \Rightarrow C \vdash A \Rightarrow C$) we get that $\M A \Rightarrow l_1 r l_2 A$ is provable.
	\end{itemize}
\end{proof}

\begin{lemma}
	If $\textsf{Ax}$ is the system of axioms defined in Figure~\ref{fig:axioms}, then $\widehat{\textsf{Ax}}$ is decomposable.
\end{lemma}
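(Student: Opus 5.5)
The plan is to compute $\widehat{\mathsf{Ax}}$ explicitly as a concrete set $S$ of unfolding axioms. Then, for every $\M \Rrightarrow \N\cc\N'\cc l \in S$, I exhibit a witness $\M\ominus\N$ and check the two required conditions directly against this description.

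\textbf{The candidate $S$.} I take $S$ to consist of the following.
\begin{itemize}
\item $\B_a \Rrightarrow \B_{b_1}\cdots\B_{b_n}$ with $n\ge 1$ and $a\sqsubseteq b_i$ for all $i$.
\item $\I_{a\ot c} \Rrightarrow \B_{b_1}\cdots\B_{b_k}\,\I_{d\ot e}\,\B_{f_1}\cdots\B_{f_m}$ with $k,m\ge 0$, $a\sqsubseteq b_i$, $a\sqsubseteq d$, $c\sqsubseteq e$ and $c\sqsubseteq f_j$.
\item $\Box \Rrightarrow l$ for every list $l\in\Mes^*$.
\end{itemize}
The last item holds because $\Box\Rrightarrow \M\cc\Box$ iterated, followed by $\Box\Rrightarrow\varepsilon$, yields every list; $\Box \Rrightarrow \Box$ is included as the case $l=\Box$.

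\textbf{Showing $S=\widehat{\mathsf{Ax}}$.} For $S\subseteq\widehat{\mathsf{Ax}}$, I derive each member by composing axioms of Figure~\ref{fig:axioms}. For example, $\B_a\Rrightarrow\B_a\B_a\cdots\B_a$ comes from iterated self awareness, after which each $\B_a$ is replaced by $\B_{b_i}$ using belief inheritance. For the $\I$ case, I first use recipient awareness and then self awareness to produce the leading $\B_a$'s, and intent of claim followed by self awareness to produce the trailing $\B_c$'s. The indices are then raised by the inheritance axioms.

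For $\widehat{\mathsf{Ax}}\subseteq S$, I check that $S$ contains every axiom of Figure~\ref{fig:axioms}, that it is reflexive, and that it is closed under the transitive composition rule. Composition amounts to substituting, at some occurrence of a modality $\N$ in a list of $S$, a list that $\N$ unfolds to in $S$. Closure then needs two facts:
\begin{itemize}
\item substituting at a $\B_b$ position yields only $\B$'s whose indices lie above $b$, and hence above $a$ or $c$ by transitivity of $\sqsubseteq$;
\item substituting at the $\I_{d\ot e}$ position yields leading $\B$'s above $d$, hence above $a$, and trailing $\B$'s above $e$, hence above $c$.
\end{itemize}
Since $\Box$ never occurs in the lists of $\B$ or $\I$ axioms, the $\Box$ case is trivial. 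This equality is the main obstacle: it is routine but fiddly bookkeeping, and it rests on the fact that $\sqsubseteq$ is a preorder.

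\textbf{Choosing the decompositions.} For $\Box$, take $\Box\ominus\N=\Box$, since $\Box$ unfolds to every list. For $\B_a\Rrightarrow\B_b\cc\N'\cc l$, take $\B_a\ominus\B_b=\B_a$. Then $\B_a\Rrightarrow\B_b\B_a\in S$ because $a\sqsubseteq b$ and $a\sqsubseteq a$, and every tail $\T\cc r$ consists of $\B$'s above $a$, so $\B_a\Rrightarrow\T\cc r\in S$. For $\I_{a\ot c}$ there are two cases.
\begin{itemize}
\item If the first modality is $\B_b$, take $\I_{a\ot c}\ominus\B_b=\I_{a\ot c}$. Here $\I_{a\ot c}\Rrightarrow\B_b\,\I_{a\ot c}\in S$, and any tail is again of the $\I_{a\ot c}$-shape with the same constraints.
\item If the first modality is $\I_{d\ot e}$, take $\I_{a\ot c}\ominus\I_{d\ot e}=\B_c$. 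Then $\I_{a\ot c}\Rrightarrow\I_{d\ot e}\B_c\in S$ since $c\sqsubseteq c$, and every tail is a nonempty list $\B_{f_1}\cdots\B_{f_m}$ with $c\sqsubseteq f_j$, which lies in $S$ as a $\B_c$ axiom.
\end{itemize}
This exhausts the cases, so $\widehat{\mathsf{Ax}}$ is decomposable.
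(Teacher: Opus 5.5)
Your proof is correct and takes the same approach as the paper's: the same explicit description of $\widehat{\mathsf{Ax}}$, with the same witnesses $\B_a\ominus\B_b=\B_a$, $\I_{a\ot c}\ominus\B_b=\I_{a\ot c}$ and $\I_{a\ot c}\ominus\I_{d\ot e}=\B_c$. You go slightly further than the paper in two ways: you verify that description of the closure (the paper only asserts it), and you treat the $\Box$ axioms via $\Box\ominus\N=\Box$, which the paper's proof omits.
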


\begin{proof}
	For $\B_a$, note that $\B_a \Rrightarrow \N \cc \N' \cc l \in \widehat{\textsf{Ax}}$ means $\N \cc \N' \cc l = \B_{a_1} \cc \B_{a_2} \cc \dots \cc \B_{a_n}$ for some list of at least two agents such that $a \sqsubseteq a_i$ for all $i\in\{1,\ldots,n\}$ (without inheritance, all $a_i$ are $a$).
	We simply take $\B_a \ominus \B_{a_1} = \B_a$, for which $\B_a \Rrightarrow \B_{a_1} \cc \B_a$ and $\B_a \Rrightarrow \B_{a_2} \cc \dots \cc \B_{a_n}$.
	
	For $\I_{a \ot b}$, axiom $\I_{a \ot b} \Rrightarrow \N \cc \N' \cc l \in \widehat{\textsf{Ax}}$ means $\N \cc \N' \cc l = \B_{a_1} \cc \dots \cc \B_{a_n} \cc \I_{a' \ot b'} \cc \B_{b_1} \cc \dots \cc \B_{b_m}$ where $n + 1 + m \geq 2$, $a \sqsubseteq a_i$, 
	$a \sqsubseteq a'$, $b \sqsubseteq b'$ and $b \sqsubseteq b_j$.
	If $n = 0$ we simply take $\I_{a \ot b} \ominus \I_{a' \ot b'} = \B_b$, and if $n > 0$ we take $\I_{a \ot b} \ominus \B_{a_1} = \I_{a \ot b}$.
\end{proof}

\section{Chains of Communications}\label{sec:chains}

Modalities let us mark different domains of knowledge, and are as such useful for describing distributed networks of trust. In such networks, entities make claims, which may be communicated throughout the network, passing through multiple entities. At each location, it can then be determined what is known and what can be derived.

We consider a canonical example, which we shall describe in a multitude of ways. The scenario is that a certain claim goes through multiple entities before getting to the targeted recipient. How is this claim sent, and what are the preconditions for the recipient to trust this claim?

We consider four entities in a chain of communications:
\[
\xymatrix{
a \ar[r] & b \ar[r] & c \ar[r] & d 
}
\]
Here $a$ will function as the origin of some claim $\p$, and $d$ as the intended recipient.
However, suppose either $a$ cannot send the claim directly to $d$, or $d$ does not trust $a$ directly. Whichever is the case, $b$ and $c$ are used as intermediaries.

\begin{example}[A global view]
	The simplest model simply asserts that the intermediaries pass along the the claim as their own.
	This can be stated by $\I_{b \ot a}\p \Rightarrow \I_{c \ot b}\p$ and $\I_{c \ot b}\p \Rightarrow \I_{d \ot c}\p$. They say that if $b$ receives claim $\p$ from $a$, it will send claim $\p$ to $c$, and similarly for $c$.
	The assertions compose into $\I_{b \ot a}\p \Rightarrow \I_{d \ot c} \p$, hence if $a$ sends the claim it will end up at $d$.
\end{example}

The assertions in the above examples do require some trust by $b$ and $c$, since they need to assert the truth of $\p$. Secondly, note that we need access to all such assertions in order to make the final derivation. We shall address both these things in the next two examples.

\begin{example}[Trust and Responsiveness]\label{ex2}
	The statement $\I_{b \ot a}\p \Rightarrow \I_{c \ot b}\p$ requires $b$ to repeat a claim to another, which effectively means $b$ must state that $\p$ is true. Hence the assertion requires trust of $b$ in $a$. Moreover, the statement asserts a response by $b$, the sending of information to $c$.
	We can break it up in the following components:
	\begin{itemize}
		\item Trust: $\B_b(\I_{b \ot a} \p \Rightarrow \p)$, if $b$ receives the claim $\p$ from $a$, then it is believed. Note that with axiom $\I_{b \ot a} A \Rightarrow \B_b \I_{b \ot a} A$, we may derive $\I_{b \ot a} \p \Rightarrow \B_a \p$.
		\item Responsiveness: $\B_b(\I_{b \ot a} \p \wedge \p) \Rightarrow \I_{c \ot b} \p$, if $b$ receives the claim $\p$ and believes it, $b$ will send it to $c$. We can split this up further into two statements.
		\begin{itemize}
			\item $\B_b(\I_{b \ot a} \p \wedge \p \Rightarrow \I_{c \ot b} \p)$ where $b$ believes in their own commitment to sending the claim along, and
			\item $\B_b\I_{c \ot b} \p \Rightarrow \I_{c \ot b} \p$ saying we trust $b$ keeps their commitment\footnote{One could argue this should be an axiom, though note it requires $b$ to actively check and act on their commitments, hence it is reasonable to formulate it as a kind of trust in $b$ instead of a separate axiom.}.
		\end{itemize}
	\end{itemize}
\end{example}

Suppose we want to ensure that $d$ can derive that the message is passed along. The assertions defined prior as stated are not accessible to $d$, by which we mean we have no reason for thinking $d$ knows or believes in them.
So if $d$ does not have them as prior assumptions, the entities are required communicate these statements of trust and responsiveness to $d$ themselves.
For brevity, we write $\I_{a_1 \ot a_2 \ot \dots \ot a_n}$ for $\I_{a_1 \ot a_2}\I_{a_2 \ot a_3}\dots \I_{a_{n-1} \ot a_n}$.

\begin{example}[Communicating Trust]
	Instead of stating $\I_{b \ot a}\p \Rightarrow \I_{c \ot b}\p$, we let $b$ tell $c$ that they will do this. We state that as $\I_{c \ot b}(\I_{b \ot a}\p \Rightarrow \p)$, meaning $b$ tells $c$ that if they receive claim $\p$ from $a$, then $\p$ is true. Agent $b$ need not explicitly send claim $\p$ anymore, they can simply forward a received claim from $a$ and thereby implying the truth of $\p$. This consequence is derived by applying axiom K, getting $\I_{c \ot b}\I_{b \ot a}\p \Rightarrow \I_{c \ot b} \p$.
	
	With everything being forwarded to $d$, we get statements:
	\begin{itemize}
		\item $\I_{d \ot c \ot b}(\I_{b \ot a} \p \Rightarrow \p)$, it is forwarded that $b$ trusts $a$.
		\item $\I_{d \ot c}(\I_{c \ot b} \p \Rightarrow \p)$, it is communicated that $c$ trusts $b$.
		\item $\B_d(\I_{d \ot c} \p \Rightarrow \p)$, $d$ trusts $c$.
	\end{itemize}
	All composed together, we can derive $\B_d(\I_{d \ot c \ot b \ot a} \p \Rightarrow \p)$, $d$ trusts forwarded claims from $a$.
\end{example}

The communicated statements of the previous example can be split up into different parts as done in example \ref{ex2}, if more clarification is needed. There is still one disadvantage regarding the above approach: each link in the chain must trust the previous link regarding statement $\p$, even though that previous link may not be an expert on $\p$. This leaves the system open to a kind of injected claim attack: if any link starts claiming $\p$, everyone down the line will start to accept this claim, even though the claim did not originate from $a$.
Though this problem cannot be entirely prevented, as we need to assume at least some integrity from the links, there are three partial solutions we shall consider.

The first solution is to fine tune which claims are trusted. Instead of simply trusting $A$, we can only accept certain evidence of $A$. Even though a link may not be an expert on $A$, it could at least be trusted regarding other things, like communicated claims from other experts and verifying that such claims came from trusted individuals. This shall be the topic of Section~\ref{sec:trust}.
The second solution is to use multiple sources to double check claims, which we shall consider in Section \ref{sec:threshold} using \emph{threshold trust}.

The third solution has a different flavour, with none of the links in the communication chain needing to trust each other. Instead, they all have access to a single trusted source which can be consulted and used to double check if a claim is correct.
This source can for instance be some kind of internet server, or a certification authority.

\begin{example}[Trusted Authority]
	Let $v$ be a new entity everyone has access to.
	\[
		\xymatrix@R=5mm{
			& v \ar@{<->}[dl] \ar@{<->}[d] \ar@{<->}[dr] \ar@{<->}[drr] & & \\
			a \ar[r] & b \ar[r] & c \ar[r] & d 
		}
	\]
	The behaviour of $v$ can be specified as follows: it can send out claims (e.g. certifications) and will confirm to others that they indeed made this claim.

	This is given as follows: $\bigwedge_{x,y \in \Ag} \I_{v \ot x \ot y \ot v} \p \Rightarrow \B_v \p \Rightarrow \I_{x \ot v}\p$, meaning if someone notifies $v$ that they heard from someone else that $v$ has claimed something, then if this is indeed true, then $v$ confirms this claim.
	
	This can then be used by $b$ given two statements:
	\begin{enumerate}
		\item $\I_{b \ot a \ot v} \p \Rightarrow \I_{v \ot b \ot a \ot v} \p \vee \I_{b \ot v} \p$, which says that if $b$ receives from $a$ a claim made by $v$, it will check this with $v$ directly, unless it has already received this claim from $v$. Note that with the specification of $v$, this implies $\I_{b \ot a \ot v} \p \Rightarrow \B_v \p \Rightarrow \I_{b \ot v} \p$.
		\item $\I_{b \ot v} \p \Rightarrow \I_{c \ot b \ot v} \p$.
	\end{enumerate}
	Supposing $b$, $c$ and $d$ have the first property, and $a$, $b$ and $c$ have the second property, we can derive $\B_v \p \Rightarrow \I_{a \ot v} \p \Rightarrow \I_{d \ot v} \p$.
\end{example}

\section{Trust}\label{sec:trust}

For cooperation, it is useful for entities to declare their trust in someone before that entity has made the relevant claim. This way, others can be assured that the entity can be convinced if needed, and how this can be done.
We define trust in two stages. First we introduce some notation; for each list of modalities $l = \M_1,\dots,\M_n$ and formula $A$, \emph{validity} of $l$ concerning $\p$ is given by:
\[
	\widehat{(\M_1 \dots \M_n)}\p := \M_1 \dots \M_n \p \Rightarrow \p
\]
For instance, validity $\widehat{\I_\alpha}$ for some $\alpha = a_1 \ot a_2 \ot \dots \ot a_n$ is the assertion that if claim $\p$ is communicated through chain of communications $\alpha$, then $\p$ is true. Hence validity depends on four aspects: the sender, the receiver, the route the message took, and the content of the claim. 

The concept of \emph{trust} is simply a statement of validity put into some context.
Validity itself is the expression that ``we", meaning those considering the current set of assumptions, trust the claim made over some channel. We can also consider trust as either believed or claimed by a particular entity. E.g. $\B_b \valu{\alpha} \p$ means $b$ trusts claim $\p$ if communicated over $\alpha$, and $\I_{\beta} \valu{\alpha}$ is this trust being proclaimed over the chain of communications $\beta$. Usually, trust occurs in the recipient of the considered claim, e.g. in $\B_b \valu{b\ot\alpha} \p $ or $\I_{\gamma \ot b} \valu{b\ot \alpha}A$.

Given our axioms, \emph{cautious trust} $\T_{a,b} \p := \B_a(\I_{a \ot b}\p \Rightarrow \B_b \p \wedge \B_b \p \Rightarrow \p)$ as formulated by Liau \cite{LIAU2003} can be proven with the following validities:
\[
\B_a \valu{a \ot b} \B_b \p \wedge \B_a \valm{\B_b} \p  \Rightarrow \T_{a , b} \p
\]

$\valu\alpha$ satisfies two rules which helps reasoning about it:
\begin{itemize}
	\item $\p \Rightarrow \valu\alpha \p$, if a statement $\p$ is already true, then communications making this claim are trivially valid.
	\item $\valu\alpha A \wedge \valu{\alpha} B \Rightarrow \valu\alpha (A \wedge B)$, a combined statement's validity can be determined in parts. The converse does however not hold.
\end{itemize}

\begin{lemma}\label{lem:counter}
	For any modality $\M$, and formulas $A$ and $B$, $\valm{\M}A \wedge \valm{\M}(A \Rightarrow B) \Rightarrow \valm{\M}B$ is not necessarily provable.
\end{lemma}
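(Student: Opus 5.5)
We need a countermodel: the formula $(\M A \Rightarrow A) \wedge (\M(A\Rightarrow B) \Rightarrow (A \Rightarrow B)) \Rightarrow (\M B \Rightarrow B)$ should fail for some modality satisfying K, necessity and the axioms. Since the statement says ``not necessarily provable'', one model in which it fails suffices. By soundness, even a classical Kripke model of the multimodal logic (which is also an intuitionistic model with a trivial order) will do, provided it validates the axioms of Figure~\ref{fig:axioms}. The soundness of the Kripke semantics comes from Section~\ref{sec:model}, or can be checked directly for K and necessity, which hold in any relational model.

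The intuition is that $\M B$ can hold without either $\M A$ or $\M(A\Rightarrow B)$ holding. In that case both premises are vacuous while $\M B \Rightarrow B$ fails. The simplest instance takes $A$ to be $\bot$. Then $\valm{\M}\bot = \M\bot \Rightarrow \bot$, and $A \Rightarrow B$ is equivalent to $\top$, so $\valm{\M}(\bot \Rightarrow B)$ is provable outright by the first listed rule ($\p \Rightarrow \valu\alpha \p$). The claim therefore reduces to showing that $(\M\bot \Rightarrow \bot) \Rightarrow \M B \Rightarrow B$ is not provable for an atom $B = t$.

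The concrete model has two worlds $w$ and $v$. The accessibility relation for $\M$ is $w \to v$, and $t$ is false at $w$ and true at $v$. At $w$ we get that $\M\bot$ is false, since $v$ is accessible, so $\valm{\M}\bot$ holds at $w$. We also get that $\M t$ holds at $w$, while $t$ fails there, so $\valm{\M}t$ fails at $w$.

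The main obstacle is making sure this model validates the remaining axioms for the chosen modality, for example $\M = \I_{a\ot b}$ or $\B_a$. To handle this, I would let every other modality's relation (every $\B_c$ and every $\I_{c\ot d}$, including those for $\sqsubseteq$-related agents) also send $w$ to $v$, and let $v$ see itself under all modalities. Each relation is then the same function $f$ with $f(w)=v$ and $f(v)=v$, and $f\circ f = f$. Every splitting axiom $\M X \Rightarrow \N\R X$ then holds, because the composite relation equals the single one. Every inheritance axiom holds, because all the relations coincide. The $\Box$ modality can be interpreted with the identity relation, which is reflexive, and public awareness still holds because $\Box$ followed by $f$ is $f$. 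The model therefore satisfies every axiom of Figure~\ref{fig:axioms} yet refutes the instance at $w$, so by soundness the formula is not provable.
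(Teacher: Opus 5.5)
Your countermodel is essentially the paper's. It has two worlds with trivial order, and the root world's only $\M$-successor satisfies the atom standing for $B$. $A$ is false everywhere, so both premises hold vacuously while $\M B \Rightarrow B$ fails at the root. The paper uses a token with empty extension where you use $\bot$; yours has the small advantage that $\valm{\M}(\bot \Rightarrow t)$ is provable outright.

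The paper exhibits the frame for a single modality and never checks conformance to Figure~\ref{fig:axioms}. You try to validate all the axioms, and one of those checks is wrong. Public awareness $\Box A \Rightarrow \M\Box A$ is the unfolding axiom $\Box \Rrightarrow \M \cc \Box$, so conformance needs $R_\M ; R_\Box \subseteq R_\Box$. Your justification (``$\Box$ followed by $f$ is $f$'') concerns the wrong composite and the wrong containment. With $R_\Box$ the identity, the condition demands that $f$ be contained in the identity relation, which fails at $(w,v)$. Concretely, $w \models \Box\neg t$ because $t$ fails at $w$, but $w \not\models \M\Box\neg t$ because $v \models t$. So your model, as specified, does not satisfy every axiom of the figure.

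The repair is easy: interpret $\Box$ by the reflexive closure $R_\Box = \{(w,w),(w,v),(v,v)\}$ of $f$.
\begin{itemize}
\item It is reflexive, which gives public verifiability.
\item It is transitive, which gives public awareness with $\M = \Box$.
\item $f ; R_\Box = f \subseteq R_\Box$, which gives public awareness for every other $\M$, and also $\Box A \Rightarrow \M A$ from the text.
\end{itemize}
The refutation at $w$ is unaffected, since it never mentions $\Box$. Alternatively, drop $\Box$, which the paper treats as optional. With that fix your argument is complete and more careful than the paper's.

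Note also that the statement cannot hold for $\M = \Box$ itself, since $\valm{\Box}B$ is then an axiom. Choosing $\M = \B_a$ or $\I_{a \ot b}$, as you do, is therefore necessary.
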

\begin{proof}
	We show that there is a formula $A$ and a Kripke frame which does not satisfy this statement.
	Let $W$ have two worlds, $v$ and $w$, and let $\leq$ be the identity relation.
	Let $t$ and $r$ be two tokens such that $S_t = \emptyset$ and $S_r = \{w\}$.
	Lastly, let $R_\M$ relate $v$ with $w$, but not $v$ with $v$.
	
	Then neither $v$ nor $w$ model $\M t$, hence both model $\widehat{\M}t$.
	Neither $v$ nor $w$ model $t$, so both model $t \Rightarrow r$, and hence $\widehat{\M}(t \Rightarrow r)$.
	Since $v R_\M v$ does not hold, and $w \models r$, we have $v \models \M r$. However, since $v$ does not model $r$, $v$ does not model $\widehat{M}r$.
	So $v$ does not model $\valm{\M}t \wedge \valm{\M}(t \Rightarrow r) \Rightarrow \valm{\M}r$.
\end{proof}

\begin{lemma}\label{lem:comp}
\[
\valu{\alpha \ot b} \p \wedge \I_{\alpha \ot b} \valu{b \ot \gamma} \p \Rightarrow \valu{\alpha \ot b \ot \gamma} \p
\]
\end{lemma}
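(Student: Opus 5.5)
We unfold the definition of validity and reason intuitionistically inside the modality $\I_{\alpha \ot b}$ using axiom K. By definition, $\valu{\alpha \ot b}\p$ is the formula $\I_{\alpha \ot b}\p \Rightarrow \p$, while $\valu{b \ot \gamma}\p$ is $\I_{b \ot \gamma}\p \Rightarrow \p$, and $\valu{\alpha \ot b \ot \gamma}\p$ is $\I_{\alpha \ot b}\I_{b \ot \gamma}\p \Rightarrow \p$. Here we use that $\I_{\alpha \ot b \ot \gamma}$ is, by the abbreviation introduced in Section~\ref{sec:chains}, literally the composite list of modalities $\I_{\alpha \ot b}$ followed by $\I_{b \ot \gamma}$. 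We allow $\I_{\alpha\ot b}$ to stand for an arbitrary list $l$ of modalities. It is worth checking that nothing beyond K and necessity is used, so the statement holds for any list $l$.

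Assume the two hypotheses $\I_{\alpha \ot b}\p \Rightarrow \p$ and $\I_{\alpha \ot b}(\I_{b \ot \gamma}\p \Rightarrow \p)$, together with the antecedent $\I_{\alpha \ot b}\I_{b \ot \gamma}\p$; we must derive $\p$.

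The key step is to combine the second hypothesis with the antecedent under the modality. Axiom K for a single modality iterates to an arbitrary list $l$: $l(X) \Rightarrow l(X \Rightarrow Y) \Rightarrow l(Y)$. This follows by induction on the length of $l$, using necessity to push the one-step K inside the outer modalities, or directly via the multi-premise modal rule of Figure~\ref{Fig:basic} applied repeatedly. Taking $X = \I_{b\ot\gamma}\p$ and $Y = \p$ gives $\I_{\alpha \ot b}\p$, and the first hypothesis then yields $\p$. Discharging the antecedent by $\Rightarrow$-introduction gives $\valu{\alpha\ot b\ot\gamma}\p$, and discharging the hypotheses completes the proof.

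The only point needing care is the lifting of K to lists of modalities. It is routine, but it is the place where we must be careful not to use the invalid distribution of validity over implication ruled out by Lemma~\ref{lem:counter}. Our argument avoids it because the implication $\I_{b\ot\gamma}\p \Rightarrow \p$ is asserted inside $\I_{\alpha\ot b}$, not as a validity outside.
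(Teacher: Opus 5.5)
Your proposal is correct and follows the paper's proof. Both unfold the validities, apply axiom K under $\I_{\alpha\ot b}$ to turn $\I_{\alpha\ot b}(\I_{b\ot\gamma}\p\Rightarrow\p)$ into $\I_{\alpha\ot b\ot\gamma}\p \Rightarrow \I_{\alpha\ot b}\p$ via $\I_{\alpha\ot b}\I_{b\ot\gamma} = \I_{\alpha\ot b\ot\gamma}$, and then chain this with $\valu{\alpha\ot b}\p$. Your justification that K lifts to composite lists of modalities is a step the paper leaves implicit; it is essentially the argument in the appendix proof of Lemma~\ref{lem:Jaxiom}.
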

\begin{proof}
	$\I_{\alpha \ot b} \valu{b \ot \gamma} \p = \I_{\alpha \ot b}(\I_{b \ot \gamma} \p \Rightarrow \p)$ implies $\I_{\alpha \ot b}\I_{b\ot \gamma} \p \Rightarrow \I_{\alpha \ot b}\p$ by axiom K, where $\I_{\alpha \ot b}\I_{b \ot \gamma} = \I_{\alpha \ot b \ot \gamma}$.
	Hence combined with $\valu{\alpha \ot b} \p = (\I_{\alpha \ot b} \p \Rightarrow \p)$ we get $(\I_{\alpha \ot b \ot \gamma} \p \Rightarrow \p) = \valu{\alpha \ot b \ot \gamma} \p $.
\end{proof}
\noindent 
Here validity is determined in two stages:
\begin{itemize}
\item $b$ communicates over $\alpha$ that they would trust claim $\p$ if coming over channel $\gamma$.
\item Communication of $\p$ over $\alpha$ are trusted.
\item As such, communications over $\gamma$ through $b$ and then over $\alpha$ are trusted.
\end{itemize}

We gather derivable properties of validities studied in this section in Figure \ref{fig:validityproperties} for reference.

\begin{figure}
	\begin{center}
		\begin{tabular}{| l | l | }
			\hline
			Property & Name \\
			\hline
			$\vdash \widehat{\M}A \wedge \M A \Rightarrow A$ & Validity definition \\
			$\vdash A \Rightarrow \widehat{\M}A$ & Validity unit \\
			$\vdash \widehat{\M}A \wedge \widehat{\M}B \Rightarrow \widehat{\M}(A \wedge B)$ & Merging validities \\
			$\vdash \widehat{\M}\N A \wedge \widehat{\N} A \Rightarrow \widehat{\M\N} A$ & Composition rule 1\\
			$\vdash \M\widehat{\N} A \wedge \widehat{\M} A \Rightarrow \widehat{\M\N} A$ & Composition rule 2\\
			$\vdash \M\widehat{\N} A {\wedge} \widehat{\M}\widehat{\N} A {\wedge} \widehat{\M}\N A {\Rightarrow} \widehat{\M\N} A$ & Composition rule 3\\
			$\not\vdash \widehat{\M}A \wedge \widehat{\M}(A \Rightarrow B) \Rightarrow \widehat{\M}B$ & No axiom K \\
			\hline
		\end{tabular}
	\end{center}
	\caption{Validity properties (Derivable in the logic)}\label{fig:validityproperties}
\end{figure}

\subsection{Indirect trust}

Let us look at a basic chaining of trust.
We shall reason from the perspective of an agent $a$, though we could as easily have considered $a$ communicating these derivations to other agents instead.
The basic derivation done by applying axiom K to Lemma \ref{lem:comp} is as follows:
\[
\B_a \valu{a \ot b} \p  \wedge  \I_{a \ot b} \valu{b \ot c} \p \Rightarrow \B_a \valu{a \ot b \ot c} \p
\]
Here $a$ reasons that they trust $b$ regarding property $\p$. If $b$ then communicates they trust $c$ regarding property $\p$, $a$ can infer trust in $c$ if at least claims go through $b$. This is an effective way of chaining trust, which requires $a$ to trust $b$ regarding $\p$ without argument. Note however that we also have the following derivation:
\[
\B_a\valu{a \ot b} \p \wedge \I_{a \ot b} \p \implies \B_a \p
\]
In certain situations, we may want to fine-tune and reduce the necessary trust required of $a$.

Let us instead suppose $a$ accepts a certain proof from $b$, but does not want to simply accept the conclusion. The argument from $b$ comes in two parts: 
\begin{enumerate}
	\item I (meaning $b$) trust $c$ if they think $\p$ is true.
	\item $c$ has communicated to me that $\p$ is true.
\end{enumerate}
The obvious conclusion of these two statements is that $b$ believes $\p$ is true. Instead of simply accepting this conclusion, $a$ can choose to only accept the arguments themselves. They can state the following:
\[
\B_a \valu{a \ot b} \valu{b \ot c} \p  \wedge \B_a \valu{a \ot b} \I_{b \ot c} \p 
\]
stating that they trust $b$ regarding their trust in $c$, as well as trust $b$ to not falsely forward messages from $c$.
It should be noted that $a$ can derive $\B_a \valu{a \ot b} (\valu{b \ot c} \p  \wedge \I_{b \ot c} \p)$ but cannot derive $\B_a \valu{a \ot b} \p$ as seen in Lemma \ref{lem:counter}. The arguments are accepted, but not the conclusion.
Still, $b$'s declaration of trust allows $a$ to gain trust as well:
\[
\B_a \valu{a \ot b} (\valu{b \ot c} \p  \wedge \I_{b \ot c} \p) \wedge \I_{a \ot b}\valu{b \ot c} \Rightarrow \B_a\valu{a \ot b \ot c} \p
\]
The argument goes as follows: $b$ has already communicated its trust to $a$. So if $b$ then later mentions that $c$ claimed $\p$, the two claims by $b$ can be combined into a single package $\I_{a \ot b}(\valu{b \ot c} \p \wedge \I_{b \ot c}A)$ which is trusted by $a$, and hence $\B_a (\valu{b \ot c} \p \wedge \I_{b \ot c}A)$. Having accepted the arguments then allows $a$ to make the appropriate derivation, that $\B_a A$.
However, if in this case $b$ had communicated $\p$ directly, $a$ would not have accepted it: $a$ trusts $b$ regarding identifying experts on $\p$, but $a$ does not trust $b$ as an expert on $\p$ itself.

In general, suppose $b$ has derived a statement $B$ using assumptions $A_1 , \dots , A_n$. People like $a$ can trust $b$'s conclusion directly, as stated by $\B_a \valu{a \ot b} B$. Or $a$ can instead require that $b$ provide their arguments, and state $\B_a \valu{a \ot b}(A_1 \wedge \dots \wedge A_n)$ or $\B_a (\valu{a \ot b}A_1 \wedge \dots \wedge  \valu{a \ot b}A_n)$. In the latter situation, $b$ can only convince $a$ by claiming all the arguments. There are still two things to be considered.

First note that the proof of $B$ via $A_1 , \dots , A_n$ is never communicated. The proof is implied, and has to be done by $a$. Indeed, the burden of proof is naturally on the one who wants to trust, not on the entity providing the facts. One could imagine $a$ asking $b$ whether $B$ is true, and $b$ communicating $A_1 \wedge \dots \wedge A_n$ instead of $B$ itself. Regardless of whether $a$ trusts $b$ regarding $B$, or regarding $A_1 \wedge \dots \wedge A_n$, $a$ will be able to convince themselves of $B$.
The fact that $B$ is the conclusion is implied by the claims.
Note that with decidability, $a$ will be able to verify that the proof exists.

Secondly, it may not make sense for $a$ to trust $b$ regarding just a specific entity $c$. In most instances, the particular $c$ in question may not be predetermined. We can state that $a$ trusts $b$ regarding any such entity, as can be formulated as:
\[
\B_a \bigwedge_{c \in \Ag} \valu{a \ot b} (\valu{b \ot c} \p \wedge \I_{b \ot c} \p)
\] 
We call this \emph{higher-order trust}. We do not trust $b$ directly regarding $\p$, but do trust $b$ to identify experts of $\p$.
These orders of trust are an important tool for describing trust infrastructures.

\section{Forwarding Networks}\label{sec:forward}

We use our logic to describe trust in networks of entities. These can be used to describe public key infrastructures, as well as other multiparty networks of authentication.
The core idea is to establish what the role of an entity in the network is, and what type of trust is required for productive cooperation with this entity. 
Agents may establish trust directly, or use intermediaries like certification authorities to establish trust.

We start by describing the language on a high level.
We write $\mtrust{n}{a}{b}{\tP}$ to say that $a$ trusts $b$ regarding predicate $\tP : \Ag \to \Form$, to a certain \emph{order} $n$. Here, $\tP$ is some predicate on agents, describing some fundamental claim one wants to convey to others, like whether an agent owns a certain public key, or whether an agent has a certain certificate or diploma\footnote{$\tP$ need not depend on $a$, and can be some other primitive statement.}.
Most likely, we take as our set of tokens $\Tok$ some statements that can refer to agents, e.g. with $\texttt{hasKey}(a) \in \Tok$ for each $a \in \Ag$, we can use $\texttt{hasKey}$ as a predicate.
The order $n$ is a natural number saying whether there is direct trust between two agents, or wether there is trust between agents as intermediaries in bigger networks.
\begin{itemize}
	\item $\mtrust{0}{a}{b}{\tP}$ means that if $a$ receives a claim $\tP(b)$ from $b$, then $a$ will accept this claim $\tP(b)$ as true.
	\item $\mtrust{n+1}{a}{b}{\tP}$ means that if $a$ receives a claim from $b$ regarding the $n$-th order trust of $b$ in $\tP$, then $a$ believes this trust is warranted.
\end{itemize}

We can look at some examples of different orders of trust, in the realm of public key infrastructures.
\begin{itemize}
	\item Order 0 trust: This is the goal. Certification authorities record and communicate their order 0 trust regarding key ownership, and others can inherit this order 0 trust.
	\item Order 1 trust: This is the order of trust one has in a certification authority. With order 1 trust in the CA, one can inherit their order 0 trust.
	\item Order 2 trust: This is the trust one has in authorities which validate CAs, like domain trusted lists. With order 2 trust in a domain trusted list, one can inherit order 1 trust in the CA which is listed.
	\item Order 3 trust: This is trust in entities which claim to have order 2 trust. For instance, trust in a government backing some domain trusted list. 
\end{itemize}


In order to establish trust across networks, claims will need to be forwarded and checked by intermediary agents.
We have to specify the allowed paths of communication.
\begin{definition}
	A \emph{forwarding network} is a set of non-empty, non-repeating lists of agents $S \subseteq \Ag^*$, such that:
\begin{enumerate}
	\item If $(a , \beta , c) \in S$ then $(\beta , c) \in S$ and $(a,\beta) \in S$ (in other words, $S$ is closed under taking sublists).
	\item If $(\alpha , a , \beta , b , \gamma) \in S$ and $(a , \beta' , b) \in S$ then $(\alpha , a , \beta' , b , \gamma) \in S$.
\end{enumerate}
\end{definition}
If $a , \alpha , b \in S$, we say that $\alpha$ is a defined path from $b$ to $a$. Any segment of the path can be replaced by an alternative route, if specified by $S$. Note that $\alpha$ can be empty, in which case there is a direct path.

We write $\N_S(a,b,c)$ if there is a path $a, \alpha , b , \beta , c \in S$.
\begin{lemma}\label{lem:Nfac}
	Given a forwarding network $S \subseteq \Ag^*$, then $\N_S$ has the following properties:
\begin{itemize}
	\item If $\N_S(a,b,d) \& \N_S(b,c,d)$ then $\N_S(a,b,c) \& \N_S(a,c,d)$.
	\item If $\N_S(a,b,c) \& \N_S(a,c,d)$ then $\N_S(a,b,d) \& \N_S(b,c,d)$.
\end{itemize}
\end{lemma}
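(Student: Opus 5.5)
Both items follow directly from the two closure properties of a forwarding network: sublist closure (property 1) and segment replacement (property 2). The plan is to unfold each hypothesis into an explicit witnessing path, splice the paths together with property 2, and read off the required paths with property 1.

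For the first item, the hypothesis $\N_S(a,b,d)$ gives a path $(a,\alpha,b,\beta,d)\in S$, and $\N_S(b,c,d)$ gives $(b,\gamma,c,\delta,d)\in S$. Sublist closure applied to the latter (repeatedly removing the last element) shows $(b,\gamma,c,\delta,d)$ is itself a valid replacement segment from $b$ to $d$. Property 2, applied to $(a,\alpha,b,\beta,d)$ with the segment $b,\beta,d$ replaced by $b,\gamma,c,\delta,d$, yields $(a,\alpha,b,\gamma,c,\delta,d)\in S$. Repeated use of sublist closure then gives $(a,\alpha,b,\gamma,c)\in S$, which witnesses $\N_S(a,b,c)$. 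The full path $(a,\alpha,b,\gamma,c,\delta,d)$ witnesses $\N_S(a,c,d)$, taking $\alpha,b,\gamma$ as the part before $c$.

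The second item is symmetric. From $\N_S(a,b,c)$ we have $(a,\alpha,b,\beta,c)\in S$, and from $\N_S(a,c,d)$ we have $(a,\gamma,c,\delta,d)\in S$. Sublist closure gives the segment $(a,\alpha,b,\beta,c)$ from $a$ to $c$. Substituting it for the segment $a,\gamma,c$ via property 2 (with empty prefix) gives $(a,\alpha,b,\beta,c,\delta,d)\in S$. This path witnesses $\N_S(a,b,d)$. Dropping the prefix $a,\alpha$ by sublist closure gives $(b,\beta,c,\delta,d)\in S$, which witnesses $\N_S(b,c,d)$.

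The main obstacle is checking that the spliced list is still \emph{non-repeating}, as the definition requires of every element of $S$. Property 2 is stated unconditionally, so its conclusion is an element of $S$ and therefore non-repeating by definition. Hence no separate argument should be needed. I would nonetheless remark on this explicitly, since a concatenation of two arbitrary paths could in principle repeat agents, and the guarantee rests entirely on property 2 being an axiom of $S$.
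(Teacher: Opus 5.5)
Your proof is correct and takes the same route as the paper: unfold the witnessing paths, splice them with property 2, and read off the remaining witnesses with property 1. Your appeals to sublist closure before splicing are superfluous, since those lists are already in $S$ by hypothesis. Your explicit truncation to $(a,\alpha,b,\gamma,c)\in S$ for $\N_S(a,b,c)$ is also more accurate than the paper's written proof, which at that step concludes $\N_S(b,c,d)$ (already a hypothesis) instead of the required $\N_S(a,b,c)$.
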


\subsection{Forwarding Modality}

It is assumed that messages will be passed along across every specified path. This is independent of whether the message itself is trusted. As we have seen before, we can establish commitments for forwarding messages which do not require additional trust from the network members, except for trusting they will pass along the message. If you forward a message, you are not necessarily claiming that the content is true.
We define the collection of all forwardings with a \emph{composite modality}. For $a , b \in \Ag$, let \[
	\J_{a \ot b}A := \bigwedge\{\I_{a \ot \gamma \ot b}A \mid \gamma \in \Ag^* \text{ such that } (a,\gamma,b) \in S\}
\]
Note that $\J_{a \ot a}A = \top$ by definition.
The composite modality behaves like a modality itself:
\begin{lemma}\label{lem:Jaxiom}
The following properties hold:
\begin{itemize}
	\item $\J_{a \ot b}$ satisfies axiom K and necessity.
	\item $\J_{a \ot b}A \implies \B_a\J_{a \ot b}A$.
	\item $\J_{a \ot b}A \implies \J_{a \ot b}\B_bA$.
\end{itemize}
\end{lemma}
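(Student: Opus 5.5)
The plan is to reduce everything to properties of the individual chain modalities $\I_{a \ot \gamma \ot b} = \I_{a \ot c_1}\I_{c_1 \ot c_2}\cdots\I_{c_k \ot b}$ (where $\gamma = c_1,\dots,c_k$). Each such composite is a finite composition of modalities satisfying axiom K and necessity. We then use that a finite conjunction of such composites again behaves well. Since $\Ag$ is finite and paths in $S$ are non-repeating, the set $\{\gamma \mid (a,\gamma,b)\in S\}$ is finite, so $\J_{a \ot b}$ is a genuine finite conjunction. When the set is empty, it is $\top$, and each claim below holds trivially because $\top$ is provable and $\B_a\top$ follows by necessity.

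For the first item, I would first show by induction on the length of the list $l$ that any composite $l(-)$ satisfies K and necessity. Necessity is iterated: from $\vdash A$ apply necessity to get $\vdash \M_n A$, then again to get $\vdash \M_{n-1}\M_n A$, and so on. For K, the one-modality case is the axiom, and for $\M \cdot l$ we apply necessity of $\M$ to the induction hypothesis $l(A)\Rightarrow l(A\Rightarrow B)\Rightarrow l(B)$ and then use K for $\M$. Next, a conjunction $\bigwedge_i \M_i$ of K-modalities satisfies K, since from $\bigwedge_i \M_i A$ and $\bigwedge_i\M_i(A\Rightarrow B)$ we can project componentwise. It also satisfies necessity, since from $\vdash A$ we get each $\vdash \M_i A$ and conjoin them.

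For the second item, take any path $\gamma = c_1,\dots,c_k$ with $(a,\gamma,b)\in S$. From $\J_{a\ot b}A$ project to $\I_{a \ot c_1}(X)$ where $X = \I_{c_1\ot\cdots\ot b}A$, or $X = A$ if $k=0$. Recipient awareness gives $\B_a\I_{a\ot c_1}X$, that is, $\B_a\I_{a\ot\gamma\ot b}A$. This holds for each conjunct, and $\B_a$ commutes with finite conjunctions in the direction $\bigwedge \B_a C_i \Rightarrow \B_a\bigwedge C_i$ by K and necessity. So we obtain $\B_a\J_{a\ot b}A$.

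For the third item, for each path we need $\I_{a\ot c_1}\cdots\I_{c_k\ot b}A \Rightarrow \I_{a\ot c_1}\cdots\I_{c_k\ot b}\B_b A$. The intent-of-claim axiom gives $\I_{c_k\ot b}A\Rightarrow\I_{c_k\ot b}\B_bA$. Then necessity of the outer modalities, pushed through via the K-derived monotonicity (if $\vdash C\Rightarrow D$ then $\vdash \M C\Rightarrow \M D$), lifts this to the full chain. Conjoining over all paths yields $\J_{a\ot b}\B_bA$.

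The only mildly delicate step is the bookkeeping in the second and third items. There we must treat the chain correctly at its correct end: the first modality for recipient awareness and the last for intent. We must also handle the empty-path case $\gamma=\varepsilon$, where the chain is just $\I_{a\ot b}$. Everything else is routine monotonicity.
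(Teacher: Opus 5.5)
Your proof is correct and follows the paper's argument essentially step for step. You get K and necessity for composite chains by induction and lift them componentwise to the finite conjunction, then use recipient awareness on the first link and intent-of-claim on the last link, lifted by monotonicity. Your explicit treatment of the empty conjunction ($\J_{a \ot b}A = \top$) and of collecting the per-path $\B_a$ statements under a single $\B_a$ fills in small details the paper leaves implicit.
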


Last but certainly not least, we have the additional property which allows us to factor these modalities further.

\begin{lemma}
	If $\N_S(a, b, c)$ then $\vdash \J_{a \ot c}A \implies \J_{a \ot b}\J_{b \ot c}A$.
\end{lemma}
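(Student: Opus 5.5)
The plan is to unfold both sides into conjunctions of plain $\I$-chains and match conjuncts using the closure properties of $S$. By definition,
\[
\J_{a \ot b}\J_{b \ot c}A = \bigwedge\{\I_{a \ot \gamma \ot b}\textstyle\bigwedge\{\I_{b \ot \delta \ot c}A \mid (b,\delta,c)\in S\} \mid (a,\gamma,b) \in S\}.
\]
First, since each $\I_{x\ot y}$ satisfies axiom K and necessity, every composite $\I_{\alpha}$ commutes with finite conjunctions: $\I_\alpha(\bigwedge_i B_i) \Leftrightarrow \bigwedge_i \I_\alpha B_i$. The same holds for $\J$ by Lemma~\ref{lem:Jaxiom}. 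Hence it suffices to prove, for every $\gamma$ with $(a,\gamma,b)\in S$ and every $\delta$ with $(b,\delta,c)\in S$, that $\J_{a\ot c}A \Rightarrow \I_{a\ot\gamma\ot b}\I_{b\ot\delta\ot c}A$. By the definition of the chain notation, the right-hand side is $\I_{a\ot\gamma\ot b\ot\delta\ot c}A$. It is therefore enough to show that $(a,\gamma,b,\delta,c) \in S$; the implication then holds because this formula is a conjunct of $\J_{a\ot c}A$, and we finish by $\wedge$-elimination.

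The combinatorial step is to derive $(a,\gamma,b,\delta,c)\in S$ from $\N_S(a,b,c)$, $(a,\gamma,b)\in S$ and $(b,\delta,c)\in S$. From $\N_S(a,b,c)$ we get a path $(a,\alpha,b,\beta,c)\in S$. Applying property 2 of forwarding networks to the segment $(a,\alpha,b)$ with the alternative route $(a,\gamma,b)\in S$ gives $(a,\gamma,b,\beta,c)\in S$. Applying property 2 again to the segment $(b,\beta,c)$ with $(b,\delta,c)\in S$ gives $(a,\gamma,b,\delta,c)\in S$.

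The main obstacle is a side condition: lists in $S$ are non-repeating, and one might worry that property 2 produces a list with repetitions. I would treat this as part of the hypothesis. Property 2 asserts membership in $S$, so the resulting list is in $S$ and is therefore non-repeating by definition. No extra argument is needed beyond invoking the definition.

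There are also degenerate cases. If $a=b$ or $b=c$, the relevant $\J$ is $\top$, and the argument must still go through. With $b=c$, the inner $\J_{b\ot b}A=\top$, so the right-hand side is $\J_{a\ot b}\top$, which is provable by necessity. The case $a=b$ is similar. Equally, empty conjunctions on the right are just $\top$, so these cases are trivial.
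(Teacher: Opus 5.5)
Your proposal is correct and follows the paper's proof essentially step for step. You use axiom K and the finiteness of the conjunctions to reduce the goal to $\I_{a \ot \gamma \ot b}\I_{b \ot \delta \ot c}A$ for each $(a,\gamma,b),(b,\delta,c)\in S$, apply property 2 of forwarding networks twice to the witness path of $\N_S(a,b,c)$ to obtain $(a,\gamma,b,\delta,c)\in S$, and then pick out the corresponding conjunct of $\J_{a \ot c}A$; your remarks on non-repetition and the degenerate cases are harmless extras the paper leaves implicit.
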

\begin{proof}
	Assume $\J_{a \ot c}A$, hence for any $a , \alpha , c \in S$ we have $\I_{a \ot \alpha \ot c}A$.
	To prove $\J_{a \ot b}\J_{b \ot c}A$ we prove $\I_{a \ot \beta \ot b}\J_{b \ot c}A$ for each $(a,\beta,b) \in S$.
	$\I_{a \ot \beta \ot b}\J_{b \ot c}A$ in turn can be shown by proving $\I_{a \ot \beta \ot b}\I_{b \ot \gamma \ot c}A$ for each $(b, \gamma, c) \in S$ and applying axiom K, using that there are only finitely many possible $\gamma$-s.
	
	So suppose $(a,\beta,b) \in S$ and $(b,\gamma,c) \in S$, and suppose $N_S(a,b,c)$, meaning $(a,\beta',b,\gamma',c) \in S$ for some $\beta'$ and $\gamma'$.
	By applying property 2 of forwarding networks twice, $(a,\beta,b,\gamma,c) \in S$. 
	Since $\J_{a \ot c}A$ implies any path from $c$ to $a$, it implies $\I_{a \ot \beta \ot b \ot \gamma \ot c}A$ which by definition is $\I_{a \ot \beta \ot b}\I_{b \ot \gamma \ot c}A$, what we need.
\end{proof}

Basically, since $\J_{a \ot c}$ attempts all paths of communication from $c$ to $a$, these include all paths through agent $b$. 
We again use $\valw{a \ot b}$ to express validity of such communications:
\[
	\valw{a \ot b}A := \J_{a \ot b}A \Rightarrow A
\]
Note that this considers all possible paths, so such validity can be obtained if validity exists across any one path.
\begin{itemize}
	\item If $(a, \alpha , b) \in S$, then $\valu{a \ot \alpha \ot b}A \implies \valw{a \ot b}A$.
\end{itemize}

Lastly note that if no path from $b$ to $a$ exists, then $\valw{a \ot b}A \equiv (\top \Rightarrow A) \equiv A$, hence trust across non existing paths only happens when the claim in question is true.

\subsection{Orders of trust}

Direct trust from $a$ in $b$ can be defined as $\B_a\widehat{\I_{a \ot b}}\tF(b)$, meaning $a$ trusts $b$ regarding claim $\tF(b)$.
Agent $b$ is effectively claiming that predicate $\tF$ holds for them, for instance saying: ``I own public key $k$'', or ``I trust $c$''.

We consider the following two properties fundamental to the development. If $\N_S(a,b,c)$ then:
\begin{itemize}
	\item $\J_{a \ot b}\valw{b \ot c}A \wedge \valw{a \ot b} A \implies \valw{a \ot c}A$
	\item $\J_{a \ot b}\valw{b \ot c}A \wedge \valw{a \ot b}\valw{b \ot c}A \wedge \valw{a \ot b}\J_{b \ot c}A \implies \valw{a \ot c}A$
\end{itemize}
Note that the proof of the latter statement first deduces $\valw{a \ot b}A$ from the pieces of evidence $\valw{a \ot b}\valw{b \ot c}A$ and $\valw{a \ot b}\J_{b \ot c}A$, and then applies the former statement.


In the above statements, $A$  can either be a basic statement $\tF(c)$, or itself a statement necessary for composing trust further.
As such, we need sequences of communication and validity claims, whose length depends on the order of trust we want to express. Using $\tF$ as a basis, we define a set of higher-order statements $\textbf{C}^n_a(\tF)$ inductively as follows:~
\[
\textbf{C}^0_a(\tF) := \{\tF(a)\}
\]
\[
\textbf{C}^{n+1}_a(\tF) := \{\valw{a \ot b}A , \m{J}_{a \ot b}A \mid b \in \Ag, A \in \textbf{C}^{n}_b(\tF)\}
\]

Using the fact that each set is finite, we define higher order trust as a formula using conjunctions.

\begin{definition}\label{def:seman}
	We define $n$-th order validity of $a$ in $b$ as: 
	\[
	\nval{n}{a}{b}{\tF} := \bigwedge \{\valw{a \ot b}A \mid A \in \textbf{C}^n_b(\tF)\}
	\]
	We define $n$-th order trust of $a$ in $b$ as: 
	\[
	\ntrust{n}{a}{b}{\tF} := \B_a\nval{n}{a}{b}{\tF}
	\]
\end{definition}

\noindent
For example, $\ntrust{0}{a}{b}{\tF} \equiv  \B_a\valw{a \ot b}\tF(b)$ and  $\ntrust{1}{a}{b}{\tF} \equiv  \bigwedge_c(\B_a\valw{a \ot b}\valw{b \ot c}\tF(c) {\wedge} \B_a\valw{a \ot b}\m{J}_{b \ot c}\tF(c))$.

Let us establish some needed lemmas provable in the logic.

\begin{lemma}\label{lem:exval}
	If $\N_S(a,b,c)$ then $\vdash \nval{n+1}{a}{b}{\tF} \Rightarrow \valw{a \ot b}(\nval{n}{b}{c}{\tF})$.
\end{lemma}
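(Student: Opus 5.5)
The plan is to unfold both sides using Definition~\ref{def:seman} and the inductive definition of $\textbf{C}^{n+1}_b(\tF)$, and then close the gap with the ``merging validities'' property of Figure~\ref{fig:validityproperties}, transferred to the composite modality $\J_{a \ot b}$.

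First, I unfold the antecedent. By definition,
\[
\nval{n+1}{a}{b}{\tF} = \bigwedge \{\valw{a \ot b}A \mid A \in \textbf{C}^{n+1}_b(\tF)\}.
\]
By the definition of $\textbf{C}^{n+1}_b(\tF)$, for every $A' \in \textbf{C}^n_c(\tF)$ the formula $\valw{b \ot c}A'$ is an element of $\textbf{C}^{n+1}_b(\tF)$; here we use only that $c \in \Ag$. So conjunction elimination gives, for each such $A'$, the conjunct $\valw{a \ot b}\valw{b \ot c}A'$. The other members of $\textbf{C}^{n+1}_b(\tF)$ (the $\J_{b \ot c}A'$ terms, and the terms for other agents) are simply discarded.

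Second, I unfold the consequent. We have $\nval{n}{b}{c}{\tF} = \bigwedge_{A' \in \textbf{C}^n_c(\tF)} \valw{b \ot c}A'$, which is a finite conjunction because each $\textbf{C}^n_c(\tF)$ is finite. The goal is therefore $\valw{a \ot b}\bigl(\bigwedge_{A'} \valw{b \ot c}A'\bigr)$. This follows from the conjuncts obtained in the first step by repeated use of merging, $\valw{a \ot b}X \wedge \valw{a \ot b}Y \Rightarrow \valw{a \ot b}(X \wedge Y)$. The base case is immediate because $\textbf{C}^n_c(\tF)$ is nonempty: an easy induction shows each $\textbf{C}^m_x(\tF)$ is nonempty. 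Even if it were empty, the consequent would be $\valw{a \ot b}\top$, which holds by the validity unit.

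The only point needing care is that merging was stated for a genuine modality $\M$, whereas here it is applied to the composite $\J_{a \ot b}$. I would re-derive it directly. Assume $\J_{a \ot b}X \Rightarrow X$, $\J_{a \ot b}Y \Rightarrow Y$, and $\J_{a \ot b}(X \wedge Y)$. By axiom K and necessity for $\J_{a \ot b}$ (Lemma~\ref{lem:Jaxiom}) we obtain $\J_{a \ot b}X$ and $\J_{a \ot b}Y$, hence $X$ and $Y$, hence $X \wedge Y$. This argument is routine, so I do not expect a real obstacle anywhere.

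I also note that the hypothesis $\N_S(a,b,c)$ seems not to be used in this argument. If it becomes necessary, it would only be to match the intended reading that $c$ is reachable through $b$. If there is no path, $\J_{a \ot b}$ is $\top$ and the statement degenerates harmlessly.
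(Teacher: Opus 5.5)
Your proposal is correct and follows the paper's proof. Like the paper, you pick out the conjuncts $\valw{a \ot b}\valw{b \ot c}A$ for $A \in \textbf{C}^n_c(\tF)$ and combine them with the merging property for $\J_{a \ot b}$ (justified by $\J_{a \ot b}(X \wedge Y) \Rightarrow \J_{a \ot b}X \wedge \J_{a \ot b}Y$), extended to finite conjunctions; you are also right that the hypothesis $\N_S(a,b,c)$ is never used there.
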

\begin{proof}
	Note that $\widehat{\J_{a \ot b}}A \wedge \widehat{\J_{a \ot b}}B \Rightarrow \widehat{\J_{a \ot b}}(A \wedge B)$ since $\J_{a \ot b}(A \wedge B) \Rightarrow \J_{a \ot b}A \wedge \J_{a \ot b}B$. These implications can be generalised to conjunctions over finite sets.
	As a consequence, 
	
	\noindent
	$\bigwedge_{A \in \textbf{C}^n_c(\tF)}\valw{a \ot b}\valw{b \ot c}A$ 
	$=$ $\bigwedge_{A \in \textbf{C}^n_c(\tF)}(\m{J}_{a \ot b}\valw{b \ot c}A \Rightarrow \valw{b \ot c}A)$
	
	\noindent
	$\implies$ $(\m{J}_{a \ot b}\nval{n}{b}{c}{\tF} {\Rightarrow} \nval{n}{b}{c}{\tF})$

	$=$ $\valw{a \ot b}(\nval{n}{b}{c}{\tF})$
\end{proof}

We can compose validity in the following way.

\begin{lemma}\label{lem:comp-trust}
	If $\N_S(a,b,c)$, the following properties hold:
	\begin{enumerate} 
		\item $\!\!\vdash \nval{n+1}{a}{b}{\tF} \wedge \nval{n}{b}{c}{\tF} \Rightarrow \nval{n}{a}{c}{\tF}$.
		\item $\!\!\vdash \nval{n{+}1}{a}{b}{\tF} \wedge \m{J}_{a \ot b}\nval{n}{b}{c}{\tF} \Rightarrow \nval{n}{a}{c}{\tF}$.
		\item $\!\!\vdash \nval{n}{a}{b}{\tF} \wedge \m{J}_{a \ot b}\nval{n}{b}{c}{\tF} \Rightarrow \nval{n}{a}{c}{\tF}$.
		\item $\!\!\vdash \ntrust{n{+}1}{a}{b}{\tF} {\wedge} \J_{a \ot b}\nval{n}{b}{c}{\tF} {\Rightarrow} \ntrust{n}{a}{c}{\tF}$.
		\item $\!\!\vdash \ntrust{n}{a}{b}{\tF} \wedge \J_{a \ot b}\nval{n}{b}{c}{\tF} \Rightarrow \ntrust{n}{a}{c}{\tF}$.
	\end{enumerate}
\end{lemma}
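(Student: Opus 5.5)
The plan is to prove items 1--3 as purely propositional facts about the composite modalities $\J$, and then obtain items 4 and 5 by lifting them under $\B_a$. In every case I unfold $\nval{n}{a}{c}{\tF}$ into its finitely many conjuncts $\valw{a \ot c}A$ with $A \in \textbf{C}^n_c(\tF)$, fix one such $A$, assume $\J_{a \ot c}A$, and aim for $A$. The common first step is the factorisation lemma for forwarding modalities. Since $\N_S(a,b,c)$, it turns $\J_{a \ot c}A$ into $\J_{a \ot b}\J_{b \ot c}A$.

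For item 1, the key observation is that $A \in \textbf{C}^n_c(\tF)$ gives $\J_{b \ot c}A \in \textbf{C}^{n+1}_b(\tF)$. Hence $\nval{n+1}{a}{b}{\tF}$ contains the conjunct $\valw{a \ot b}\J_{b \ot c}A$. Combined with $\J_{a \ot b}\J_{b \ot c}A$, this yields $\J_{b \ot c}A$. The conjunct $\valw{b \ot c}A$ of $\nval{n}{b}{c}{\tF}$ then gives $A$.

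For item 2, I also use the other member of $\textbf{C}^{n+1}_b(\tF)$, namely $\valw{b \ot c}A$. So $\nval{n+1}{a}{b}{\tF}$ contains $\valw{a \ot b}\valw{b \ot c}A$. From $\J_{a \ot b}\nval{n}{b}{c}{\tF}$, axiom K for $\J_{a \ot b}$ (Lemma~\ref{lem:Jaxiom}) extracts the conjunct $\J_{a \ot b}\valw{b \ot c}A$. Together these give $\valw{b \ot c}A$ for every $A \in \textbf{C}^n_c(\tF)$, that is, $\nval{n}{b}{c}{\tF}$, and item 1 finishes the argument. This is exactly the second ``fundamental property'' stated before Definition~\ref{def:seman}; compare also Lemma~\ref{lem:exval}.

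Item 3 is where I expect the real difficulty. The same route gives $\J_{a \ot b}\J_{b \ot c}A$ and $\J_{a \ot b}\valw{b \ot c}A$, so K yields $\J_{a \ot b}A$. I would then apply the first fundamental property, which asks for $\valw{a \ot b}A$. However, the hypothesis $\nval{n}{a}{b}{\tF}$ only provides $\valw{a \ot b}B$ for $B \in \textbf{C}^n_b(\tF)$, whereas $A \in \textbf{C}^n_c(\tF)$. The step $\valw{a \ot b}A$ therefore goes through when $A \in \textbf{C}^n_b(\tF)$, for instance when $\tF$ does not depend on its agent argument (the footnoted case), since then $\textbf{C}^n_b(\tF)=\textbf{C}^n_c(\tF)$ by induction on $n$. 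I would first try to establish this matching of the sets $\textbf{C}^n$, and otherwise restrict item 3 to that case. In general I suspect a counterexample exists: take $n=0$ with $\tF(b)$ true and $\tF(c)$ arbitrary.

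Items 4 and 5 then follow mechanically. By Lemma~\ref{lem:Jaxiom}, recipient awareness gives $\J_{a \ot b}X \Rightarrow \B_a\J_{a \ot b}X$, so the premise $\J_{a \ot b}\nval{n}{b}{c}{\tF}$ moves under $\B_a$. Applying necessity to item 2 (respectively item 3) and then axiom K for $\B_a$ combines it with $\ntrust{n+1}{a}{b}{\tF}$ (respectively $\ntrust{n}{a}{b}{\tF}$) to give $\B_a\nval{n}{a}{c}{\tF} = \ntrust{n}{a}{c}{\tF}$. Item 5 inherits whatever restriction item 3 turns out to need.
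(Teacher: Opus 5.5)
\textbf{Items 1, 2 and 4, and the problem you found in item 3.} Your handling of items 1, 2 and 4 is the paper's argument. You factor $\J_{a \ot c}A$ as $\J_{a \ot b}\J_{b \ot c}A$ using $\N_S(a,b,c)$, extract the conjuncts $\valw{a \ot b}\J_{b \ot c}A$ and $\valw{a \ot b}\valw{b \ot c}A$ from $\textbf{C}^{n+1}_b(\tF)$, and lift under $\B_a$ via recipient awareness and axiom K.

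On item 3 you have found a real defect, and it is a defect of the paper. Its proof asserts ``from $\nval{n}{a}{b}{\tF}$ we get $\valw{a \ot b}A$'' for $A \in \textbf{C}^n_c(\tF)$, which is exactly the unjustified step you isolated. Items 3 and 5 are in fact not provable, and your suspected countermodel works. Take one world $w$ with $\leq$ the identity. Make all accessibility relations empty, except $R_\Box$ (if $\Box$ is present), which is the identity. Let $\tF(b)$ be true and $\tF(c)$ false at $w$. This frame conforms to all the axioms. Every $\J_{x \ot y}X$ holds vacuously, so $\valw{x \ot y}X$ is equivalent to $X$ at $w$. Hence the premises of item 3 for $n=0$ hold while its conclusion $\tF(c)$ fails. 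Setting $R_{\B_a} = \{(w,w)\}$ refutes item 5 in the same way. The first bullet of Theorem~\ref{lem:shared}, which uses items 3 and 5, is affected too.

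\textbf{Where your fallback fails.} The claim that $\textbf{C}^n_b(\tF) = \textbf{C}^n_c(\tF)$ for constant $\tF$ holds only at $n = 0$. Every element of $\textbf{C}^{n+1}_b(\tF)$ has outermost $\valw{b \ot x}$ or $\J_{b \ot x}$, whereas the elements of $\textbf{C}^{n+1}_c(\tF)$ are indexed by $c$, whatever $\tF$ is. Your induction breaks at this step.

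The restricted statement is also false for $n \geq 1$. Let $S$ consist of the contiguous sublists of $(a,b,c,d)$, let $\tF$ be the constant token $t$, and let $\sqsubseteq$ be the identity. Take worlds $w, u$ with $R_{\I_{c \ot d}} = \{(w,u)\}$ and all other relations empty; if $\Box$ is present, let $R_\Box$ be the reflexive closure of $R_{\I_{c \ot d}}$. Let $t$ be true only at $w$. Every $\J_{a \ot y}$ and $\J_{b \ot y}$ is vacuous at $w$, so $\nval{1}{a}{b}{\tF}$ reduces to $t$ and holds, and the $\J_{a \ot b}$-premise holds trivially. However, the conjunct $\valw{a \ot c}\J_{c \ot d}t$ of $\nval{1}{a}{c}{\tF}$ reduces to $\I_{c \ot d}t$, which fails at $w$. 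Adding $R_{\B_a} = \{(w,w)\}$ refutes item 5 as well.

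So restricting to constant $\tF$ salvages only $n = 0$, where item 3 is just the first ``fundamental property''. What does hold for all $n$, by the argument you and the paper give, is item 3 with the premise $\nval{n}{a}{b}{\tF}$ replaced by $\bigwedge\{\valw{a \ot b}A \mid A \in \textbf{C}^n_c(\tF)\}$. That is, $a$ must accept from $b$ claims of the kind $c$ makes. Without that, you need the order-$(n+1)$ premise of items 1 and 2.
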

\begin{proof}
	Let $A \in \textbf{C}^n_c(\tF)$, we want to show that $\valw{a \ot c}A$ in the first three cases, which means $\J_{a \ot c}A \Rightarrow A$. Since $\J_{a \ot c}A \Rightarrow \J_{a \ot b}\J_{b \ot c}A$ it is sufficient to prove $A$ from $\J_{a \ot b}\J_{b \ot c}A$.
	So assume $\J_{a \ot b}\J_{b \ot c}A$,
	\begin{enumerate}
		\item $\nval{n+1}{a}{b}{\tF}$ implies $\valw{a \ot b}\m{J}_{b \ot c}A$, hence $\J_{b \ot c}A$. From $\nval{n}{b}{c}{\tF}$ we have $\valw{b \ot c}A$, hence $A$.
		\item Like before we can derive $\J_{b \ot c}A$.
		$\nval{n+1}{a}{b}{\tF}$ also implies $\valw{a \ot b}\valw{b \ot c}A$, and from $\m{J}_{a \ot b}(\nval{n}{b}{c}{\tF})$ we have $\m{J}_{a \ot b}\valw{b \ot c}A$, hence $\valw{b \ot c}A$, which together with $\J_{b \ot c}A$ makes $A$.
		\item From $\m{J}_{a \ot b}(\nval{n}{b}{c}{\tF})$ we have $\J_{a \ot b}\valw{b \ot c}A$, hence with $\J_{a \ot b}\J_{b \ot c}A$ we get $\J_{a \ot b}A$.
		From $\nval{n}{a}{b}{\tF}$ we get $\valw{a \ot b}A$, and hence $A$.
		\item Apply axiom K and $\J_{a \ot b}A \Rightarrow \B_a \J_{a \ot b}A$ to property 2.
		\item Apply axiom K and $\J_{a \ot b}A \Rightarrow \B_a \J_{a \ot b}A$ to property 3.
	\end{enumerate}
\end{proof}


Suppose we have some forwarding network $S$,
\begin{definition}
	We say that a trust statement $\ntrust{n}{a}{b}{\tF}$, is \emph{shared over} $S$ if $\ntrust{n}{a}{b}{\tF}$ and $\J_{c \ot b}\nval{n}{a}{b}{\tF}$ hold for any $c \in \Ag$ such that $\N_S(c,a,b)$.
\end{definition}
Sharing trust over a network involves sending your claim of trust to all relevant parties. The forwarding network $S$ specifies everyone who can use your claims and may depend on them, and as such gives a specification of where to send your claims. 

\begin{theorem}\label{lem:shared}
	Given a forwarding network $S$ and $\N_S(a,b,c)$,
	\begin{itemize}
		\item If both $\ntrust{n}{a}{b}{\tF}$ and $\ntrust{n}{b}{c}{\tF}$ are shared over $S$, then $\ntrust{n}{a}{c}{\tF}$ is shared over $S$.
		\item If both $\ntrust{n}{a}{b}{\tF}$ and $\ntrust{n+1}{b}{c}{\tF}$ are shared over $S$, then $\ntrust{n}{a}{c}{\tF}$ is shared over $S$.
	\end{itemize}
\end{theorem}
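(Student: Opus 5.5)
The plan is to establish separately the two components of ``shared over $S$'' for $\ntrust{n}{a}{c}{\tF}$. I read the definition so that the trusting agent is the sender, i.e.\ $\ntrust{n}{x}{y}{\tF}$ shared means $\ntrust{n}{x}{y}{\tF}$ holds and $\J_{d \ot x}\nval{n}{x}{y}{\tF}$ holds for every $d$ with $\N_S(d,x,y)$. This reading is forced by how sharing is meant to feed into Lemma~\ref{lem:comp-trust}, where $\J_{a \ot b}\nval{n}{b}{c}{\tF}$ is the hypothesis needed. So I will treat the definition's $\J_{c \ot b}$ as $\J_{c \ot a}$ with that indexing.

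\emph{First component.} Since $\N_S(a,b,c)$ holds and $\ntrust{n}{b}{c}{\tF}$ (resp.\ $\ntrust{n+1}{b}{c}{\tF}$) is shared, we obtain $\J_{a \ot b}\nval{n}{b}{c}{\tF}$ (resp.\ $\J_{a \ot b}\nval{n+1}{b}{c}{\tF}$). For the first bullet, combine this with $\ntrust{n}{a}{b}{\tF}$ using item 5 of Lemma~\ref{lem:comp-trust}, which gives $\ntrust{n}{a}{c}{\tF}$. For the second bullet I would use item 3 of Lemma~\ref{lem:comp-trust}, lifted into $\B_a$. Item 4 has the roles of $a$ and $b$ the wrong way round for this hypothesis, so item 3 is the right one. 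The lifting goes as follows. Item 3 with the orders arranged appropriately yields $\nval{n}{a}{b}{\tF} \wedge \J_{a\ot b}\nval{n+1}{b}{c}{\tF} \Rightarrow \nval{n}{a}{c}{\tF}$. Its proof goes through, since $\nval{n+1}{b}{c}{\tF}$ implies $\nval{n}{b}{c}{\tF}$ under $\J$ by the chain of item 1 type reasoning; if needed, I would reprove the variant directly from the composition properties. Then use recipient awareness $\J_{a \ot b}X \Rightarrow \B_a\J_{a\ot b}X$ (Lemma~\ref{lem:Jaxiom}) together with axiom K for $\B_a$.

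\emph{Second component.} Let $d$ satisfy $\N_S(d,a,c)$. Apply the first property of Lemma~\ref{lem:Nfac} with the instance $\N_S(d,a,c) \,\&\, \N_S(a,b,c)$, which yields $\N_S(d,a,b)$ and $\N_S(d,b,c)$. From $\N_S(d,a,b)$ and the sharing of $a$'s trust we get $\J_{d\ot a}\nval{n}{a}{b}{\tF}$. From $\N_S(d,b,c)$ and the sharing of $b$'s trust we get $\J_{d \ot b}\nval{n(+1)}{b}{c}{\tF}$. The factoring lemma, applied with $\N_S(d,a,b)$, turns the latter into $\J_{d\ot a}\J_{a \ot b}\nval{n(+1)}{b}{c}{\tF}$. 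Since $\J_{d\ot a}$ satisfies K and necessity (Lemma~\ref{lem:Jaxiom}), applying necessity to the implication of item 3 (the $\nval{}{}{}{}$-level version used above) and then K gives $\J_{d\ot a}\nval{n}{a}{c}{\tF}$, as required.

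The main obstacle is the bookkeeping. The first part is getting the network facts right: the specific instantiation of Lemma~\ref{lem:Nfac} producing both $\N_S(d,a,b)$ and $\N_S(d,b,c)$ is the crux. The second part is the order mismatch in the second bullet, where I must make sure a validity-level composition $\nval{n}{a}{b}{\tF}\wedge\J_{a\ot b}\nval{n+1}{b}{c}{\tF}\Rightarrow\nval{n}{a}{c}{\tF}$ is actually derivable. I would first try to derive it from item 3 of Lemma~\ref{lem:comp-trust} by showing $\nval{n+1}{b}{c}{\tF}$ entails enough of $\nval{n}{b}{c}{\tF}$. If that fails, I would instead use items 1 and 2, swapping which agent carries the higher order, and adjust the statement's indexing accordingly.
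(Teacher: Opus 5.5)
For the first bullet your argument is the paper's, step for step. You use item~5 of Lemma~\ref{lem:comp-trust} for $\ntrust{n}{a}{c}{\tF}$ itself. Then you use Lemma~\ref{lem:Nfac} (your instantiation from $\N_S(d,a,c)$ and $\N_S(a,b,c)$ is the right one), the factoring lemma along $\N_S(d,a,b)$, and item~3 pushed under $\J_{d \ot a}$ by K and necessity. Your reading of the sharing definition, with $\nval{n}{a}{b}{\tF}$ sent along $\J_{d \ot a}$, is also the one the paper's proof actually uses.

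For the second bullet the paper only says ``similarly, using properties 4 and 2''. Those items carry the order $n+1$ on the $a$--$b$ link, so that argument proves the variant in which $\ntrust{n+1}{a}{b}{\tF}$ and $\ntrust{n}{b}{c}{\tF}$ are shared. This is the form the PKI examples use: order-1 trust in a CA plus the CA's order-0 trust in $b$. It is your fallback, except that the items needed are 4 and 2. Item~1 is not applicable, since sharing supplies only $\J_{a \ot b}\nval{n}{b}{c}{\tF}$, not $\B_a\nval{n}{b}{c}{\tF}$.

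Your primary route to the literal second bullet rests on $\nval{n+1}{b}{c}{\tF} \Rightarrow \nval{n}{b}{c}{\tF}$, and the appeal to item~1 does not establish it. Using item~1 with first premise $\nval{n+1}{b}{c}{\tF}$ would require $\N_S(b,c,c)$, which never holds because lists in $S$ are non-repeating. The implication is nevertheless derivable for a different reason:
\begin{itemize}
\item In $\textbf{C}^{n+1}_c(\tF)$ the agent index ranges over all of $\Ag$, including $c$, so it contains $\valw{c \ot c}A$ for every $A \in \textbf{C}^n_c(\tF)$.
\item Since $\J_{c \ot c}A = \top$, we have $\valw{c \ot c}A \equiv A$.
\item K and necessity for $\J_{b \ot c}$ then give $\valw{b \ot c}\valw{c \ot c}A \Rightarrow \valw{b \ot c}A$.
\end{itemize}
Pushing this under $\B_b$ and $\J_{d \ot b}$ shows that sharing $\ntrust{n+1}{b}{c}{\tF}$ entails sharing $\ntrust{n}{b}{c}{\tF}$. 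So the literal second bullet is just a corollary of the first.

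There is, however, a deeper gap that you inherit from the paper: item~3 of Lemma~\ref{lem:comp-trust}, and hence item~5, is not valid. The proof of item~3 extracts $\valw{a \ot b}A$ from $\nval{n}{a}{b}{\tF}$ for $A \in \textbf{C}^n_c(\tF)$, but that conjunction ranges only over $\textbf{C}^n_b(\tF)$. Trusting $b$ about $\tF(b)$ says nothing about $\tF(c)$ relayed by $b$.

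Concretely, take the identity for $\sqsubseteq$ and a one-world frame in which $R_{\B_a}$ (and $R_\Box$, if present) is the identity and every other relation is empty. Let $\tF(b)$ be true and $\tF(c)$ false; this frame conforms to the axioms. Every $\J_{x \ot y}X$ holds there, so each $\valw{x \ot y}X$ collapses to $X$. The hypotheses of both bullets then hold for $n = 0$, yet $\ntrust{0}{a}{c}{\tF}$ fails.

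Hence neither bullet is derivable as literally stated, and the reduction above cannot rescue the second. Items 2 and 4 are sound, so the result that actually holds is the index-swapped second bullet. Your fallback, done with items 4 and 2, is the argument to pursue.
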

\begin{proof}
	Suppose $\N_S(a,b,c)$, and both $\ntrust{n}{a}{b}{\tF}$ and $\ntrust{n}{b}{c}{\tF}$ are shared over $S$. Then it holds that $\ntrust{n}{a}{b}{\tF}$ and $\J_{a \ot b}\nval{n}{b}{c}{\tF}$, so by property 5 of Lemma \ref{lem:comp-trust}, $\ntrust{n}{a}{c}{\tF}$.
	Supposing moreover $\N_S(d,a,c)$, then by Lemma \ref{lem:Nfac}, $\N_S(d,b,c)$ and $\N_S(d,a,b)$, so $\J_{d \ot a}\nval{n}{a}{b}{\tF}$ and $\J_{d \ot b}\nval{n}{b}{c}{\tF}$, the latter implying $\J_{d \ot a}\J_{a \ot b}\nval{n}{b}{c}{\tF}$. With axiom K extended to $\J$ by Lemma \ref{lem:Jaxiom}, and by property 3 of Lemma \ref{lem:comp-trust}, we get $\J_{d \ot a}\nval{n}{a}{c}{\tF}$. We conclude that $\ntrust{n}{a}{c}{\tF}$ is shared over $S$.
	The second property proven similarly, using Properties 4 and 2 of Lemma \ref{lem:comp-trust} instead.
\end{proof}

\section{Examples from Public Key Infrastructures}\label{sub:exam}
Suppose we have a finite directed graph $\G$ expressing agents and communication channels between agents. We use this to formulate a forwarding network $S$ in two different ways:
\begin{itemize}
	\item We define $S_\G$ as the set of all shortest paths between vertices, with $(a , \beta , c) \in S_\G$ a shortest path from $c$ to $a$, and $a \in S$ the shortest path from $a$ to $a$.
	\item Alternatively, if $\G$ is acyclic, we can define $S'_\G$ to be the set of all paths which do not repeat vertices.
\end{itemize}

We look at examples of Public Key infrastructures. We use $a \overset{n}{\rightarrow} b$ to say that $b \to a$ is an edge in $\G$ and $\ntrust{n}{a}{b}{\tF}$ is shared over $S_\G$. We use $a \overset{n}{\dashrightarrow} b$ to say we can derive that $\ntrust{n}{a}{b}{\tF}$ is shared over $S_\G$ (though $b \to a$ is not an edge).
We derive $a \overset{n}{\dashrightarrow} b$ statements using Theorem \ref{lem:shared}.

The following variety of examples of public key infrastructures are taken from~\cite{cef2018} and~\cite{intdomPKI}.
In all the examples given below, there is at most one non-repeating path between any two entities which does not revisit a vertex, which is therefore automatically the shortest path.

\paragraph{Dedicated domain PKI} The simplest model of indirect trust is the dedicated domain PKI, where a specific certification authority is used to validate keys. 

\begin{center}
$
\xymatrix{
	a \ar@{<-}^1[r] \ar@{<.}@/_0.5pc/_0[rr] & \text{CA}  \ar@{<-}^0[r] & b
}
$
\end{center}

A variation of this is the shared domain PKI, where one CA validates keys from different users.
Other systems may use more intermediate entities on the way towards validation. This can be done in two distinct ways, which are shown in the next two examples.

\paragraph{Bridge Certification Authority} In this system, everyone has their own CA which they trust and use to validate their keys. This CA is then linked to a bridge certification authority (BCA), and mutual trust exists between the CAs. When claims are made, each CA in the chain can in turn validate the claim given their trust in each other.
Below is an example of a BCA system, with on the left the assumptions made, and on the right examples of what trust can be derived.

\[
\xymatrix{
	\text{CA}_a \ar@{<->}^1[r] \ar@{<-}@/^0.5pc/^0[d] & \text{BCA} \ar@{<->}^1[r]   & \text{CA}_b \ar@{<-}@/^0.5pc/^0[d]  \\
	a \ar@/^0.5pc/@{<-}^1[u] &  & b \ar@/^0.5pc/@{<-}^1[u]
}
\qquad \qquad
\xymatrix{
	\text{CA}_a \ar@{<-}^1[r]\ar@{<.}^0[drr] & \text{BCA} \ar@{<-}^1[r]\ar@{<.}^0[dr] & \text{CA}_b \ar@{<-}^0[d]  \\
	a \ar@{<-}^{1}[u]  \ar@{<.}^0[rr] & & b
}
\]

\paragraph{Domain Trusted list} Another way of composing trust is by using a domain trusted list. This is different from a CA, as it validates CAs themselves, not ownership of keys directly. Hence, trust in a domain trusted list is of a higher order then trust in something like a bridge certification authority.
In the diagram below, we see an example of such a system with two agents $b$ and $c$ we want to trust. On the left, we see the assumptions and on the right some trust derivations.

\[
\xymatrix@R=1em{
	\text{TL} \ar@{<-}^1[r] \ar@{<-}^1[rd] & \text{CA}_b \ar@{<-}^0[r] & b \\
	a \ar@{<-}^{2}[u] & \text{CA}_c \ar@{<-}^0[r] & c
}
\qquad 
\qquad
\xymatrix@R=1em{
	\text{TL} \ar@{<-}^1[r] & \text{CA}_b \ar@{<-}^0[r] & b  \\
	a \ar@{<-}^{2}[u] \ar@{<.}^1[ur] \ar@{<.}_0[urr] &  &
}
\]

\paragraph{Hierarchical PKI}
In one last example, we consider CAs linked in a hierarchical structure. Like in BCA, the Root CA (RCA) need only be trusted as a CA.
\[
\xymatrix@R=1em{
	\text{RCA} \ar@{<-}_{1}[rr] \ar@{<-}_{1}[rd] & & \text{CA}_2 \ar@{<-}_0[r]\ar@{<-}_{1}[rd] & b_2 & \\
	a \ar@{<-}^{1}[u] & \text{CA}_1 \ar@{<-}_0[r] & b_1 & \text{CA}_3 \ar@{<-}_0[r] & b_3
}
\]

\paragraph{Direct mutual trust} This is the goal of mutual trust.

\[
\xymatrix{
	a \ar@{<-}@/^0.5pc/^0[r] & b \ar@/^0.5pc/@{<-}^0[l]
}
\]

\paragraph{Personal CA}
Here, everyone has their own CA to consult and check trustworthiness of other agents.
\[
\xymatrix@R=1em{
	\text{CA}_a \ar@{<-}@/_1pc/_0[d] & \text{CA}_b \ar@{<-}@/^1pc/^0[d] \\
	a \ar@{<-}_1[u]\ar@{->}_<<<<0[ur] & b \ar@{->}^<<<<0[ul]\ar@{<-}^1[u]
}
\]
We can derive $(a \overset{0}{\dashrightarrow} b)$ and $(b \overset{0}{\dashrightarrow} a)$.

\paragraph{Mesh PKI}
Here, everyone has their own CA they can use to advocate validity of their claims.
These CAs are connected in a mesh, where each CA trusts other CAs in their ability to authenticate keys.
A user then only needs to trust their own CA in being able to authenticate others' claims.
\[
\xymatrix@R=1em{
	\text{CA}_a \ar@{<->}_{1}[r] \ar@{<-}@/^1pc/^0[d] & \text{CA}_b \ar@{<->}_{1}[r]\ar@{<-}@/^1pc/^0[d] & \text{CA}_c \ar@{<-}@/^1pc/^0[d] \\
	a \ar@/^1pc/@{<-}^1[u] & b \ar@/^1pc/@{<-}^1[u]  & c \ar@/^1pc/@{<-}^1[u]
}
\]
Note that not all CAs need to be directly connected. As long as there is a chain of trust between each of the CAs, trust can be derived.

\section{Threshold Trust}\label{sec:threshold}

There are situations in which a CA is considered unreliable. This can happen for two reasons:
\begin{itemize}
	\item CAs may be incorrect, as in they are lying or simply wrong about the validity of a claim.
	\item CAs may be unresponsive regarding trust, as in they have not verified or communicated whether they consider someone trustworthy.\footnote{Note that the CA may still be responsive regarding forwarding of claims, as that is dealt with separately and can be independently verified.}
\end{itemize}
In both cases, we can use multiple CAs to mitigate the issue. 
In the first situation, we can confirm with multiple CAs whether an entity is trusted, before accepting the entity's claims.
In the second situation, we attempt to consult multiple CAs hoping at least some assert their trust in the entity.

Consider Lemma \ref{lem:comp-trust} again, and see that when $\N_S(a,i,b)$:
\[
\ntrust{1}{a}{i}{\tF} \wedge  \m{I}_{a \ot i}(\nval{0}{i}{b}{\tF}) \implies (\ntrust{0}{a}{b}{\tF})
\]
We see that trust composition requires two statements; the trust in the CA given by $\ntrust{0}{a}{i}{\tF}$, and the sharing of validation by the CA given by $\m{I}_{a \ot i}(\nval{0}{i}{b}{\tF})$.
If we think a CA is dishonest, we lack the former statement, and if we think a CA is unresponsive, we lack the latter statement.

Consider the following situation. We have two users $a$ and $b$, and two CAs named $i$ and $j$. We assume that trust of $a$ in $b$ gets delegated to both $i$ and $j$, asserting $\N_S(a,i,b)$ and $\N_S(a,j,b)$:
We get the following two results:
\begin{itemize}
	\item $\ntrust{1}{a}{i}{\tF} \vee \ntrust{1}{a}{j}{\tF}$ and $\m{I}_{a \ot i}(\nval{0}{i}{b}{\tF}) \wedge \m{I}_{a \ot j}(\nval{0}{j}{b}{\tF})$ implies $\ntrust{0}{a}{b}{\tF}$.
	If only one of the two CAs is trusted by $a$, then both need to assert trust in $b$ to guarantee that trust can be derived.
	\item $\ntrust{1}{a}{i}{\tF} \wedge \ntrust{1}{a}{j}{\tF}$ and $\m{I}_{a \ot i}(\nval{0}{i}{b}{\tF}) \vee \m{I}_{a \ot j}(\nval{0}{j}{b}{\tF})$ implies $\ntrust{0}{a}{b}{\tF}$.
	If both CAs are trusted, only one needs to assert trust in $b$.
\end{itemize}

The above two examples resolve problems created by potential attacks on the network. Possible incorrectness of a CA may be due to an attacker impersonating the CA, and in the first example trust can still be derived if one CA gets corrupted this way.
Possible unresponsiveness may be due to an attacker removing a CA's published certificates, and in the second example trust can still be derived if one CA gets disrupted. 

In cases where both incorrectness and unresponsiveness are possibilities, we would need to consult at least three CAs to validate a claim.
We get \emph{threshold PKI}, with the simplest being a \emph{2-out-of-3} threshold. Given formulas $A$, $B$, $C$, define: 
\[
\texttt{2of3}(A,B,C) := (A \wedge B) \vee (A \wedge C) \vee (B \wedge C)
\]
Note that $\texttt{2of3}(A,B,C) \equiv (A \vee B) \wedge (A \vee C) \wedge (B \vee C)$.

Consider again agents $a$ and $b$, together with three CAs $i$, $j$ and $k$, and let $S = \{aib,ajb,akb\}$, hence $\N_S(a,i,b)$, $\N_S(a,j,b)$, and $\N_S(a,k,b)$.
We get threshold trust:
\begin{lemma}
	If $\N_S(a,i,b)$, $\N_S(a,j,b)$, and $\N_S(a,k,b)$, then:
	
	\noindent
	$\texttt{2of3}(\ntrust{1}{a}{i}{\tF}, \ntrust{1}{a}{j}{\tF}, \ntrust{1}{a}{k}{\tF}) \wedge$
	
	\noindent
	$\texttt{2of3}(\m{I}_{a \ot i}\nval{0}{i}{b}{\tF}, \m{I}_{a \ot j}\nval{0}{j}{b}{\tF}, \m{I}_{a \ot k}\nval{0}{k}{b}{\tF})$
	
	$\implies \ntrust{0}{a}{b}{\tF}$
\end{lemma}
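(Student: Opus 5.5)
The plan is to reduce the statement to the single-CA composition already available, namely property 4 of Lemma~\ref{lem:comp-trust} instantiated with $n=0$. For each $x \in \{i,j,k\}$ we have $\N_S(a,x,b)$, so
\[
\ntrust{1}{a}{x}{\tF} \wedge \J_{a \ot x}\nval{0}{x}{b}{\tF} \Rightarrow \ntrust{0}{a}{b}{\tF}.
\]
The hypothesis is phrased with $\I_{a \ot x}$ rather than $\J_{a \ot x}$. With $S = \{aib,ajb,akb\}$ closed under sublists, the only path from $x$ to $a$ in $S$ is the direct one $(a,x)$. Hence $\J_{a \ot x}A$ is literally $\I_{a \ot x}A$. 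For a general $S$, one would instead read the hypothesis as the $\J$-version, as the paper does in the displayed instance just before.

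First I record this one-CA rule for each of the three CAs. Write $T_x$ for $\ntrust{1}{a}{x}{\tF}$ and $V_x$ for $\I_{a \ot x}\nval{0}{x}{b}{\tF}$. Then for every $x$ we have the implication $T_x \wedge V_x \Rightarrow \ntrust{0}{a}{b}{\tF}$.

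The remaining step is purely propositional and intuitionistically valid. I eliminate the disjunction in $\texttt{2of3}(T_i,T_j,T_k)$, and by symmetry it suffices to treat the case $T_i \wedge T_j$. For the $V$'s, I use the equivalent conjunctive form noted above: $\texttt{2of3}(V_i,V_j,V_k)$ implies $V_i \vee V_j$. Doing case analysis on this disjunction, either $T_i \wedge V_i$ or $T_j \wedge V_j$ holds, and the one-CA rule finishes either branch.

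The other two cases are analogous. In the case $T_i \wedge T_k$ we use $V_i \vee V_k$, and in the case $T_j \wedge T_k$ we use $V_j \vee V_k$. The equivalence of the two forms of $\texttt{2of3}$ is itself intuitionistic (distributivity holds), so only the easy direction, from the disjunctive form to the conjunctive one, is needed.

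The only real obstacle is the $\I$ versus $\J$ bookkeeping, i.e.\ justifying that property 4 of Lemma~\ref{lem:comp-trust} applies with $\I_{a \ot x}$. I would handle it by the explicit path computation for this $S$. Everything else is routine disjunction elimination, which the sequent calculus of Figure~\ref{Fig:basic} supports.
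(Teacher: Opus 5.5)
Your proposal is correct and is essentially the argument the paper intends. The paper gives no separate proof, but it obtains the single-CA rule $\ntrust{1}{a}{x}{\tF} \wedge \I_{a \ot x}\nval{0}{x}{b}{\tF} \Rightarrow \ntrust{0}{a}{b}{\tF}$ from Lemma~\ref{lem:comp-trust} and combines the three CAs by the same intuitionistic case analysis on the disjunctions that you describe. Your check that $\J_{a \ot x}$ reduces to $\I_{a \ot x}$ for the sublist closure of $S = \{aib,ajb,akb\}$ fills a step the paper glosses over, and it is genuinely needed, since for an $S$ with further paths from $x$ to $a$, $\I_{a \ot x}A$ no longer implies $\J_{a \ot x}A$. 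Note, though, that the paper's displayed single-CA instance also writes $\I$, so it does not actually read the hypothesis as the $\J$-version, as you suggest.
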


In~\cite{NordSec}, a logic is considered in which the trust thresholds are directly baked into the modalities themselves.

\section{Wish Modality}

We have seen how agents can request information. As discussed before, we have to be careful when using the communication modality, as any message is considered to be a claim made by the sender. As is, there is no direct way to formally communicate facts which are not yet believed without having to lie, except by forwarding other people's claims. This is quite inconvenient in networks based on trust.

A solution is to use a \emph{wish modality} $\mathcal{W}_a$, which tells us what an agent $a$ wants to be true.
The modality fits in the framework of this paper, since its satisfies axiom K; if agent $a$ desires $A$ to be true, and desires $B$ to be true, then one should be able to reason that $a$ wants both of them to be true. 
In cases where the agent wants either of the two to be true, but not necessarily both, we instead have to write $\mathcal{W}_a(A \vee B)$.

Most commonly, a wish can be used to send a request to another agent: $\mathcal{I}_{a,b}\mathcal{W}_bA$ means $b$ tells $a$ it wishes $A$ to be true. There are several examples of useful wishes one can utter:
\begin{itemize}
	\item With $\mathcal{I}_{a,b}\mathcal{W}_bt$, $b$ tells $a$ it likes some principal statement $t$ to be true, e.g. key ownership. If $a$ has the ability to make $t$ true, this is a direct request from $b$ to do so.
	\item With
	$\mathcal{I}_{a,b}\mathcal{W}_b\mathcal{I}_{c,a}A$, $b$ requests $a$ to send the claim $A$ to agent $c$. For instance, $a$ is some authority which is requested to send signed credentials to some retailer $c$.
	\item
	$\mathcal{I}_{a,b}\mathcal{W}_b\mathcal{B}_cA$ is a request of $b$ to $a$ to help convince $c$ of a certain fact.
\end{itemize}

As an example, suppose $t$ is the token expressing that $a$ owns a certain product or key.
Suppose we have a store $c$ which can make $t$ true, as specified by $\mathcal{B}_ct \Rightarrow t$.
A customer $a$ would like $t$ to be true and sends a request to $c$, $\mathcal{I}_{c,a}\mathcal{W}_at$.
Now, $c$ is willing to make $t$ true if $a$ has the proper credentials, which is expressed by some other token $u$. So $c$ sends $\mathcal{I}_{a,c}(u \Rightarrow t)$.

Now $a$ needs to convince $c$ of the truth of $u$. They can do so in several ways. If $c$ does not trust $a$ directly (at least not regarding $u$), we need an intermediate authority $b$ which $c$ trusts. Either $a$ sends a request to $b$ to validate and send the credentials to $c$ using the above request protocol.
Alternatively, $a$ can let $c$ know that $b$ can vouch for the truth of the credentials, and $c$ can request validation from $b$ independently. 

One last example is using the wish modality to request information. Suppose agent $b$ wants to know whether $A$ is true or not. In this case, it would make the statement $\m{W}_a((A \Rightarrow \mathcal{B}_aA) \wedge (\neg A \Rightarrow \mathcal{B}_a(\neg A)))$, and send this to someone who may know.
Of course, without prearranged obligations or guarantees, requests could be ignored. 

\section{Key Ownership and Proxy Trust}\label{sec:control}

One of the main motivations of this paper is to describe networks for authenticating ownership of keys. We could consider such keys as entities in their own right, used as aliases to send further messages. Even if one does not know the owner of a key, one may still reason about the belief and interactions of the owner, as represented by the key. In authorization logics, one may use statements of ownership to translate authority from an entity to their owned keys. Here we can similarly use key ownership here to translate trust in an entity to trust in their owned keys.

We take as set of tokens $\Tok = \{\m{C}_{a,b} \mid a, b \in \Ag\}$, where $\m{C}_{a,b}$ states $a$ is controlled by $b$. We consider this as expressing a reflexive and transitive relationship, with $\forall_a \m{C}_{a,a}$ and $\forall_{a,b,c} (\m{C}_{a,b} \Rightarrow \m{C}_{b,c} \Rightarrow \m{C}_{a,c})$ as additional public assumptions.
Unlike the delegation network $\RN$, ownership of keys is not public knowledge, and each entity would need to verify on their own whether a certain key is owned by a certain entity.

In public key infrastructures, entities make claims about ownership of keys. One can trust a person directly when they claim to control a key, or one may trust a CA to validate someones claims of ownership of keys.
This is modeled by instantiating the $\tF$ in trust statements $\ntrust{n}{a}{b}{\tF}$ with $\tF(c) = \m{C}_{d,c}$ for different choices of $d \in \Ag$.
This implements the claim: I have control of agent $d$.

Beside being examples of claims we want to validate, ownership of keys can have additional effects on the modalities. We can for instance add the following axioms to $\Delta$:
\begin{itemize}
	\item $\forall_{a,b}(\m{C}_{a,b} \Rightarrow \m{B}_b\m{C}_{a,b})$.
	\item $\forall_{a,b}(\m{C}_{a,b} \Rightarrow (\m{B}_a\X \Leftrightarrow \m{B}_b\X))$.
	\item $\forall_{a,b}(\m{C}_{a,b} \Rightarrow \forall_{c}(\m{I}_{a,c}\X \Rightarrow \m{I}_{b,c}\X))$.
	\item $\forall_{a,b}(\m{C}_{a,b} \Rightarrow \forall_{c}(\m{I}_{c,a}\X \Rightarrow \m{I}_{c,b}\X))$
\end{itemize}
These statements say the following; agents know which keys they control. If agent $a$ is controlled by agent $b$, then $a$ and $b$ believe the same things as they are identifiers (alter-egos) of the same entity. Moreover, messages sent to and from $a$ can be considered as sent to and from $b$ respectively.

Using the axioms above, we can \emph{transfer} trust.
If you trust an agent, then you can trust any agent controlled by them. This result will allow people to act using a digital identity they control, and still be trusted.

\begin{lemma}\label{lem:trust_push}
	$\vdash_\Delta  \forall_{b,c}(\m{C}_{b,c} \Rightarrow \forall_{a}(\m{V}_{a,c}\X \Rightarrow \m{V}_{a,b}\X))$.
\end{lemma}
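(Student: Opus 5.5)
I read $\m{V}_{a,c}\X$ as the validity of the direct channel from $c$ to $a$, i.e.\ $\valu{a \ot c}\X = (\m{I}_{a,c}\X \Rightarrow \X)$, with $\X$ a schematic formula as in the axioms added to $\Delta$. The lemma then says that validity of claims sent by a controlling agent $c$ transfers to claims sent by any agent $b$ that $c$ controls. The plan is a direct intuitionistic derivation that uses only the fourth ownership axiom in $\Delta$, the one about senders.

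Fix $b, c$ and assume $\m{C}_{b,c}$; then fix $a$ and assume $\m{V}_{a,c}\X$, i.e.\ $\m{I}_{a,c}\X \Rightarrow \X$. The goal is $\m{I}_{a,b}\X \Rightarrow \X$, so also assume $\m{I}_{a,b}\X$. Instantiate the axiom $\forall_{a,b}(\m{C}_{a,b} \Rightarrow \forall_{c}(\m{I}_{c,a}\X \Rightarrow \m{I}_{c,b}\X))$ with its bound variables renamed so that the controlled agent is $b$, the controller is $c$, and the recipient is $a$. This gives $\m{C}_{b,c} \Rightarrow (\m{I}_{a,b}\X \Rightarrow \m{I}_{a,c}\X)$. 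Modus ponens with $\m{C}_{b,c}$ and then with $\m{I}_{a,b}\X$ yields $\m{I}_{a,c}\X$. Applying the assumed validity $\m{I}_{a,c}\X \Rightarrow \X$ gives $\X$. Discharging the three hypotheses with implication introduction, and generalising over $a$, $b$, $c$, gives the statement.

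The computation itself is routine. The main point needing care is bookkeeping: the agent variables in the fourth axiom are named differently from those in the lemma, so the instantiation must send the controlled agent to the sender position. Otherwise one obtains the useless direction from $c$ to $b$. I would also check that $\X$ is treated schematically, i.e.\ the same formula throughout, so no substitution under modalities is needed. Finally, if $\m{V}$ were intended as trust $\B_a\valu{a\ot c}$ rather than bare validity, the argument would have to run inside $\B_a$ via necessity and axiom K. That would require $\m{C}_{b,c}$ to be available under $\B_a$, which $\Delta$ does not provide (only $\B_b\m{C}_{b,c}$). I therefore commit to the validity reading, under which the proof closes as above.
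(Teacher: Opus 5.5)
Your derivation is sound, but it proves a different statement from the paper's, because $\m{V}$ does not mean bare validity $\valu{a\ot c}\X$ here. The meaning of $\m{V}$ is fixed by Corollary~\ref{lem:trust_push2}, $\m{T}_{a,c}\X = \B_a\m{V}_{a,c}\X$, together with Liau's cautious trust from Section~\ref{sec:trust}, $\T_{a,b}A = \B_a\bigl((\I_{a\ot b}A\Rightarrow\B_b A)\wedge(\B_b A\Rightarrow A)\bigr)$. So $\m{V}_{a,c}\X$ is the conjunction of \emph{sincerity} $\I_{a\ot c}\X\Rightarrow\B_c\X$ and \emph{reliability} $\B_c\X\Rightarrow\X$. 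The target $\m{V}_{a,b}\X$ is the corresponding conjunction for $b$.

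Under this reading your chain reaches only $\I_{a\ot b}\X\Rightarrow\X$. That formula is implied by $\m{V}_{a,b}\X$ but does not imply it. The first conjunct needs $\B_b\X$ as a conclusion, and the sender axiom cannot produce a belief of $b$. The second conjunct, $\B_b\X\Rightarrow\X$, involves no communication at all, so the sender axiom does not even apply.

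The missing ingredient is the second ownership axiom, instantiated as $\m{C}_{b,c}\Rightarrow(\B_b\X\Leftrightarrow\B_c\X)$ and used in both directions.

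For sincerity of $b$, assume $\I_{a\ot b}\X$. The fourth ownership axiom gives $\I_{a\ot c}\X$; your instantiation of it is the right one. Sincerity of $c$ then gives $\B_c\X$, and $\B_c\X\Rightarrow\B_b\X$ gives $\B_b\X$.

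For reliability of $b$, assume $\B_b\X$. The other direction of the axiom gives $\B_c\X$, and reliability of $c$ gives $\X$. This is exactly the paper's proof.

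A warning sign that your reading was off is that it leaves the belief-transfer axiom unused. It also turns the lemma into a one-line instance of the sender axiom. Your closing remark about the trust-level reading is consistent with the paper. The paper handles that version separately in the corollary, with $\B_a\m{C}_{b,c}$ as an explicit hypothesis.
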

\begin{proof}
	Suppose $\m{C}_{b,c}$ and $\m{V}_{a,c}A$ hold for some keys $a,b,c$ and statement $A$.
	We prove $\m{V}_{a,b}A$.
	
	If $\m{I}_{a,b}A$, then since $\m{C}_{b,c}$ we know by the new axiom that the message was actually sent by $c$, so $\m{I}_{a,c}A$. By validity $\m{V}_{a,c}A$, we know that $c$ believes it: $\m{B}_cA$. Again by $\m{C}_{b,c}$, we conclude that $b$ believes it too: $\m{B}_bA$.
	
	If $\m{B}_bA$, then by $\m{C}_{b,c}$ and the new axioms,  $\m{B}_cA$. Using reliability from $\m{V}_{a,c}A$ we can derive $A$. 
\end{proof}
\begin{corollary}\label{lem:trust_push2}
	$\vdash_\Delta  \forall_{a,b,c}\m{B}_a\m{C}_{b,c} \Rightarrow (\m{T}_{a,c}\X \Rightarrow \m{T}_{a,b}\X)$.
\end{corollary}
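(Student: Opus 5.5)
The plan is to lift Lemma~\ref{lem:trust_push} under the belief modality $\B_a$ and then distribute $\B_a$ over the implications using axiom K. I read $\m{T}_{a,c}\X$ as agent $a$'s trust in $c$, namely $\m{B}_a\m{V}_{a,c}\X$: belief by $a$ in the validity of claims from $c$. This matches the pattern $\mathsf{trust} = \B_a(\text{validity})$ of Definition~\ref{def:seman}.

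First I would instantiate Lemma~\ref{lem:trust_push} at the fixed agents $a,b,c$ and statement $\X$. This gives $\vdash_\Delta \m{C}_{b,c} \Rightarrow (\m{V}_{a,c}\X \Rightarrow \m{V}_{a,b}\X)$. Next I would obtain $\vdash_\Delta \B_a\bigl(\m{C}_{b,c} \Rightarrow (\m{V}_{a,c}\X \Rightarrow \m{V}_{a,b}\X)\bigr)$. Then two applications of axiom K for $\B_a$ finish the argument. From $\B_a\m{C}_{b,c}$ we get $\B_a(\m{V}_{a,c}\X \Rightarrow \m{V}_{a,b}\X)$. From $\m{T}_{a,c}\X = \B_a\m{V}_{a,c}\X$ we then get $\B_a\m{V}_{a,b}\X = \m{T}_{a,b}\X$. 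Finally, discharging the hypotheses with implication introduction and generalizing over $a,b,c$ gives the stated formula.

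The main obstacle is the middle step. Necessity only lets us prefix $\B_a$ to theorems provable from no assumptions. Lemma~\ref{lem:trust_push}, however, is proved from the extra assumptions $\Delta$ (the control axioms and the reflexivity and transitivity of $\m{C}$). We therefore cannot simply apply necessity. I would handle this by treating the assumptions in $\Delta$ as public, i.e.\ as boxed formulas $\Box D$, in line with how the paper calls them ``public assumptions''.

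With that reading, the lemma gives a derivation $D_1,\dots,D_k \vdash \m{C}_{b,c} \Rightarrow (\m{V}_{a,c}\X \Rightarrow \m{V}_{a,b}\X)$, where the $D_i$ are the instances of $\Delta$ that are actually used. By the modal rule of Figure~\ref{Fig:basic} this lifts to $\B_a D_1,\dots,\B_a D_k \vdash \B_a(\dots)$. Each $\B_a D_i$ follows from $\Box D_i$ by the axiom $\Box A \Rightarrow \M A$ (equivalently, via public awareness followed by public verifiability inside $\B_a$).

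If one instead reads $\vdash_\Delta$ as treating the formulas of $\Delta$ as axiom schemes of the logic, then they count as theorems, necessity applies directly, and this step becomes trivial. Either way, I expect only bookkeeping about which reading of $\Delta$ is in force, and the modal reasoning itself is routine.
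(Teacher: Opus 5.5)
Your proposal is correct and follows the paper's proof: you instantiate Lemma~\ref{lem:trust_push}, lift it into the perspective $\m{B}_a$, and apply axiom K twice to pass from $\m{B}_a\m{C}_{b,c}$ and $\m{B}_a\m{V}_{a,c}\X$ to $\m{B}_a\m{V}_{a,b}\X$. You also make explicit a step the paper's one-line proof leaves implicit, namely that the assumptions in $\Delta$ must be available under $\m{B}_a$, either as axioms closed under necessity or as $\Box$-marked public assumptions.
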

\begin{proof}
	Suppose $\m{B}_a\m{C}_{b,c}$ and $\m{T}_{a,c}A = \m{B}_a\m{V}_{a,c}A$ holds for some keys $a,b,c$ and statement $A$. Then $\m{T}_{a,b}A = \m{B}_a\m{V}_{a,b}A$ is true by lifting Lemma~\ref{lem:trust_push} to perspective $\m{B}_a$ using axiom (K).
\end{proof}

\section{Logic Calculi}\label{sec:calculi}

There are many different equivalent ways of formulating a calculus whose power matches our logic.
We highlight mainly three different calculi: 
\begin{enumerate}
	\item A basic intuitionistic sequent calculus for provability, as formulated in Figure \ref{Fig:basic}.
	\item A lambda calculus following the Fitch-style proof systems, as used in for instant dependent modal logics. \cite{Clouston18}
	\item A cut-free sequent calculus for decidable proof search.
\end{enumerate}
We shall briefly describe a version of the second, which is suitable for expressing proofs as terms in a lambda calculus. Note that the logic used in this paper is less powerful then those considered in recent works, e.g. as used in \emph{dependent modal type theory} in \cite{Birkedal,MDTT}. This is in order to facilitate easier proof searching, as is useful when imposing accountability of implied statements, as well as to not unnecessarily introduce unneeded complexity. The logic can of course be extended if needed, possibly sacrificing decidability.

We define \emph{modal contexts} as follows:
\[
\Gamma , \Delta := \varepsilon \mid \Gamma , x : A \mid \Gamma , \{\M\}
\]
These may contain assumptions in the form of statements $A$ with associated name $x$ for proofs.
Additionally, we have modalities $\{\M\}$ expressing a modal shift: e.g. the context $x : A, \{\B_a \}, y : B$ can be read as, supposing $A$ is true, and we consider the perspective of $a$, where we additionally suppose $a$ believes in $B$.

Let $[-]$ be the map from contexts to $\Mes^*$ defined as: $[\varepsilon] = \varepsilon$, $[\Gamma , x : A] = \Gamma$ and $[\Gamma , \{\M\}] = [\Gamma] , \M$.
\begin{figure}
	\[
	\frac{[\Delta] = \varepsilon}{\Gamma , x : A , \Delta \vdash \texttt{var}(x) : A}
	\qquad
	\frac{}{\Gamma \vdash * : \top}
	\qquad 
	\frac{\Gamma \vdash P : \bot}{\Gamma \vdash \texttt{abs}_A(P) : A}
	\qquad
	\frac{\Gamma , \{\M\} \vdash P : A}{\Gamma\vdash \texttt{lock}_{\M}(P) : \M A}
	\]
	\[
	\frac{\Gamma \vdash P : \M A \quad \M \Rrightarrow l \quad [\Delta] = l}{\Gamma , \Delta \vdash \texttt{key}_{\M \Rrightarrow l}(P) : A}
	\qquad
	\frac{\Gamma \vdash P : A \quad \Gamma \vdash Q : A \Rightarrow B}{\Gamma \vdash P \cdot Q : B}
	\qquad 
	\frac{\Gamma , x : A \vdash P : B}{\Gamma \vdash \lambda x:A.P : A \Rightarrow B}
	\]
	\[
	\frac{\Gamma \vdash P : A_1 \wedge A_2}{\Gamma \vdash \pi_i(P) : A_i}
	\qquad
	\frac{\Gamma \vdash P : A_1 \quad \Gamma \vdash Q : A_2}{\Gamma \vdash (P,Q) : A_1 \wedge A_2}
	\]
	\[
	\frac{\Gamma , x : A \vdash P : C \quad \Gamma , y : B \vdash Q : C \quad \Gamma \vdash R : A \vee B}{\Gamma \vdash \texttt{case}_C(R)\{\ip_1(x) \mapsto P, \ip_2(y) \mapsto Q\} : C}
	\qquad
	\frac{\Gamma \vdash P : A_i}{\Gamma \vdash \ip_i(P) : A_1 \vee A_2}
	\]
	\caption{Fitch-style lambda calculus for our logic}
	\label{fig:Fitch}
\end{figure}
See Figure \ref{fig:Fitch} for the Fitch-style lambda calculus.
Comparing this to the literature, here we rename \texttt{mod} to \texttt{lock}, and \texttt{unmod} to \texttt{key}. Furthermore, we do not apply our additional modal axioms on the contexts directly, instead folding those into the \texttt{key} operations. Hence, once one unlocks a term, the modal context gets fixed. Shifting modal contexts now becomes a meta operation on terms, similar to variable substitutions.

As an example, a possible lambda term for the sequent $x : \B_a(\I_{a \ot b} \p \Rightarrow \p) \vdash \I_{a \ot b} \p \Rightarrow \B_a \p$ is\\ $\lambda y : \I_{a \ot b} \p. \tLock{\B_a}{\App{\tKey{\B_a}{\B_a}{\Var{x}}}{\tLock{\I_{a \ot b}}{\tKey{\I_{a \ot b}}{\B_a \cc \I_{a \ot b}}{\Var{y}}}}}$.

\subsection{Meta constructions}

The intuitionistic modal logic forms a foundation for expressing a plethora of concepts.
We can construct a variety of other operations using the above tools.
For instance, with $\bot$ marking absurdity, we can define negation $\neg A$ as $\neg A = A \Rightarrow \bot$ following the standard intuitionistic tradition.

Another useful thing we can define is conjunction and disjunction over finite sets of formulas. Given a finite subset of $S \subseteq_{\text{fin}} \Form$, we define the following:
\begin{itemize}
	\item $\bigwedge S$ is the conjunction over $S$, which holds precisely when all $A \in S$ hold.
	\item $\bigvee S$ is the disjunction over $S$, which holds precisely if some $A \in S$ holds.
\end{itemize}
These are defined inductively, iterating binary conjunctions and disjunctions. In particular, $\bigwedge \emptyset \equiv \top$ and $\bigvee \emptyset \equiv \bot$.

When describing distributed networks, we consider the additional finite set of agents $\Ag$.
Both modalities and tokens may refer to specific agents, marking their perspectives and primitive statements of interest.
As such, formulas can refer to specific agents, and the referenced agent could be used as a parameter.
Given such a parametrized formula $f :  \Ag \to \Form$, and given a subset $S \subseteq \Ag$, we write: 
\begin{itemize}
	\item $\forall_{x \in S} f(x)$ for the universal quantification, equivalent to $\bigwedge \{f(a) \mid a \in S\}$.
	\item $\exists_{x \in S} f(x)$ for the existential quantification, equivalent to $\bigvee \{f(a) \mid a \in S\}$.
\end{itemize}
We will write $\forall_x f(x)$ and $\exists_x f(x)$ in case $S = \Ag$, and $\forall_{P(x)} f(x)$ and $\exists_{P(x)} f(x)$ if $S = \{a \in \Ag \mid P(a)\}$ with $P$ some predicate on agents.

These quantifiers could be expressed using formulas polymorphic over agent variables as well. However, we would like to avoid such unnecessary language generalisations at this point, focusing instead on the underlying theory of trust.

\subsection{Notes on Decidable Proof Search}

Let us briefly address how the decomposability property on axioms facilitates easier proof search.
Note that given decomposability, we can reduce any proof to one only using $\texttt{key}_{\M \Rrightarrow \N}$ and $\texttt{key}_{\M \Rrightarrow \N \cc \M \ominus \N}$ operators. Supposing we want to prove $\N B$ with a context containing $\M A$, then we know it is sufficient to use the formula in context as $\N A$ and/or $\N (\M \ominus \N) A$ depending which axioms are available.

More concretely, proof search may happen in a cut-free variant of the logic extending the usual calculi and their cut elimination proofs \cite{PFENNING}. These calculi use contexts excluding the $\{\M\}$ modal locks, instead defining for each modality $\N$ an explicit operation $\N^{-1}$ on contexts where $\N^{-1}(\Gamma)$ contains $A$ for any $\M A \in \Gamma$ such that $\M \Rrightarrow \N$, and $(\M \ominus \N)A$ for any $\M A \in \Gamma$ such that $\M \Rrightarrow \N \cc \M \ominus \N$.
We then handle modalities with the single rule:
\[
\frac{\M^{-1}(\Gamma) \vdash A}{\Gamma \vdash \M A}
\]
We then establish that the associated calculus has cut elimination, and note that we can associate a subformula order establishing that a proof search must terminate.
We get that if $\Mes$ is finite and $\mathsf{Ax}$ is a reflexive, transitive and decomposable set of axioms, the associated calculus has decidable proof search. Decidability may moreover be derived by proving our axioms as an instance of those systems specified an proven to be decidable in \cite{counter}, though this has not been verified.

This need not mean that the proof search is practical. Moreover, this is sensitive to the chosen axioms; note the need for decomposability and the lack of axioms of the form $\M \N A \Rightarrow \R A$ in our formalism.

\section{Kripke Model}\label{sec:model}

The proof system introduced in the previous section connects our logic to a \emph{categorical model} \cite{MDTT}, describing explicitly the handling and manipulation of proofs.
We shall furthermore consider a \emph{Kripke model} for establishing soundness and giving an interpretation of the formulas in terms of possible worlds which describe who knows and says what.

For the Kripke semantics, we consider a traditional variant \cite{Bozic1984-BOIMFN,Dosen1985ModelsFS} of intuitionistic modal logic, and also briefly consider Simpson's \cite{Simpson94a} version.

\begin{definition}
	A \emph{modal Kripke frame} on $(\Tok, \Mes)$ consists of a quadruple $(W,\leq , S_{-} , R_{-})$ where $W$ is a set of worlds, $\leq$ is a preorder on $W$, $S$ associates to each token $t \in \Tok$ a subset $S_t \subseteq W$ on $W$ and $R$ associates to each modality $\M \in \Mes$ a binary endorelation $R_\M \subseteq W^2$ on $W$ such that:
	\begin{enumerate}
		\item If $v \leq w$ and $v \in S_t$, then $w \in S_t$.
		\item If $v \leq w$ and $w R_\M w'$, then there is a $v' \in W$ such that $v R_\M v'$ and $v' \leq w'$. 
	\end{enumerate}
\end{definition}

At their core, Kripke frames consider possible worlds $W$, where each world not only describes which fundamental statements are true with subsets $S_t$, but also what is communicated and believed. The relation $R_{\B_a}$ for instance describes for each $v$ all possible worlds $w$ agent $a$ thinks they might be in. So if $v R_{\B_a} w \in S_t$, $v R_{\B_a} k \notin S_t$, then in world $v$ agent $a$ is unsure of whether statement $t$ is true, hence $\B_a t$ is not true in $v$.

Last but not least, $\leq$ implements an interpretation of intuitionistic logic. In this situation, statements are not simply true or false. Some statements either \emph{become} true or proven, and some statements \emph{remain} false or unproven. As such, we can think of $\leq$ as a progression of time; the more agents communicate with each other, the more they learn. This progression may be nondeterministic, as multiple distinct worlds may spawn from one world.

Note that in some definitions of Kripke frames, the second condition is removed and instead this property is baked into the interpretation of modal formulas. In our case, some conditions on $R$ cannot be avoided regardless, since they need to accommodate axioms like $\I_{a \ot b} A \Rightarrow \B_a \I_{a \ot b} A$.
Let $\mathsf{Ax}$ be a set of unfolding modal axioms on $\Mes$. 

\begin{definition}
	A modal Kripke frame $(W,\leq,S,R)$ conforms to $\mathsf{Ax}$ if
	$\M \Rrightarrow \N_1 \cc \dots \cc \N_n \in \mathsf{Ax}$ implies $R_{\N_1} ; \dots ; R_{\N_n} \subseteq R_\M$.
\end{definition}

Given a Kripke frame, we define \emph{satisfiability} of a formula $A$ in a world $w \in W$ inductively on formulas:
\begin{itemize}
	\item $w \models \top$, $w \not\models \bot$, and $w \models t$ iff $w \in S_t$
	\item $w \models \M A$ iff $\forall v. (w R_\M v \implies v \models A)$
	\item $w \models A \Rightarrow B$ iff $\forall v. ((w \leq v \wedge v \models A) \implies v \models B)$
	\item $w \models A \wedge B$ iff $w \models A$ and $w \models B$
	\item $w \models A \vee B$ iff $w \models A$ or $w \models B$
\end{itemize}

Some immediate properties:
\begin{itemize}
	\item If $w \models A$ and $w \leq v$ then $v \models A$.
	\item If the frame conforms to $\mathsf{Ax}$ and $\M \Rrightarrow \N_1 \cc \dots \cc \N_n \in \mathsf{Ax}$, then $w \models \M A \Rightarrow \N_1 \dots \N_n A$ for each $A \in \Form$ and $w \in W$.
\end{itemize}

\begin{theorem}
	If $\vdash A$ is provable, then $w \models A$ for any modal Kripke frame $(W,\leq,S,R)$ and world $w \in W´$.
\end{theorem}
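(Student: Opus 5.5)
We prove soundness in the standard way, by induction on the derivation, for the sequent calculus of Figure~\ref{Fig:basic}. The frame must be one that conforms to $\mathsf{Ax}$; the unfolding rule is only sound for conforming frames, so we read the statement with that hypothesis. To make the induction go through, we strengthen the claim to sequents: if $\Gamma \vdash A$ is derivable, then for every world $w$ with $w \models B$ for all $B \in \Gamma$, we have $w \models A$. The theorem is the case $\Gamma = \emptyset$.

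The first preliminary is the persistence property listed above: if $w \models A$ and $w \leq v$ then $v \models A$. We prove it by induction on $A$.
\begin{itemize}
	\item Tokens follow from condition 1 of frames.
	\item Implication follows from transitivity of $\leq$.
	\item Conjunction and disjunction are immediate.
	\item For $\M A$, assume $w \models \M A$, $w \leq v$ and $v R_\M v'$. Condition 2 gives $w'$ with $w R_\M w'$ and $w' \leq v'$. Then $w' \models A$, and the induction hypothesis gives $v' \models A$.
\end{itemize}

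Next we check each rule of Figure~\ref{Fig:basic}.
\begin{itemize}
	\item Axiom, $\top$, $\bot$-elimination and the conjunction and disjunction rules are immediate from the clauses. Disjunction elimination uses that $w$ satisfies $A$ or $B$ and applies the matching premise at $w$ itself.
	\item For modus ponens, we instantiate the implication clause at $v = w$, using reflexivity of $\leq$.
	\item For implication introduction, take $w \leq v$ with $v \models A$. By persistence, $v$ satisfies $\Gamma$, so the premise gives $v \models B$.
	\item For the modal rule, assume $w \models \M A_i$ for each $i$ and $w R_\M v$. Then $v \models A_i$ for all $i$. The induction hypothesis applied to $A_1,\dots,A_n \vdash B$ at $v$ gives $v \models B$, hence $w \models \M B$. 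This is where the strengthened, context-relative form of the claim is needed, since the side premise is evaluated at a different world.
	\item For the unfolding rule with $\M \Rrightarrow \N_1 \cc \dots \cc \N_n$, assume $w \models \M A$ and a path $w R_{\N_1} \cdots R_{\N_n} v$. Conformance puts $(w,v) \in R_\M$, so $v \models A$. Unwinding the modal clauses $n$ times gives $w \models \N_1 \cdots \N_n A$.
\end{itemize}

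The main obstacle is the modal case of persistence. It is the only place where condition 2 on frames is used, and the quantifier order must be handled carefully, with $v'$ chosen first and $w'$ produced from it. We also need conformance for the chosen axiom set, in particular for $\Box$ and the inheritance axioms. If the theorem is meant for the closure $\widehat{\mathsf{Ax}}$, we note that reflexive and transitive closure preserves conformance: relational composition is monotone, and a composite unfolds to a composite of relations included in $R_\M$. So by the earlier lemma on equivalence of $\mathcal{L}_{\mathsf{Ax}}$ and $\mathcal{L}_{\widehat{\mathsf{Ax}}}$, nothing further is needed.
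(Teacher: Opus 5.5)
Your proof is correct and is the standard soundness argument the paper relies on. The paper states this theorem without proof; it lists persistence and the validity of unfolding axioms in conforming frames as ``immediate properties'' just before it, and these are exactly your two ingredients. You are also right that the hypothesis that the frame conforms to $\mathsf{Ax}$ must be added to the statement. Without it even $\B_a A \Rightarrow \B_a\B_a A$ fails: on a chain $w \, R_{\B_a} \, v \, R_{\B_a} \, u$ with a token true only at $v$, $w$ satisfies $\B_a t$ but not $\B_a\B_a t$.
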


\section{Conclusions}\label{sec:conclusions}

We end this paper with some final considerations.

\subsection{Regarding Privacy}

In this paper, we have not considered \emph{privacy}. We have considered responsiveness based on consent, for instance with the wish modality, where an authority has promised to share some fact after being asked to do so.
However, this does not prevent the same authority from sharing this fact without consent of the owner of the information.

Privacy specifically considers statements about entities \emph{not} believing or receiving messages about some sensitive piece of data.  Adding privacy may be an interesting next step, which would involve focusing more on negative statements, and would potentially necessitate adding further modalities.

\subsection{Universal Knowledge}

Both the modalities $\B_{\Omega}$ and $\Box$ describe a kind of public knowledge. The main difference is that we have $\Box X \Rightarrow X$, but not for $\B_{\Omega}$. 
For instance if we can prove $\B_{\Omega}(A_1 \wedge \dots \wedge A_n \Rightarrow B)$ we know that everyone (except us) can prove $A_1 , \dots , A_n \vdash B$. 

The $\Box$ can be used to make universally held assumptions. This particular use of the $\Box$ modality goes back to early proof theory by Gentzen \cite {Gentzen1935}, and is inspired by the work on \emph{dual-context modal logic} developed in \cite{Kavvos17}.
In the latter, a second context is used to describe formulas which are marked by the $\Box$ modality, and an equivalence is established between dual-context modal logic and modal logic with $\Box$.
We could further generalize the logic of this paper to use $\Box$ for marking public statements.
In order for this new modality to describe public knowledge, we need to add the following extra assumptions to the logic:
\begin{itemize}
	\item $\Box A \Rightarrow A$.
	\item $\Box A \Rightarrow \m{M}\Box A$ for any modality $\m{M}$.
\end{itemize}
As a result, $\Box$ satisfies the usual K4 axioms from modal logic.

Using the $\Box$ modality would allow one to reason about public statements internally, an agent may reason about consequences of releasing public statements.
However, it seems counter intuitive to think of public statements as anything but universally known, and hence we keep the fact whether a statement is public completely external in this paper. 

One could also generalise the public modality further.
It might be useful to instead specify different scopes of knowledge and groups of entities, using a preorder on modalities: $\m{M} \leq \m{N}$ if $\m{N}$ is aware of everything known by the perspective $\m{M}$. This can be expressed with axioms:
\begin{itemize}
	\item $\m{M}A \Rightarrow \m{N}A$
	\item $\m{M}A \Rightarrow \m{N}\m{M}A$.
\end{itemize}
An exploration of such systems is subject to future research.

\subsection{Evaluating Risk of Trust}

Trust may not always result in certainty. There is always a risk when building ones guarantees based on statements from other entities. As such, it is worthwhile to determine some risk related to a proof. Though this may be done by generalizing to probabilistic modal logic, a lot could already be achieved by simply analyzing proofs within the current logic. The fact of the matter is, that assuming sufficient trust, certain guarantees can be made. The risk is the trust assumptions themselves.

As such, one can collect different proofs of the same guarantee, and analyze the sets of trust assumptions necessary to make the proof. Then one could associate a certain level of risk to each assumption, depending on who needs to be trusted to get the result. This could be a specific risk (e.g. 5 percent), or some unspecified constant risk $\varepsilon$. For instance, in the 2-3 threshold example, associating a $5$ percent risk to trusting each of the three CAs gets us a total failure risk of only $0.725$ percent, which is a significantly reduced risk. 

\bibliographystyle{plain}
\bibliography{biblio}

\appendices 

\newpage

\section{Operational Semantics for Modal Lambda Calculus}
In this appendix, we expand on the semantics of our lambda calculus.
For $l \in \Mes^*$, we inductively define $\Gamma , \{l\}$ as:
\begin{itemize}
	\item $\Gamma , \{\} = \Gamma$,
	\item $\Gamma , \{\M , l\} = \Gamma , \{\M\} , \{l\}$.
\end{itemize}

First of all, we have weakening in two directions:
\begin{itemize}
	\item If $\Gamma , \Delta \vdash P : A$ then $\Gamma , x : A , \Delta \vdash P : A$.
	\item If $\Gamma \vdash P : A$ then $\{\M\} , \Gamma \vdash P : A$.
\end{itemize}

We write $l \Rrightarrow r_1 \mid \dots \mid r_n$ if $l = \M_1 \cc \dots \cc \M_n$ and $\M_i \Rrightarrow r_i$ for each $i$.
Given $[\Gamma] \Rrightarrow l_1 \mid \dots \mid l_n$ we can define the substitution $\Gamma\langle l_1 \mid \dots \mid l_n\rangle$ as: $\varepsilon\langle \rangle = \varepsilon$, and $(\Gamma , x : A)\langle l_1 \mid \dots \mid l_n\rangle = \Gamma\langle l_1 \mid \dots \mid l_n\rangle, x : A$, and $(\Gamma , \{\M\})\langle l_1 \mid \dots \mid l_n\rangle = \Gamma\langle l_1 \mid \dots \mid l_{n-1}\rangle , \{l_n\}$.
Then $[\Gamma\langle l_1 \mid \dots \mid l_n\rangle] = l_1,\dots,l_n$.
This modal substitution can be extended to terms as well, allowing for context shifting.
\begin{lemma}
	If $[\Gamma] \Rrightarrow l_1 \mid \dots \mid l_n$ and $\Gamma \vdash P : A$ then $\Gamma\langle l_1 \mid \dots \mid l_n\rangle \vdash P\langle l_1 \mid \dots \mid l_n\rangle : A$ for some $P\langle l_1 \mid \dots \mid l_n\rangle$.
\end{lemma}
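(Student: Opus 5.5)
We prove the statement by structural induction on the derivation of $\Gamma \vdash P : A$ in the calculus of Figure~\ref{fig:Fitch}, defining $P\langle l_1 \mid \dots \mid l_n\rangle$ simultaneously by recursion on $P$. Throughout we rely on $\mathsf{Ax}$ being reflexive and transitive, which we may assume by the earlier lemma that $\mathcal{L}_{\mathsf{Ax}}$ and $\mathcal{L}_{\widehat{\mathsf{Ax}}}$ are equivalent. Write $\sigma = \langle l_1 \mid \dots \mid l_n\rangle$ for the substitution, with $[\Gamma] = \M_1 \cc \dots \cc \M_n$ and $\M_i \Rrightarrow l_i$.

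The structural cases go through directly, since they do not touch the lock structure of the context.
\begin{itemize}
\item For $*$, $\texttt{abs}$, application, $\lambda$, pairs, projections, injections and \texttt{case}, we substitute componentwise. This works because $\sigma$ acts on $\Gamma , x : A$ as $\Gamma\sigma , x : A$, so $[\Gamma , x:A] = [\Gamma]$ and the premises carry the same substitution.
\item For $\texttt{var}(x)$, the side condition $[\Delta] = \varepsilon$ means $x$ sits after the last lock. It therefore stays after the last lock in $\Gamma\sigma$, and we set $\texttt{var}(x)\sigma = \texttt{var}(x)$.
\item For $\texttt{lock}_\M(P)$, the premise lives in $\Gamma , \{\M\}$, whose lock list is $[\Gamma]\cc\M$. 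We extend $\sigma$ with the reflexive axiom $\M \Rrightarrow \M$, apply the induction hypothesis, and relock.
\end{itemize}

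The real work is the \texttt{key} case. Here the conclusion is $\Gamma , \Delta \vdash \tKey{\M}{l}{P} : A$ with $\Gamma \vdash P : \M A$, $\M \Rrightarrow l$ and $[\Delta] = l$. The substitution $\sigma$ on $[\Gamma , \Delta] = [\Gamma] \cc l$ splits into two parts:
\begin{itemize}
\item a part $\sigma_1$ acting on $[\Gamma]$, which we push into $P$ using the induction hypothesis, giving $\Gamma\sigma_1 \vdash P\sigma_1 : \M A$;
\item a part $\sigma_2 = \langle r_1 \mid \dots \mid r_k\rangle$ acting on $l = \N_1 \cc \dots \cc \N_k$, with $\N_j \Rrightarrow r_j$.
\end{itemize}
Since $\Delta\sigma_2$ has lock list $r_1 \cc \dots \cc r_k$, applying transitivity $k$ times to $\M \Rrightarrow \N_1 \cc \dots \cc \N_k$ gives $\M \Rrightarrow r_1 \cc \dots \cc r_k \in \widehat{\mathsf{Ax}}$. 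We then define $(\tKey{\M}{l}{P})\sigma = \texttt{key}_{\M \Rrightarrow r_1\cdots r_k}(P\sigma_1)$, and the \texttt{key} rule applies to $\Gamma\sigma_1 , \Delta\sigma_2$.

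This case is where the main obstacle sits: correctly decomposing $\sigma$ along the split $\Gamma , \Delta$. In particular, the variables interleaved inside $\Delta$ must be carried along, which needs an auxiliary fact that $(\Gamma,\Delta)\sigma = \Gamma\sigma_1 , \Delta\sigma_2$. We prove this fact by a small induction on $\Delta$.

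Weakening, used implicitly in the \texttt{key} rule for the variables in $\Delta$, is preserved because the substitution does not alter formula hypotheses.
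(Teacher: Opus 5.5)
Your proposal is correct and is essentially the paper's proof: the substituted term is defined by recursion on $P$, the \texttt{lock} case extends the substitution with the reflexive axiom $\M \Rrightarrow \M$ (the paper's $P\langle l_1 \mid \dots \mid l_n \mid \M\rangle$), and the \texttt{key} case splits the substitution along $\Gamma , \Delta$ into a part pushed into $P$ and a part $r_1,\dots,r_k$ on $[\Delta]$, using transitivity to get $\M \Rrightarrow r_1 \cc \dots \cc r_k$. One small nitpick: reflexivity and transitivity must hold for the axiom set that indexes the \texttt{key} rule itself, which the paper takes as a standing assumption in its appendix, and this is not literally what the equivalence of $\mathcal{L}_{\mathsf{Ax}}$ and $\mathcal{L}_{\widehat{\mathsf{Ax}}}$ gives you.
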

\begin{proof}
	We define $P\langle l_1 \mid \dots \mid l_n\rangle$ by induction:
	\begin{itemize}
		\item $\texttt{var}(x)\langle l_1 \mid \dots \mid l_n\rangle = \texttt{var}(x)$.
		\item $*\langle l_1 \mid \dots \mid l_n\rangle = *$.
		\item $\texttt{abs}_A(P)\langle l_1 \mid \dots \mid l_n\rangle = \texttt{abs}_A(P\langle l_1 \mid \dots \mid l_n\rangle)$.
		\item 
		$\tLock{\M}{P}\langle l_1 \mid \dots \mid l_n \rangle = \tLock{\M}{P\langle l_1 \mid \dots \mid l_n \mid \M \rangle}$.
		\item Suppose $\Gamma \vdash P : \M A$, $\M \Rrightarrow r$, $[\Delta] = r$ and $[\Gamma , \Delta] \Rrightarrow l_1 \mid \dots \mid l_n$.
		We can divide the last statement into $[\Gamma] \Rrightarrow l_1 \mid \dots \mid l_i$ and $[\Delta] = r \Rrightarrow l_{i+1} \mid \dots \mid l_n$.
		By transitivity, $\M \Rrightarrow l_{i+1} , \dots , l_n$, so we can define
		$\tKey{\M}{r}{P}\langle l_1 \mid \dots \mid l_n \rangle = \tKey{\M}{l_{i+1} \cc \dots \cc l_n}{P\langle l_1 \mid \dots \mid l_i \rangle}$.
	\end{itemize}
\end{proof}
We can also perform partial modal substitutions. For instance, if $[\Gamma] = \M_1,\dots,\M_n,[\Delta]$ and $[\Delta] \Rrightarrow l_1 \mid \dots \mid l_m$, then we can write $(-)\langle l_1 \mid \dots \mid l_n \rangle$ as applied to $\Gamma$ and terms in context $\Gamma$ as a shorthand for $(-)\langle \M_1 \mid \dots \mid \M_n \mid l_1 \mid \dots \mid l_n \rangle$.

Similarly, we can define variable substitution. Given $\Gamma , x : A , \Delta \vdash P : B$ and $\Gamma \vdash Q : A$ there is a $P[Q / x]$ such that $\Gamma , \Delta \vdash P[Q / x] : B$.
This implements a cut rule.
We get the following normalisation procedure, which tidies up the terms and hence the proofs.
\begin{itemize}
	\item $\tKey{\M}{l}{\tLock{\M}{P}} \rightsquigarrow P\langle l \rangle$.
	\item $(\lambda x : A.P) \cdot Q \rightsquigarrow P[Q / x]$.
	\item $\pi_i((P_1 , P_2)) \rightsquigarrow P_i$.
	\item $\texttt{case}_C(\ip_i(R))\{\ip_1(x_1) \mapsto P_1, \ip_2(x_2) \mapsto P_2\} \rightsquigarrow P_i[R/x_i]$.
	\item $\texttt{abs}_A(\texttt{abs}_\bot(P)) \rightsquigarrow \texttt{abs}_A(P)$.
	\item $\texttt{abs}_{A \Rightarrow B}(P) \cdot Q \rightsquigarrow \texttt{abs}_{B}(P)$.
	\item $\pi_i(\texttt{abs}_{A_1 \wedge A_2}(P)) \rightsquigarrow \texttt{abs}_{A_i}(P)$.
	\item $\texttt{case}_C(\texttt{abs}_{A \vee B}(R)\{\dots\}) \rightsquigarrow \texttt{abs}_C(R)$.
\end{itemize}

\section{Proof of Decidability}

Suppose we have a finite set of modalities $\Mes$ and a reflexive, transitive and decomposable set of axioms $\mathsf{Ax}$.
We shall write $\M \Rrightarrow l$ to mean that $\M \Rrightarrow l \in \mathsf{Ax}$.
We take the partial map $\ominus : \Mes^2 \rightharpoonup \Mes$, where $\M \ominus \N$ is defined as a witness to the decomposability property. So $\M \Rrightarrow \N \cc (\M \ominus \N)$ if and only if $\M \ominus \N$ is defined, and if $\M \Rrightarrow \N \cc l$ with $l$ having at least one element, then $\M \ominus \N$ is defined and $\M \ominus \N \Rrightarrow l$.
We write $\M \ominus \N \downarrow$ if it is defined, and $\M \ominus \N \uparrow$ if it is not.

A \emph{flat context} $\Gamma$ is given by a list of formulas. In other words, it is a context without any $\{\M \}$ in it.
For any modality $\M$, we define a function $(\_) \ominus \M$ on flat contexts (which is written as $\M^{-1}(-)$ in the main body of the paper), where:
\begin{itemize}
	\item $() \ominus \N = ()$.
	\item $(\Gamma, \M B) \ominus \N =$ \\
	$\begin{cases}
		\Gamma \ominus \N , B , (\M \ominus \N)B & \text{if } \M \Rrightarrow \N, \text{ and } \M \ominus \N \downarrow\\
		\Gamma \ominus \N , B  & \text{if } \M \Rrightarrow \N, \text{ and } \M \ominus \N \uparrow \\
		\Gamma \ominus \N , (\M \ominus \N)B & \text{if } \neg \M \Rrightarrow \N, \text{ and } \M \ominus \N \downarrow \\
		\Gamma \ominus \N  & \text{if } \neg \M \Rrightarrow \N, \text{ and } \M \ominus \N \uparrow\\
	\end{cases}$
	\item $(\Gamma, B) \ominus \N = \Gamma \ominus \N$ for any other $B$.
\end{itemize}

We write $\Gamma \subseteq \Delta$ if any formula of $\Gamma$ is in $\Delta$. We write $\Gamma \sqsubseteq \Delta$ if any formula of $\Gamma$ is either in $\Delta$, or of the form $\M C$ with $\Delta$ containing $\N C$ for some $\N$ such that $\N \Rrightarrow \M$.
This is capturing the fact that a formula $\N C$ is more general then $\M C$ if $\N \Rrightarrow \M$. Hence in $\Gamma \sqsubseteq \Delta$, $\Delta$ is a stronger set of assumptions than $\Gamma$. By transitivity of $\mathsf{Ax}$, $\sqsubseteq$ is transitive.

\begin{lemma}\label{lem:vk-shift}
	If $\Gamma \sqsubseteq \Delta$, then $\Gamma \ominus \M \sqsubseteq \Delta \ominus \M$
\end{lemma}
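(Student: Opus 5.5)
The plan is a direct check, formula by formula. The key observation is that the operation $(\_) \ominus \M$ acts on each formula of a flat context separately: each formula $F$ of $\Gamma$ contributes a small set of formulas $F \ominus \M$ to $\Gamma \ominus \M$, and this set depends only on $F$ and $\M$. So it suffices to show the following: for every $F$ in $\Gamma$ and every formula $G$ contributed by $F$, $G$ is either in $\Delta \ominus \M$ or dominated by some formula of $\Delta \ominus \M$ in the sense of $\sqsubseteq$.

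Fix $F \in \Gamma$. By $\Gamma \sqsubseteq \Delta$ there are two cases. If $F \in \Delta$, then everything $F$ contributes also appears in $\Delta \ominus \M$, because the contribution depends only on $F$, and we are done. Otherwise $F = \mathcal{K} C$ and $\Delta$ contains $\N C$ with $\N \Rrightarrow \mathcal{K}$. If $F$ is not modal, it contributes nothing, so assume $F = \mathcal{K} C$ and look at its at most two contributions.

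First, $F$ contributes $C$ if $\mathcal{K} \Rrightarrow \M$. By transitivity of $\mathsf{Ax}$, $\N \Rrightarrow \mathcal{K}$ and $\mathcal{K} \Rrightarrow \M$ give $\N \Rrightarrow \M$, so $\N C$ contributes $C$ to $\Delta \ominus \M$.

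Second, $F$ contributes $(\mathcal{K} \ominus \M) C$ if $\mathcal{K} \ominus \M$ is defined. Then $\mathcal{K} \Rrightarrow \M \cc (\mathcal{K} \ominus \M)$. Composing with $\N \Rrightarrow \mathcal{K}$ by transitivity gives $\N \Rrightarrow \M \cc (\mathcal{K} \ominus \M)$, whose tail after $\M$ is nonempty. The defining property of $\ominus$ from decomposability then says that $\N \ominus \M$ is defined and that $\N \ominus \M \Rrightarrow \mathcal{K} \ominus \M$. Hence $\N C$ contributes $(\N \ominus \M) C$ to $\Delta \ominus \M$, and this dominates $(\mathcal{K} \ominus \M) C$ in the sense of $\sqsubseteq$.

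The only delicate step is this second one. There we must use the universal property of the chosen witness $\ominus$, not merely the existence of some decomposition, to get $\N \ominus \M \Rrightarrow \mathcal{K} \ominus \M$. This works because the witness is fixed as in the appendix's definition, where $\N \Rrightarrow \M \cc l$ with $l$ nonempty implies $\N \ominus \M \Rrightarrow l$; taking $l = \mathcal{K} \ominus \M$ gives exactly what is needed. The remaining cases follow by unfolding the four-way case split in the definition of $(\Gamma, \M B) \ominus \N$.
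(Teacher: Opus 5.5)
Your proof is correct and takes the same approach as the paper. Both check each formula separately, use transitivity for the contributed $C$, and use the universal property of $\ominus$ (via $\N \Rrightarrow \M \cc (\mathcal{K}\ominus\M)$) to get $\N\ominus\M \Rrightarrow \mathcal{K}\ominus\M$. The only cosmetic difference is that you handle the case $F \in \Delta$ separately, whereas the paper folds it in by letting the dominating modality coincide with the original one, relying implicitly on reflexivity.
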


\begin{proof}
	For $C \in (\Gamma \ominus \M)$, either $C = C'$, $\N C' \in \Gamma$ with $\N \Rrightarrow \M$, or $C = (\N \ominus \M)C'$ and $\N C' \in \Gamma$. Regardless of the case, $\Delta$ has $\R C'$ such that $\R \Rrightarrow \N$ (note that $\R$ could be $\N$).
	\begin{itemize}
		\item If $C = C'$, and $\N C' \in \Gamma$ with $\N \Rrightarrow \M$. By transitivity, $\R \Rrightarrow \M$, hence $C = C' \in (\Delta \ominus \M)$.
		\item If $C = (\N \ominus \M)C'$, and $\N C' \in \Gamma$, then $\R \Rrightarrow \M \cc (\N \ominus \M)$, hence $\R \ominus \M$ is defined and $\R \ominus \M \Rrightarrow \N \ominus \M$ by definition. Hence $(\R \ominus \M)C' \in (\Delta \ominus \M)$, which covers for $C = (\N \ominus \M)C'$.
	\end{itemize}
\end{proof}

\begin{lemma}\label{lem:rel-shift}
	If $\M \Rrightarrow \N$, then $(\Gamma \ominus \M) \sqsubseteq (\Gamma \ominus \N)$.
\end{lemma}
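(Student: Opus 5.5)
The plan is to unfold the definition of $\Gamma \ominus \M$ formula by formula, as in the proof of Lemma~\ref{lem:vk-shift}. For each $C \in \Gamma \ominus \M$ I will exhibit either $C$ itself in $\Gamma \ominus \N$, or a formula $\R' C'$ in $\Gamma \ominus \N$ with $\R' \Rrightarrow \R''$ and $C = \R'' C'$. Only transitivity and decomposability of $\mathsf{Ax}$ are needed.

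By definition of $(\_)\ominus\M$, every $C \in \Gamma \ominus \M$ arises from some modal formula $\R B \in \Gamma$ in one of two ways. Either $C = B$ with $\R \Rrightarrow \M$, or $C = (\R \ominus \M)B$ with $\R \ominus \M$ defined. Non-modal formulas of $\Gamma$ contribute nothing, so these are the only cases, and I treat them separately.

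\emph{Case $C = B$ with $\R \Rrightarrow \M$.} The single-element list $\M$ in the axiom $\R \Rrightarrow \M$ can be rewritten using $\M \Rrightarrow \N$. By transitivity of $\mathsf{Ax}$ this gives $\R \Rrightarrow \N$. Hence the defining clause for $(\Gamma, \R B)\ominus \N$ places $B$ into $\Gamma \ominus \N$, so $C \in \Gamma \ominus \N$.

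\emph{Case $C = (\R \ominus \M)B$.} Here $\R \Rrightarrow \M \cc (\R\ominus\M)$. Replacing the first entry $\M$ by $\N$ via transitivity gives $\R \Rrightarrow \N \cc (\R \ominus \M)$, a list of length at least two starting with $\N$. Decomposability, in the form recorded for the partial map $\ominus$, then gives two facts: $\R \ominus \N$ is defined, and $\R \ominus \N \Rrightarrow \R \ominus \M$. The first fact puts $(\R\ominus\N)B$ into $\Gamma \ominus \N$. The second fact means this formula covers $C = (\R\ominus\M)B$ in the sense of $\sqsubseteq$.

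Combining the two cases gives $(\Gamma \ominus \M) \sqsubseteq (\Gamma \ominus \N)$. The only step needing care is the second case. There one must check that the transitivity closure really produces an axiom of the shape $\R \Rrightarrow \N \cc l$ with $l$ non-empty. That is exactly the hypothesis under which the paper's convention for $\ominus$ guarantees both definedness of $\R \ominus \N$ and the implication $\R \ominus \N \Rrightarrow l$. I expect no further obstacle, since the argument mirrors the second bullet of the proof of Lemma~\ref{lem:vk-shift}.
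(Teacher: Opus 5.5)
Your proof is correct and matches the paper's argument: the same two cases, with transitivity giving $\R \Rrightarrow \N$ (respectively $\R \Rrightarrow \N \cc (\R\ominus\M)$). The decomposability convention for $\ominus$ then yields $(\R\ominus\N)B \in \Gamma\ominus\N$ with $\R\ominus\N \Rrightarrow \R\ominus\M$, which covers $(\R\ominus\M)B$.
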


\begin{proof}
	For $C \in (\Gamma \ominus \M)$, then we have two cases:
	\begin{itemize}
		\item If $C = C'$, $\R C' \in \Gamma$ with $\R \Rrightarrow \M$. By transitivity, $\R \Rrightarrow \N$, hence $C = C' \in (\Gamma \ominus \N)$.
		\item If $C = (\R \ominus \M)C'$ and $\R C' \in \Gamma$. By transitivity, $\R \Rrightarrow \N \cc (\R \ominus \M)$, hence $(\R \ominus \N)$ is defined and $(\R \ominus \N) \Rrightarrow (\R \ominus \M)$. So $(\R \ominus \N)C' \in (\Gamma \ominus \N)$ which covers for $C = (\R \ominus \M)C'$.
	\end{itemize}
\end{proof}

\begin{lemma}\label{lem:split-inc}
	If $\M \ominus \N$ is defined, then $(\Gamma \ominus \M) \sqsubseteq ((\Gamma \ominus \N) \ominus (\M \ominus \N))$.
\end{lemma}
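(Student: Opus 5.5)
The plan is to follow the pattern of Lemmas~\ref{lem:vk-shift} and~\ref{lem:rel-shift}. I would go through the elements of $\Gamma \ominus \M$ one at a time. For each one I would exhibit a formula of $(\Gamma \ominus \N) \ominus (\M \ominus \N)$ that covers it in the sense of $\sqsubseteq$: either the same formula, or $\R' C$ where the element is $\R C$ and $\R' \Rrightarrow \R$.

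By definition of $(\_)\ominus\M$, only modal formulas $\R C \in \Gamma$ contribute anything, and each such formula contributes at most two elements. Throughout I use the fact that $\M \ominus \N$ is defined, which gives $\M \Rrightarrow \N \cc (\M \ominus \N)$. I also use the universal property of $\ominus$ stated at the start of the appendix: if $\mathcal{X} \Rrightarrow \mathcal{Y} \cc l$ with $l$ non-empty, then $\mathcal{X} \ominus \mathcal{Y}$ is defined and $\mathcal{X} \ominus \mathcal{Y} \Rrightarrow l$.

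\emph{Case 1: the element is $C$, coming from $\R C \in \Gamma$ with $\R \Rrightarrow \M$.} Composing with $\M \Rrightarrow \N \cc (\M\ominus\N)$, transitivity gives $\R \Rrightarrow \N \cc (\M \ominus \N)$. By the universal property, $\R \ominus \N$ is defined and $\R \ominus \N \Rrightarrow \M \ominus \N$. So $(\R\ominus\N)C \in \Gamma \ominus \N$. Applying $(\_)\ominus(\M\ominus\N)$ to this formula, the first clause of the definition applies because $\R\ominus\N \Rrightarrow \M\ominus\N$, and so $C \in (\Gamma\ominus\N)\ominus(\M\ominus\N)$.

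\emph{Case 2: the element is $(\R\ominus\M)C$, coming from $\R C \in \Gamma$ with $\R \Rrightarrow \M \cc (\R \ominus \M)$.} Here I substitute $\M \Rrightarrow \N \cc (\M\ominus\N)$ for the $\M$ in this axiom. Transitivity then gives $\R \Rrightarrow \N \cc (\M\ominus\N) \cc (\R\ominus\M)$. Decomposability yields that $\R \ominus \N$ is defined and that $\R\ominus\N \Rrightarrow (\M\ominus\N)\cc(\R\ominus\M)$. Since the tail $\R\ominus\M$ is non-empty, applying the universal property once more gives that $(\R\ominus\N)\ominus(\M\ominus\N)$ is defined and satisfies $(\R\ominus\N)\ominus(\M\ominus\N) \Rrightarrow \R\ominus\M$. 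Now $(\R\ominus\N)C \in \Gamma\ominus\N$, so $((\R\ominus\N)\ominus(\M\ominus\N))C \in (\Gamma\ominus\N)\ominus(\M\ominus\N)$. This formula covers $(\R\ominus\M)C$ under $\sqsubseteq$.

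The only delicate point is Case 2. There one has to invoke decomposability twice, the second time on the derived modality $\R\ominus\N$. For this it matters that the witness $\ominus$ has the universal property for \emph{every} unfolding with non-empty tail, not only that some factorisation exists. So I would make sure to use that formulation explicitly, and to note that transitivity of $\mathsf{Ax}$ justifies substituting into the middle of an axiom. Everything else is bookkeeping on the four clauses defining $(\_)\ominus\N$.
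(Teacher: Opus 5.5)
Your proposal is correct and follows the paper's proof essentially step for step. It uses the same two cases (an element $C$ arising from $\R C$ with $\R \Rrightarrow \M$, and an element $(\R\ominus\M)C$), the same transitivity composition with $\M \Rrightarrow \N \cc (\M\ominus\N)$, and the same double use of the universal property of $\ominus$, first on $\R$ and then on $\R\ominus\N$, in the second case.
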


\begin{proof}
	For $C \in (\Gamma \ominus \M)$, then we have two cases:
	\begin{itemize}
		\item If $C = C'$, $\R C' \in \Gamma$ with $\R \Rrightarrow \M$. By transitivity, $\R \Rrightarrow \N \cc \M \ominus \N$, 
		hence $\R \ominus \N$ exists and $(\R \ominus \N) \Rrightarrow (\M \ominus \N)$. So, $(\R \ominus \N)C' \in ((\Gamma \ominus \N)$, and $C' \in ((\Gamma \ominus \N) \ominus (\M \ominus \N))$.
		\item If $C = (\R \ominus \M)C'$ and $\R C' \in \Gamma$. By transitivity, $\R \Rrightarrow \N \cc \M \ominus \N \cc \R \ominus \M$.
		Hence $\R \ominus \N$ exists and $(\R \ominus \N) \Rrightarrow \M \ominus \N \cc \R \ominus \M$. Hence $(\R \ominus \N) \ominus (\M \ominus \N)$ exists, and $(\R \ominus \N) \ominus (\M \ominus \N) \Rrightarrow \R \ominus \M$.
		So $(\R \ominus \N)C' \in (\Gamma \ominus \N)$ and $((\R \ominus \N) \ominus (\M \ominus \N))C' \in ((\Gamma \ominus \N) \ominus (\M \ominus \N))$ which covers for $C = (\R \ominus \M)C'$.
	\end{itemize}
\end{proof}

\subsection{The sequent calculus}

\begin{figure}
	\[
	\frac{
	}{
		\Gamma , t , \Delta \vdash t
	}(Var)
	\qquad
	\frac{
		\Gamma , A \vdash B
	}{
		\Gamma \vdash A \Rightarrow B
	}(ImpR)
	\qquad
	\frac{
		\Gamma , A \Rightarrow B \vdash A \qquad \Gamma , A \Rightarrow B , B \vdash D
	}{
		\Gamma , A \Rightarrow B \vdash D
	}(ImpL)
	\]
	\[
	\frac{
		\Gamma \ominus \M \vdash A
	}{
		\Gamma \vdash \M A
	}(ModR)
	\qquad
	\frac{}{\Gamma \vdash \top}(TopR)
	\qquad 
	\frac{}{\Gamma , \bot , \Delta  \vdash A}(BotL)
	\]
	\[
	\frac{
		\Gamma , A_1 \wedge A_2 , A_1   \vdash B
	}{
		\Gamma , A_1 \wedge A_2   \vdash B
	}(AndL1)
	\quad
	\frac{
		\Gamma , A_1 \wedge A_2 , A_2   \vdash B
	}{
		\Gamma , A_1 \wedge A_2  \vdash B
	}(AndL2)
	\qquad
	\frac{
		\Gamma \vdash B_1 \qquad \Gamma \vdash B_2
	}{
		\Gamma \vdash B_1 \wedge B_2
	}(AndR)
	\]
	\[
	\frac{
		\Gamma , A_1 \vee A_2 , A_1  \vdash B \qquad \Gamma , A_1 \vee A_2 , A_2  \vdash B 
	}{
		\Gamma , A_1 \vee A_2   \vdash B
	}(OrL)
	\qquad
	\frac{
		\Gamma \vdash B_1
	}{
		\Gamma \vdash B_1 \vee B_2
	}(OrR1)
	\qquad
	\frac{
		\Gamma \vdash B_2
	}{
		\Gamma \vdash B_1 \vee B_2
	}(OrR2)
	\]
	\caption{Decidable Sequent Calculus}
	\label{Fig:seq}
\end{figure}

Figure \ref{Fig:seq} gives the sequent calculus, showing proof rules to determine which sequent of the form $\Gamma \vdash A$, with $\Gamma$ a flat context, are provable.

\begin{proposition}[Structural weakening]\label{prop:struct}
	Suppose $\D$ gives a proof of $\Gamma \vdash C$, and $\Gamma \sqsubseteq \Delta$, then there is a proof $\D'$ of $\Delta \vdash C$, where $\D'$ has the same shape of $\D$.
\end{proposition}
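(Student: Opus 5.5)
We argue by induction on the structure (height) of $\D$, constructing $\D'$ rule by rule, so that $\D'$ ends with the same rule as $\D$ and has premises obtained from those of $\D$ by the induction hypothesis. The relation $\sqsubseteq$ is set-like: it only asks that every formula of the left context is matched in the right context, either literally or by a formula $\N C$ whose modality unfolds to the required one. Two facts will be used throughout. First, $\sqsubseteq$ is preserved by appending the same formula on both sides: if $\Gamma \sqsubseteq \Delta$ then $\Gamma, B \sqsubseteq \Delta, B$. Second, it is reflexive, by reflexivity of $\mathsf{Ax}$.

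The axiom-like rules are handled first. For (TopR) there is nothing to do. For (Var) and (BotL) the principal formula is a token $t$ or $\bot$. Neither has the form $\M C$, so $\Gamma \sqsubseteq \Delta$ forces the same $t$ (resp. $\bot$) to occur literally in $\Delta$, and the rule applies again.

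The propositional rules are handled next. For the right rules (ImpR, AndR, OrR1, OrR2), apply the induction hypothesis to the premises with context $\Gamma, A$ (resp. $\Gamma$), using that $\Gamma, A \sqsubseteq \Delta, A$. For the left rules (ImpL, AndL$i$, OrL), the principal formula $A \Rightarrow B$, $A_1 \wedge A_2$ or $A_1 \vee A_2$ is not modal. It therefore occurs literally in $\Delta$, and since contexts are treated up to membership we may regard it as the last formula of $\Delta$. The premise contexts extend $\Gamma$ by formulas such as $A_1$ or $B$, and these are matched by extending $\Delta$ by the same formulas. The induction hypothesis then gives proofs of the premises over $\Delta$, and we reapply the same rule.

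The only genuinely modal case, and the one I expect to be the crux, is (ModR): from a proof of $\Gamma \ominus \M \vdash A$ we must obtain $\Delta \vdash \M A$. Here Lemma~\ref{lem:vk-shift} gives exactly $\Gamma \ominus \M \sqsubseteq \Delta \ominus \M$. The induction hypothesis applied to the subproof yields a proof of $\Delta \ominus \M \vdash A$ of the same shape, and (ModR) then concludes $\Delta \vdash \M A$. The subtle point is that $\Delta$ may contain a strictly stronger $\R C$ in place of some $\N C$ of $\Gamma$. Lemma~\ref{lem:vk-shift}, via decomposability and the definition of $\ominus$, is precisely what guarantees that this stronger hypothesis still covers both $C$ and $(\N \ominus \M)C$ after shifting. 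Since each step reuses the rule of $\D$, the resulting $\D'$ has the same shape, and the induction is well-founded on the height of $\D$.
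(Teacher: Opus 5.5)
Your proof is correct and takes the paper's approach: induction on $\D$, with the (ModR) case handled by Lemma~\ref{lem:vk-shift}, and the (Var), (BotL) and other left rules relying on the fact that a non-modal principal formula of $\Gamma$ must occur literally in $\Delta$. Your explicit remark that contexts are read up to membership, so that the principal formula may be taken as the last formula of $\Delta$, fills in a detail the paper's one-line proof leaves implicit.
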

\begin{proof}
	Can be done by induction on $\D$. In the ModR rule, we use Lemma \ref{lem:vk-shift} in order to make the inductive call. Note in particular that the (Var) rule only targets tokens, and if $\Gamma$ has token $t$, then $\Delta$ has token $t$ as well.
\end{proof}

We shall freely apply the above lemma to modify contexts accordingly, including swapping and copying formulas.

\begin{proposition}[Identity theorem]
	For any formula $A$, the sequent $\Gamma, A \vdash A$ is provable.
\end{proposition}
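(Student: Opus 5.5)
We argue by structural induction on the formula $A$, proving the statement for arbitrary flat contexts $\Gamma$ at once, so that the induction hypothesis can be used under different contexts. Proposition~\ref{prop:struct} (structural weakening) lets us add or duplicate formulas freely, so we may reorder the context and keep $A$ available as needed.

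The propositional cases are routine.
\begin{itemize}
\item For a token $t$ the (Var) rule applies directly.
\item For $\top$ we use (TopR), and for $\bot$ we use (BotL).
\item For $A_1 \wedge A_2$ we apply (AndR). On each branch $\Gamma, A_1\wedge A_2 \vdash A_i$ we apply (AndLi), and then the induction hypothesis on $\Gamma, A_1 \wedge A_2, A_i \vdash A_i$.
\item For $A_1 \vee A_2$ we apply (OrL). Each branch $\Gamma, A_1\vee A_2, A_i \vdash A_1 \vee A_2$ closes by (OrRi) and the induction hypothesis.
\item For $B \Rightarrow C$ we apply (ImpR) to reach $\Gamma, B\Rightarrow C, B \vdash C$, then (ImpL) with the principal formula $B \Rightarrow C$. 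The left premise $\Gamma, B\Rightarrow C, B \vdash B$ is the induction hypothesis for $B$, after weakening and reordering. The right premise $\Gamma, B\Rightarrow C, B, C \vdash C$ is the induction hypothesis for $C$.
\end{itemize}

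The only case specific to this calculus, and the one we expect to need care, is $A = \M B$. We apply (ModR) and must prove $(\Gamma, \M B)\ominus \M \vdash B$. By reflexivity of $\mathsf{Ax}$ we have $\M \Rrightarrow \M$, so by the definition of $\ominus$ the context $(\Gamma, \M B) \ominus \M$ contains $B$ itself. It may additionally contain $(\M\ominus\M)B$ and the contributions of $\Gamma \ominus \M$, but these are only extra formulas. Thus $(\Gamma,\M B)\ominus\M$ has the form $\Gamma' , B$ up to reordering and extra formulas, and the induction hypothesis for $B$ with context $\Gamma'$ gives $\Gamma', B \vdash B$. Structural weakening then transfers this to the actual context, since $\Gamma', B \sqsubseteq (\Gamma,\M B)\ominus \M$ by inclusion.

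The main point to check is that reflexivity genuinely yields $B \in (\Gamma,\M B)\ominus\M$ in both relevant clauses of the definition, namely whether $\M\ominus\M$ is defined or not. Both clauses with $\M \Rrightarrow \N$ (here with $\N = \M$) include $B$, so this holds. It is also essential that the induction hypothesis is quantified over all flat contexts, since the modal step changes the context. Weakening is invoked only in the form already established by Proposition~\ref{prop:struct}.
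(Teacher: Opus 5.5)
Your proposal is correct and matches the paper's proof. Both argue by structural induction on $A$, close the propositional cases with the corresponding rules, and handle the modal case $\M B$ by (ModR), using reflexivity $\M \Rrightarrow \M$ to ensure $B \in (\Gamma, \M B)\ominus\M$ before applying the induction hypothesis. You are only more explicit than the paper in quantifying the induction hypothesis over all flat contexts and in absorbing the reordering and the possible extra formula $(\M\ominus\M)B$ via structural weakening, which the paper does silently.
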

\begin{proof}
	Proven by induction on $A$:
	\begin{itemize}
		\item If $A = t$, $A = \bot$ or $A = \top$, it is directly provable by (Var), (BotL) and (TopR) respectively.
		\item If $A = A_1 \Rightarrow A_2$:
		
		\scalebox{.8}{
			\AxiomC{}
			\RightLabel{(IH)}
			\UnaryInfC{$\Gamma, A_1 \Rightarrow A_2 , A_1 \vdash A_1$}
			\AxiomC{}
			\RightLabel{(IH)}
			\UnaryInfC{$\Gamma, A_1 \Rightarrow A_2 , A_1 , A_2 \vdash A_2$}
			\RightLabel{(ImpL)}
			\BinaryInfC{$\Gamma, A_1 \Rightarrow A_2, A_1 \vdash A_2$}
			\RightLabel{(ImpR)}
			\UnaryInfC{$\Gamma, A_1 \Rightarrow A_2 \vdash A_1 \Rightarrow A_2$}
			\DisplayProof
		}
		\item If $A = \M B$,
		\begin{center}
			\AxiomC{}
			\RightLabel{(IH)}
			\UnaryInfC{$\Gamma \ominus \M , B , (\M \ominus \M)B \vdash B$}
			\RightLabel{(Mod)}
			\UnaryInfC{$\Gamma, \M B \vdash \M B$}
			\DisplayProof
		\end{center}
		Leave out $(\M \ominus \M)B$ if $(\M \ominus \M)$ is undefined. $\M \Rrightarrow \M$ holds by reflexivity, hence $(\Gamma , \M B) \ominus \M$ contains $B$.
		\item The $\wedge$ and $\vee$ cases are standard.
	\end{itemize}
\end{proof}

\begin{proposition}[Cut elimination]
	Given a proof $\D$ of $\Gamma \vdash A$, and a proof $\E$ of $\Gamma, A \vdash B$, then we can construct a proof $\F$ of $\Gamma \vdash B$.
\end{proposition}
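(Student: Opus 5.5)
We prove cut elimination in the usual Gentzen style, by a nested induction: the outer induction is on the cut formula $A$, ordered by the subformula order in which $\M B$ is larger than $B$, $(\M \ominus \M)B$ and every $\N B$. The inner induction is on the sum of the heights of $\D$ and $\E$. Proposition~\ref{prop:struct} (structural weakening) will be used throughout. It preserves proof shape and hence height, and it lets us add, duplicate or permute formulas and replace assumptions by stronger ones with respect to $\sqsubseteq$. We consider the last rules of $\D$ and $\E$ and split into three kinds of case, as is standard.

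The \emph{left-commutative} cases are those where $A$ is not principal in the last rule of $\D$. These are (BotL), (AndL), (OrL) and (ImpL) acting on a formula of $\Gamma$. In each we weaken $\E$ to contain the extra side formula, cut it against the premise(s) of $\D$ by the inner induction hypothesis, and reapply the rule. The \emph{right-commutative} cases are those where $A$ is not principal in $\E$, for instance left rules on $\Gamma$, (AndR) and (OrR). For these we weaken $\D$ and push the cut upward in the same way. The case where $\E$ ends in (Var) on $A = t$ is immediate from $\D$. For an (ImpL) in $\E$ on a formula of $\Gamma$, we cut into both premises.

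The one nonstandard commutative case is when $\E$ ends in (ModR), so that $B = \N B'$ and $\E'$ proves $(\Gamma, A)\ominus \N \vdash B'$. If $A$ is not of modal form, then $(\Gamma,A)\ominus\N = \Gamma\ominus\N$ and we are done directly. If $A = \M C$, then $(\Gamma,A)\ominus\N$ adds $C$ when $\M \Rrightarrow \N$, and $(\M\ominus\N)C$ when $\M \ominus \N$ is defined. In this case $\D$ itself need not end in (ModR), because $A$ could have been produced on the left. We therefore first commute the cut up through $\D$ until $\D$ ends in (ModR), giving $\D'$ with conclusion $\Gamma \ominus \M \vdash C$.

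From $\D'$ we produce the needed proofs over $\Gamma \ominus \N$:
\begin{itemize}
\item[(i)] If $\M \Rrightarrow \N$, Lemma~\ref{lem:rel-shift} with weakening gives a proof of $\Gamma\ominus\N \vdash C$.
\item[(ii)] If $\M\ominus\N$ is defined, we build $\Gamma\ominus\N \vdash (\M\ominus\N)C$ by a (ModR) step. Its premise is $(\Gamma\ominus\N)\ominus(\M\ominus\N) \vdash C$, which follows from $\D'$ by Lemma~\ref{lem:split-inc} and weakening.
\end{itemize}
We then cut these one or two formulas out of $\E'$ using the outer induction hypothesis, since $C$ and $(\M\ominus\N)C$ are smaller than $\M C$. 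Finally (ModR) yields $\Gamma \vdash \N B'$.

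This modal case is the main obstacle. The order must genuinely decrease, so the measure on modal formulas has to treat $(\M\ominus\N)C$ as smaller than $\M C$. I would take the modal-free size of a formula, for which $\M C$ and $(\M\ominus\N)C$ then have equal size. To repair this I would switch the priorities: induct primarily on the size of $C$, and secondarily on the height of $\E'$ (the proof being cut into). For the secondary measure, the cut of $(\M\ominus\N)C$ into $\E'$ happens in a strictly shorter derivation. Checking that this lexicographic measure works is the delicate point.

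The \emph{principal} cases, where $A$ is principal on both sides, are routine:
\begin{itemize}
\item For $\Rightarrow$, (ImpR) against (ImpL) reduces to cuts on $A_1$ and $A_2$.
\item For $\wedge$ and $\vee$, we cut on the chosen component.
\item For $\bot$, $\D$ cannot end with $\bot$ principal on the right. The case $\E$ = (BotL) on $A = \bot$ is handled by commuting into $\D$.
\end{itemize}
Because $A$ on the left of $\E$ is only ever used through (ModR) in the modal case, no principal modal case beyond the one above arises.
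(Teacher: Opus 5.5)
This is essentially the paper's proof. It uses the same Pfenning-style structural cut elimination, whose only new case is $\D$ and $\E$ both ending in (ModR). You handle that case exactly as the paper does, by cutting $(\M\ominus\N)C$ and $C$ out of $\E_1$ using Lemmas~\ref{lem:rel-shift}, \ref{lem:split-inc}, \ref{lem:vk-shift} and structural weakening.

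Your repaired measure is also the paper's, and it works once stated precisely. Your initial order, with $\M B$ above every $\N B$, is cyclic and so not well-founded. The paper inducts lexicographically on three components:
\begin{enumerate}
\item the size of the cut formula, with every modality counting as one symbol, so $\M C$ and $(\M\ominus\N)C$ tie but both exceed $C$ (if you literally drop modalities, the $C$ cut fails);
\item then $\E$;
\item then $\D$.
\end{enumerate}
Do the $(\M\ominus\N)C$ cut first, against the shorter $\E_1$, and the $C$ cut second, justified by the smaller formula. The third component, or your original $h(\D)+h(\E)$, is what covers the commutations into $\D$ that you rely on.
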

\begin{proof}
	We use Pfenning's structural cut elimination proof \cite{PFENNING} as a basis.
	We do induction on: Size of $A$, size of $\E$, size of $\D$, in that order.
	We consider the size of $\M A$ and $\N A$ for any two modalities $\M$ and $\N$ to be the same.

	We focus on the non-standard new case our calculus includes: Both $\D$ and $\E$ end with ModR.
	
	Case. 
	
	\begin{center}
		$\D =$
		\AxiomC{$\D_1$}
		\RightLabel{(d)}
		\UnaryInfC{$\Gamma \ominus \M \vdash A$}
		\RightLabel{(ModR)}
		\UnaryInfC{$\Gamma \vdash \M A$}
		\DisplayProof
		\qquad
	\end{center} 

	\begin{center}
		$\E =$
		\AxiomC{$\E_1$}
		\RightLabel{(e1)}
		\UnaryInfC{$\Gamma \ominus \N , A , (\M \ominus \N)A \vdash B$}
		\RightLabel{(ModR)}
		\UnaryInfC{$\Gamma , \M A \vdash \N B$}
		\DisplayProof
	\end{center}
	Note that the $(\M \ominus \N)A$ and $A$ under $\E_1$ exist depending on whether $\M \ominus \N$ exists and $\M \Rrightarrow \N$ holds. We shall inductively cut these two formulas. If the formula to be cut does not exist, the respective cut can be left out.
	
	To cut $(\M \ominus \N)A$, we use Lemma \ref{lem:split-inc} to note that $(\Gamma \ominus \M) \sqsubseteq ((\Gamma \ominus \N) \ominus (\M \ominus \N))$, and by Lemma \ref{lem:vk-shift}, $((\Gamma \ominus \N) \ominus (\M \ominus \N)) \sqsubseteq ((\Gamma \ominus \N , A) \ominus (\M \ominus \N))$ (regardless of whether this $A$ is there), hence by structural weakening $\D_1$ gives a proof of $(\Gamma \ominus \N , A) \ominus (\M \ominus \N) \vdash A$.
	
	To cut $A$, then since $A$ is there by the prerequisite that $\M \Rrightarrow \N$, we can use Lemma \ref{lem:rel-shift} to see that $(\Gamma \ominus \M) \sqsubseteq (\Gamma \ominus \N)$. Hence by structural theorem, $\D_1$ gives a proof of $\Gamma \ominus \N \vdash A$.
	
	We use these versions of $\D_1$ to perform the cuts in the following way, $\F =$
	
	\scalebox{1}{
		\AxiomC{$\D_1'$}
		\RightLabel{}
		\UnaryInfC{$\Gamma \ominus \N \vdash A$}
		\AxiomC{$\D_1'$}
		\UnaryInfC{$(\Gamma \ominus \N , A) \ominus (\M \ominus \N) \vdash A$}
		\RightLabel{}
		\UnaryInfC{$\Gamma \ominus \N , A \vdash (\M \ominus \N)A$}
		\AxiomC{$\E_1$}
		\UnaryInfC{$\Gamma \ominus \N , A , (\M \ominus \N)A \vdash B$}
		\RightLabel{(IH-$\E$)}
		\BinaryInfC{$\Gamma \ominus \N , A \vdash B$}
		\RightLabel{(IH-$A$)}
		\BinaryInfC{$\Gamma \ominus \N \vdash B$}
		\RightLabel{}
		\UnaryInfC{$\Gamma \vdash \N B$}
		\DisplayProof
	}
	
\end{proof}

It should be noted that in the above proof, the fact that \textsf{Ax} is decomposable is used in a fundamental way. $\Gamma \ominus \N$ can for each $\M A$ only spin up $A$ and $(\M \ominus \N) A$ at most. We can freely cut the $A$, as a reduced formula ($A$ compared to $\M A$) can be used as a basis for the induction hypothesis, allowing us to liberally change the needed proofs as the size of formula is the dominant (first) inductive argument in the cut elimination proof. Cutting $(\M \ominus \N)A$ is more difficult: we ensure that for determining termination of the cut elimination proof, we consider the size of $(\M \ominus \N)A$ to be the same as the size of $\M A$. This is ok, as they both have the same number of \emph{proper} subformulas, and hence the measure for checking termination on the formula to be cut does not increase. Cutting $\M \ominus \N)A$ can be done using a reduced proof $\E$ and same size proof $\D$. 

Decomposability also ensures that we need only consider one such modal formula of the same size as $\M A$. Cutting multiple is problematic, as it makes the termination argument more difficult (maybe impossible), so we let $(\M \ominus \N)A$ be sufficient for implying all other modal formulas $\R A$ which may be needed. So if we have multiple $\R$ for which $\M \Rrightarrow \N \cdot \R$, the single formula $(\M \ominus \N)A$ covers all of them.

\subsection{Proving Decidability}

The decidability proof follows the usual recipe.
Consider again the subformula property $\leq$ of formulas, which is extended to include $A \leq \M A$ and $\M A \leq \N B$ if $A \leq B$.
Since we only have a finite number of modalities $\Mes$, we get that for any formula $A$ there are only finitely many $B$ such that $B \leq A$.
We observe that for any $A \in (\Gamma \ominus \M)$ there is a $B \in \Gamma$ such that $A \leq B$.
As a consequence, we can show by induction on proofs that suppose $\D$ proves $\Gamma \vdash A$, and $\Delta \vdash B$ appears somewhere in $\D$, then for any $C \in \Delta, B$ there is a $D \in \Gamma , A$ such that $C \leq D$.

Given structural weakening shown in Proposition \ref{prop:struct}, we see that whenever a $\Delta \vdash A$ appears above a $\Gamma \vdash A$ in a proof, where $\Delta \subseteq \Gamma$, then we have a redundancy: the proof of $\Delta \vdash A$ could replace the proof of $\Gamma \vdash A$, reducing the proof tree. We say that $\D$ is reduced if there are no $\Delta \vdash A$ appearing above $\Gamma \vdash A$ in $\D$ such that $\Delta \subseteq \Gamma$.
By the above remark, if there is a proof of a sequent, there is a reduced proof of that sequent.

For each sequent $\Gamma \vdash A$, there is a bound $n$ such that any reduced proof of $\Gamma \vdash A$ has at most height $n$. This is because by the subformula property, any $\Delta \vdash B$ in a proof of $\Gamma \vdash A$ can only contain formulas drawn from the set of subformulas of $\Gamma , A$, of which there are finitely many (note there are only finitely many modalities): we say there are $m$ such formulas. So suppose $S$ is a set of sequents $\Delta \vdash B$ possibly appearing in proofs of $\Gamma \vdash A$, such that no two sequents are eachother weakenings, then $S$ has at most $m \cdot 2^m$ elements ($m$ possible consequents, and $2^m$ possible ordered lists of non-repeating assumptions).
So any branch in a reduced proof has at most $m \cdot 2^m$ giving a bound on the height of reduced trees.

Since each proof steps makes at most two branches, the bound on the height gives a bound on the size of reduced proofs. Hence, up to weakening, there are only a finite number of reduced proof trees we need to check to see if $\Gamma \vdash A$ has a proof. For a theoretical argument of decidability, we simply check all possible reduced proofs to find if one works. More practical algorithms involve more targeted proof searches.

\subsection{Extra: Proving axioms}
We can show that the sequent calculus correctly adapts $\mathcal{L}_{\mathsf{Ax}}$, as it has the same rules as the standard sequent calculus for intuitionistic logics on the non-modal side.
Let us prove the axioms:
\begin{itemize}
	\item For axiom K, note that $(\M A , \M (A \Rightarrow B)) \ominus \M$ by reflexivity of \textsf{Ax} includes $A$ and $A \Rightarrow B$, which proves $B$. So $(\M A , \M (A \Rightarrow B)) \ominus \M \vdash B$ is provable, and hence by the ModR rule, $\M A , \M (A \Rightarrow B) \vdash \M B$. Axiom K is then constructed by using the ImpR rule twice.
	\item For necessity, note that with have a proof of $\vdash A$, then since $(\cdot \ominus \M) = \cdot$, this is also a proof of $(\cdot \ominus \M) \vdash A$. Hence by the ModR rule, $\vdash \M A$ is provable.
	\item Suppose $\M \Rrightarrow \N \in \textsf{Ax}$, then $(\M A \ominus \N)$ has $A$.
	So $(\M A \ominus \N) \vdash A$ is provable by the identity theorem, and hence $\M A \vdash \N A$ is provable by ModR, and $\vdash \M A \Rightarrow \N A$ by ImpR.
	\item Suppose $\M \Rrightarrow \N \cdot \R \in \textsf{Ax}$, then $(\M A \ominus \N)$ has $(\M \ominus \N)A$ and $\M \ominus \N \Rrightarrow \R \in \textsf{Ax}$ by decomposability.
	So $(\M A \ominus \N) \ominus \R$ has $A$, and we can prove $(\M A \ominus \N) \ominus \R \vdash A$, and apply ModR twice to get $\M A \vdash \N \R A$, proving $\vdash \M A \Rightarrow \N \R A$ with ImpR.
	\item We generalize the previous case, showing that by induction on the length of $l$, if $\M \Rrightarrow l \in \textsf{Ax}$ and $\M A \in \Gamma$, then $\Gamma \vdash l(A)$.
	Base case for length of $l$ being 1 (or even 2) has been covered above.
	
	Suppose $l = \N , l'$, then $\M \ominus \N$ exists and $\M \ominus \N \Rrightarrow l' \in \textsf{Ax}$. So supposing $\M A \in \Gamma$, then $(\M \ominus \N)A \in (\Gamma \ominus \N)$, and by induction hypothesis, $(\Gamma \ominus \N) \vdash l'(A)$. Using ModR, we can conclude that $\Gamma \vdash \N l'(A)$ where $\N l' = l$.
	This finishes the induction.
	
	We conclude that if $\M \Rrightarrow l \in \textsf{Ax}$, then $\M A \vdash l(A)$ and hence $\vdash \M A \Rightarrow l(A)$ by ImpR.
\end{itemize}

\section{Equivalence between Calculi}

Given two flat contexts $\Gamma$ and $\Delta$, we say that $\Gamma \vdash \Delta$ is provable if $\Gamma \vdash B$ is provable for any $B \in \Delta$. 

\begin{lemma}
	If $\Gamma \vdash \Delta$ and $\Delta \vdash \Phi$ are provable, then $\Gamma \vdash \Phi$ is provable.
\end{lemma}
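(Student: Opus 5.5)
We prove the statement by iterating the cut elimination proposition over the formulas of $\Delta$. Fix $C \in \Phi$; we must show $\Gamma \vdash C$. By hypothesis $\Delta \vdash C$ is provable, and $\Gamma \vdash B$ is provable for every $B \in \Delta$. Write $\Delta = B_1, \dots, B_k$ (flat contexts are finite lists). The idea is to enlarge the context $\Gamma$ one formula at a time until it contains all of $\Delta$, and then cut the formulas of $\Delta$ away in reverse order.

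First, since $\Delta \sqsubseteq \Gamma, B_1, \dots, B_k$ (every formula of $\Delta$ literally occurs on the right), Proposition~\ref{prop:struct} (structural weakening) turns the proof of $\Delta \vdash C$ into a proof of $\Gamma, B_1, \dots, B_k \vdash C$. Likewise, for each $i$, weakening the given proof of $\Gamma \vdash B_i$ along $\Gamma \sqsubseteq \Gamma, B_1, \dots, B_{i-1}$ yields a proof of $\Gamma, B_1, \dots, B_{i-1} \vdash B_i$.

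Next, we eliminate the $B_i$ by downward induction on $i$ from $k$ to $1$. The induction claim is that $\Gamma, B_1, \dots, B_{i-1} \vdash C$ is provable, given that $\Gamma, B_1, \dots, B_{i} \vdash C$ is. The step is exactly the cut elimination proposition applied with context $\Gamma, B_1, \dots, B_{i-1}$, cut formula $B_i$, left proof $\Gamma, B_1, \dots, B_{i-1} \vdash B_i$ from the previous paragraph, and right proof $\Gamma, B_1, \dots, B_i \vdash C$. After $k$ steps we obtain $\Gamma \vdash C$. Since $C \in \Phi$ was arbitrary, $\Gamma \vdash \Phi$ follows.

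There is no real obstacle once cut elimination and structural weakening are available. The only point requiring care is that cut elimination is stated with the cut formula appended at the end of the \emph{same} context $\Gamma$ on both sides. This is why we first weaken the proofs of each $B_i$ to the intermediate contexts $\Gamma, B_1, \dots, B_{i-1}$. Order of formulas in flat contexts is immaterial by the remark following Proposition~\ref{prop:struct}, so the listing of $\Delta$ may be chosen freely. The degenerate case $\Delta$ empty is immediate by weakening $\vdash C$ to $\Gamma \vdash C$.
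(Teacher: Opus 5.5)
Your proposal is correct and follows the paper's proof. Both arguments weaken $\Delta \vdash C$ to $\Gamma, B_1, \dots, B_k \vdash C$ and then repeatedly apply cut elimination, with the sequents $\Gamma, B_1, \dots, B_{i-1} \vdash B_i$ as left premises, to remove the $B_i$ until $\Gamma \vdash C$ remains. You additionally spell out the structural-weakening steps that the paper's one-line proof leaves implicit.
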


\begin{proof}
	Starting with $\Gamma, \Delta \vdash C$ for some $C \in \Phi$, use a sequence of sequents $\Gamma , B_1, \dots , B_{i-1} \vdash B_i$ to cut away $\Delta$, until getting to $\Gamma \vdash C$.
\end{proof}

Consider a modal context $\Gamma$, which may include modalities $\{\M\}$. Hence $\Gamma$ is either flat (has no modalities), or of the form $\Gamma' , \{ \M \} , \Gamma''$ with $\Gamma'$ flat and $\Gamma'$ a modal context.

\begin{definition}
	Given context $\Gamma$ and flat context $\Delta$, then $\Gamma \vdash \Delta$ is \emph{constructable} if, by induction on $\Gamma$:
	\begin{itemize}
		\item If $\Gamma$ is flat, then $\Gamma \vdash \Delta$ is constructable if it is provable.
		\item If $\Gamma = \Gamma_0 , \{\M\} , \Gamma_1$ with $\Gamma_0$ flat, then $\Gamma \vdash \Delta$ is constructable if there is a flat context $\Phi$ such that:
		\begin{itemize}
			\item $\Gamma_0 \vdash \Phi$ is provable.
			\item $\Phi \ominus \M , \Gamma_1 \vdash \Delta$ is constructable.
		\end{itemize}
	\end{itemize}
\end{definition}

If $\Gamma \vdash \Phi$ is constructable, and $\Phi, \Psi \vdash \Delta$ is constructable, then $\Gamma , \Psi \vdash \Delta$ is constructible.

\begin{lemma}[Weakening of Constructability]
	If $\Phi \subseteq \Psi$ and $\Gamma , \Phi , \Delta \vdash \Omega$ is constructable, then $\Gamma , \Psi , \Delta \vdash \Omega$ is constructable.
\end{lemma}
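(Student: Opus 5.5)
The proof goes by induction on the structure of the modal context $\Gamma$, following the inductive definition of constructability. We take the statement to be: whenever $\Gamma , \Phi , \Delta \vdash \Omega$ is constructable and $\Phi \subseteq \Psi$, then $\Gamma , \Psi , \Delta \vdash \Omega$ is constructable. Here $\Phi$ and $\Psi$ are flat, and $\Gamma , \Delta$ may contain modal locks. The induction measure is the number of locks $\{\M\}$ occurring in $\Gamma$. Within each case we distinguish where the flat block $\Phi$ sits relative to the first lock of the whole context.

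\emph{Base case: the whole context is flat.} Constructability then means provability: every $C \in \Omega$ satisfies $\Gamma , \Phi , \Delta \vdash C$. Since $\Phi \subseteq \Psi$, every formula of $\Gamma , \Phi , \Delta$ occurs in $\Gamma , \Psi , \Delta$, so $\Gamma , \Phi , \Delta \sqsubseteq \Gamma , \Psi , \Delta$ holds trivially. Structural weakening (Proposition~\ref{prop:struct}) then gives $\Gamma , \Psi , \Delta \vdash C$ for each $C$. In this case we only ever enlarge sets of assumptions, so no ordering subtleties arise.

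\emph{Inductive case: the context has a first lock.} Write the context as $\Gamma_0 , \{\M\} , \Gamma_1$ with $\Gamma_0$ flat. There are two subcases.

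\begin{itemize}
\item[(i)] \emph{$\Phi$ lies entirely inside $\Gamma_0$}, i.e.\ $\Gamma_0 = \Gamma' , \Phi , \Delta'$. We keep the same witness $\Phi'$ from the definition. The requirement that $\Gamma_0 \vdash \Phi'$ be provable becomes $\Gamma' , \Psi , \Delta' \vdash \Phi'$, which follows by structural weakening formula by formula, exactly as in the base case. The second requirement, that $\Phi' \ominus \M , \Gamma_1 \vdash \Omega$ be constructable, is unchanged.
\item[(ii)] \emph{$\Phi$ lies inside $\Gamma_1$}, i.e.\ $\Gamma_1 = \Gamma' , \Phi , \Delta$. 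Again we keep the same witness $\Phi'$. The first requirement, $\Gamma_0 \vdash \Phi'$, is untouched. The second is constructability of $\Phi' \ominus \M , \Gamma' , \Phi , \Delta \vdash \Omega$. This context has strictly fewer locks, since $\Phi' \ominus \M$ is flat and we removed $\{\M\}$. The induction hypothesis therefore applies and replaces $\Phi$ by $\Psi$.
\end{itemize}

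Since $\Phi$ is flat and contains no lock, these two subcases are exhaustive.

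The only delicate point is bookkeeping: the context in which $\Phi$ appears must be split consistently with the decomposition used in the definition. The flat prefix $\Gamma_0$ is determined uniquely as the part before the first lock, and $\Phi$ contains no lock, so $\Phi$ falls wholly on one side. Structural weakening is also used in the degenerate form where $\sqsubseteq$ is plain inclusion of formula sets. The statement contains no ordering constraint on flat contexts, so reordering and duplication are absorbed by that proposition as the paper already notes. I expect no real obstacle beyond confirming that the induction measure (number of locks) strictly decreases in subcase (ii).
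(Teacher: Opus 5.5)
Your proof is correct. The paper states this lemma without proof, and your argument is the one the recursive definition of constructability calls for: structural weakening (Proposition~\ref{prop:struct}) handles the case where $\Phi$ lies in the flat prefix before the first lock, and the induction hypothesis handles the case where $\Phi$ lies after it, with the split exhaustive because $\Phi$ contains no lock and the lock count strictly decreasing in the recursive case.
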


\begin{lemma}\label{lem:con-merge}
	If $\Gamma \vdash \Delta_0$ and $\Gamma \vdash \Delta_1$ are constructable, then $\Gamma \vdash \Delta_0 , \Delta_1$ is constructable.
\end{lemma}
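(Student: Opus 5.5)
We argue by induction on the structure of the modal context $\Gamma$, in the same way that constructability itself is defined, with $\Delta_0$ and $\Delta_1$ left arbitrary.

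\textbf{Flat case.} If $\Gamma$ is flat, constructability is the same as provability. Then $\Gamma \vdash \Delta_0$ and $\Gamma \vdash \Delta_1$ mean that $\Gamma \vdash B$ is provable for every $B \in \Delta_0$ and for every $B \in \Delta_1$. Hence $\Gamma \vdash B$ is provable for every $B$ in the concatenation $\Delta_0 , \Delta_1$, which is what $\Gamma \vdash \Delta_0 , \Delta_1$ means. Nothing else is needed here.

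\textbf{Modal case.} Suppose $\Gamma = \Gamma_0 , \{\M\} , \Gamma_1$ with $\Gamma_0$ flat. Unfolding the definition, we obtain two flat contexts $\Phi_0$ and $\Phi_1$ such that:
\begin{itemize}
	\item $\Gamma_0 \vdash \Phi_i$ is provable, for $i = 0,1$;
	\item $\Phi_i \ominus \M , \Gamma_1 \vdash \Delta_i$ is constructable, for $i = 0,1$.
\end{itemize}
The witness we choose for the combined sequent is the concatenation $\Phi := \Phi_0 , \Phi_1$. By the flat case just proved, $\Gamma_0 \vdash \Phi$ is provable.

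It remains to show that $\Phi \ominus \M , \Gamma_1 \vdash \Delta_0 , \Delta_1$ is constructable. The operation $(\_) \ominus \M$ is defined clause by clause on list elements, so it distributes over concatenation. In particular both $\Phi_0 \ominus \M \subseteq \Phi \ominus \M$ and $\Phi_1 \ominus \M \subseteq \Phi \ominus \M$ hold. By the Weakening of Constructability lemma (with the prefix taken empty), $\Phi \ominus \M , \Gamma_1 \vdash \Delta_i$ is constructable for $i = 0,1$.

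Both sequents now have the same context $\Phi \ominus \M , \Gamma_1$. This context has strictly fewer modal locks than $\Gamma$, so the induction hypothesis applies to it. It gives that $\Phi \ominus \M , \Gamma_1 \vdash \Delta_0 , \Delta_1$ is constructable. Together with provability of $\Gamma_0 \vdash \Phi$, this is exactly the definition of $\Gamma \vdash \Delta_0 , \Delta_1$ being constructable.

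\textbf{Main obstacle.} The two derivations come with different witnesses $\Phi_0$ and $\Phi_1$, and these must be reconciled. Taking their concatenation and invoking the weakening lemma resolves this. For that step we need the induction to be on the number of locks in $\Gamma$ (or on its length), with $\Delta_0$ and $\Delta_1$ universally quantified. We must also check that $\ominus \M$ commutes with concatenation, which is immediate from its definition.
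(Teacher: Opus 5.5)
Your proof is correct and follows the paper's argument step for step: the concatenated witness $\Phi_0,\Phi_1$, distributivity of $(\_)\ominus\M$ over concatenation, weakening of constructability, and the induction hypothesis applied to $\Phi\ominus\M,\Gamma_1$. Your explicit measure, the number of locks, is the right one and is sharper than the paper's bare ``induction on $\Gamma$''. Drop the parenthetical ``or on its length'', though: $\Phi\ominus\M$ can be arbitrarily longer than $\Gamma_0$, so the length of the context need not decrease.
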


\begin{proof}
	Prove is done by induction on $\Gamma$.
	
	If $\Gamma $ is flat, then $\Gamma \vdash \Delta_0$ and $\Gamma \vdash \Delta_1$ are provable, hence $\Gamma \vdash \Delta_0, \Delta_1$ is provable.
	
	If $\Gamma = \Gamma_0 , \{\M\} , \Gamma_1$ with $\Gamma_0$ flat, then there are $\Phi_0$ and $\Phi_1$ such that $\Gamma_0 \vdash \Phi_0$ and $\Gamma_0 \vdash \Phi_1$ are provable and hence $\Gamma_0 \vdash \Phi_0, \Phi_1$ is provable, and both $\Phi_0 \ominus \M , \Gamma_1 \vdash \Delta_0$ and $\Phi_1 \ominus \M , \Gamma_1 \vdash \Delta_1$ are constructive. Note that $(\Phi_0 , \Phi_1) \ominus \M = (\Phi_0 \ominus \M) , (\Phi_1 \ominus \M)$, hence by weakening $(\Phi_0 , \Phi_1) \ominus \M , \Gamma_1 \vdash \Delta_0$ and $(\Phi_0 , \Phi_1) \ominus \M , \Gamma_1 \vdash \Delta_1$ are constructive. By induction hypothesis, $(\Phi_0 , \Phi_1) \ominus \M , \Gamma_1 \vdash \Delta_0, \Delta_1$.
\end{proof}

\begin{theorem}
	If there is a term $\Gamma \vdash P : A$, then $\Gamma \vdash A$ is constructable.
\end{theorem}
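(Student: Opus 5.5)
We argue by induction on the typing derivation of $\Gamma \vdash P : A$ in the Fitch-style calculus of Figure~\ref{fig:Fitch}, keeping the context $\Gamma$ general. We first note two auxiliary facts, each proved by induction on the number of locks in $\Gamma$.

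The first fact is that constructability is monotone in the conclusion: if $\Gamma \vdash \Delta$ is constructable and $\Delta \vdash C$ is provable in the flat calculus of Figure~\ref{Fig:seq}, then $\Gamma \vdash \Delta, C$ is constructable. In the flat case this is cut elimination together with the lemma on composing $\Gamma\vdash\Delta$ and $\Delta\vdash\Phi$. In the lock case we push the provable step into the innermost flat segment.

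The second fact is that every formula in the trailing flat segment is constructable: if $[\Delta]=\varepsilon$, then $\Gamma, x:A, \Delta \vdash A$ is constructable. Walk through the locks of $\Gamma$ choosing $\Phi$ to be the whole current flat segment, which is provable by the identity theorem. The final flat segment then contains $A$, so the identity theorem applies. This handles the \texttt{var} rule. Rule $*$ is immediate by (TopR).

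For the purely propositional rules (application, $\lambda$, pairs, projections, \texttt{case}, \texttt{abs}) we again go down through the locks to the last flat segment. There the induction hypotheses, merged via Lemma~\ref{lem:con-merge}, give a common witness chain. We then apply the corresponding admissible rule in the flat cut-free calculus (modus ponens via (ImpL) plus cut, (ImpR), (AndR), and so on). For $\lambda$ and \texttt{case} the hypothesis lives in the context $\Gamma, x:A$. We handle this by generalizing the statement to constructability of $\Gamma \vdash \Delta$ with an extra flat tail, which is exactly what the definition allows.

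For \texttt{lock}, the premise $\Gamma, \{\M\} \vdash P : A$ is constructable. Peeling the locks of $\Gamma$ leaves a final flat segment $\Gamma_k$ and a witness $\Phi$ with $\Gamma_k \vdash \Phi$ provable and $\Phi \ominus \M \vdash A$ provable. (ModR) gives $\Phi \vdash \M A$, and cutting gives $\Gamma_k \vdash \M A$.

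The main obstacle is the \texttt{key} rule. Here $\Gamma \vdash P : \M A$ is constructable, $\M \Rrightarrow l$, and $[\Delta]=l=\N_1\cdots\N_k$, and we must show $\Gamma,\Delta \vdash A$ is constructable. We peel the locks of $\Delta$ one at a time, each time choosing as witness the current flat segment together with the modal formula still carrying $A$.

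At the first lock $\N_1$ we take $\Phi$ to contain $\M A$ (obtained from the induction hypothesis) and the flat part of $\Gamma$ preceding the lock. By the definition of $\ominus$, $\Phi\ominus\N_1$ contains either $A$ (when $l=\N_1$) or $(\M\ominus\N_1)A$. In the latter case decomposability gives $\M\ominus\N_1 \Rrightarrow \N_2\cdots\N_k$, so we recurse on $l$ with $\M\ominus\N_1$ in place of $\M$. This mirrors the ``Extra: Proving axioms'' induction on the length of $l$.

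The delicate part is threading the flat formulas of $\Delta$ between the locks through each step. Weakening of constructability should suffice for this, but the bookkeeping of where exactly $\M A$ sits relative to $\Gamma$'s own trailing segment needs care.
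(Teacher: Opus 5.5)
Your proposal is correct and follows essentially the paper's argument: induction on the term, Lemma~\ref{lem:con-merge} to merge the hypotheses for the propositional rules, (ModR) at the trailing flat segment for \texttt{lock}, and for \texttt{key} (with $l$ nonempty, which the paper also tacitly assumes) threading $\M A$ through the chain $\M,\ \M\ominus\N_1,\ (\M\ominus\N_1)\ominus\N_2,\dots$ supplied by decomposability. The difference is only presentational: the paper composes $\Gamma\vdash\M A$ with small constructable sequents such as $\M A,\{\N_1\}\vdash(\M\ominus\N_1)A$ via its composition property for constructability, whereas you build the witness chain directly, which also makes the \texttt{var} case and the variables interleaved in $\Delta$ explicit.
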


\begin{proof}
	Induction on $t$:
	
	$\App{t}{r}$ with $t : A \Rightarrow B$ and $r : A$, then by induction hypothesis, $\Gamma \vdash A \Rightarrow B$ and $\Gamma \vdash A$ are constructable, hence by Lemma \ref{lem:con-merge}, $\Gamma \vdash A \Rightarrow B , A$ is constructable. We show that $A \Rightarrow B , A \vdash B$ is provable, to construct $\Gamma \vdash B$.
	\begin{center}
		\AxiomC{}
		\RightLabel{(Var)}
		\UnaryInfC{$A \Rightarrow B, A \vdash A $}
		\AxiomC{}
		\RightLabel{(Var)}
		\UnaryInfC{$A \Rightarrow B , A, B \vdash B$}
		\RightLabel{(ImpL)}
		\BinaryInfC{$ A \Rightarrow B , A \vdash B$}
		\DisplayProof
	\end{center}
	
	$\Lam{A}{t}$ with $t : B$, then by induction hypothesis $\Gamma , A \vdash B$ is constructable. Take the tail-end provable $\Gamma , A \vdash B$ and apply (ImpR) to construct $\Gamma \vdash A \Rightarrow B$.
	
	$\tKey{\M}{\alpha}{t}$ with $\Gamma \vdash t : \M A$. By induction hypothesis, $\Gamma \vdash \M A$ is constructable. We do case distinction on $\alpha$:
	\begin{itemize}
		\item If $\alpha = \N$ then $\M \Rrightarrow \N$. $\M A , \{\M\} \vdash A$ is constructible since $\M A \vdash \M A$ and $\M A \ominus \N \vdash A$ is provable by (Var). Composing $\M A , \{\N\} \vdash A$ with $\Gamma , \{\N\} \vdash \M A$, we construct , $\{\N\} , \Gamma \vdash A$.
		\item If $\alpha = \N_1, \dots , \N_n$ with $n > 1$, we know $\M \Rrightarrow \N_1 \cc \dots \cc \N_n$, and hence can define $\R_1 = \M$, $\R_{i+1} = \R_i \ominus \N_{i}$ up to $\R_n$ for which we have $\R_n \Rrightarrow \N_n$. Then $\M A, \{\N_1 \} \vdash \R_1 A$, and $\R_i A, \{\N_i\} \vdash \R_{i+1} A$ up to $\R_n , \{\N_n\} \vdash A$ are all constructable, composing into $\Gamma , \{\N_1\} , \dots , \{\N_n\} \vdash A$.
	\end{itemize}
	
	$\tLock{\M}{t}$ with $\Gamma , \{\M\} \vdash t : A$. By induction hypothesis, we end with a proof of $\Gamma \ominus \M \vdash A$ with constructable $\Gamma \vdash \Gamma$. This gives us a proof of $\Gamma \vdash \M A$ using (ModR). Composing $\Gamma \vdash \Gamma$ and $\Gamma \vdash \M A$ we construct $\Gamma \vdash \M A$.
	
	Other cases are simpler.
\end{proof}

\begin{lemma}\label{lem:shift-to-at}
	For any term $(\Gamma \ominus \M) , \Delta \vdash t : A$, there is a term $\Gamma , \{\M\} , \Delta \vdash r : A$.
\end{lemma}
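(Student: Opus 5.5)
The plan is to build $r$ from $t$ by a variable substitution that replaces each hypothesis of the flat context $\Gamma \ominus \M$ with a Fitch-style term typed in $\Gamma , \{\M\}$. Concretely, I will show that for every formula $C \in (\Gamma \ominus \M)$ there is a term $\Gamma , \{\M\} \vdash r_C : C$. Then I obtain $r$ by substituting these $r_C$ for the corresponding variables of $t$. The substitution uses the variable-substitution (cut) operation of the appendix, together with weakening to carry $\Delta$ along.

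By the definition of $(\_)\ominus\M$, each $C \in \Gamma \ominus \M$ comes from some $\N B = \Gamma_i$ in $\Gamma$, with $x$ its variable. There are two cases.
\begin{itemize}
\item $C = B$ with $\N \Rrightarrow \M$. Here I take $r_C = \tKey{\N}{\M}{\Var{x}}$. The term $\Var{x}$ is typed in $\Gamma$ (empty $\Delta$ in the var rule), and the key rule with $[\{\M\}] = \M$ gives $\Gamma,\{\M\} \vdash r_C : B$.
\item $C = (\N\ominus\M)B$, which happens exactly when $\N \Rrightarrow \M \cc (\N\ominus\M)$. Here I take $r_C = \tLock{\N\ominus\M}{\tKey{\N}{\M\cc(\N\ominus\M)}{\Var{x}}}$. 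The inner key is typed in $\Gamma,\{\M\},\{\N\ominus\M\}$ with type $B$, and the lock then returns to $\Gamma,\{\M\}$ with type $(\N\ominus\M)B$.
\end{itemize}
Non-modal formulas of $\Gamma$ contribute nothing to $\Gamma\ominus\M$, so no other case arises.

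The term $t$ lives in $(\Gamma\ominus\M),\Delta$. I first weaken it on the left, using the weakening in both directions (first adding $\{\M\}$, then the formulas of $\Gamma$), to a term in $\Gamma , \{\M\} , (\Gamma\ominus\M) , \Delta$. Then I substitute, one variable at a time, the variables $y_C$ of $\Gamma\ominus\M$ by the $r_C$. Each step uses the substitution property: from $\Gamma' , y:C , \Delta' \vdash P : A$ and $\Gamma' \vdash Q : C$, obtain $\Gamma',\Delta' \vdash P[Q/y]:A$. Here $\Gamma'$ is $\Gamma,\{\M\}$, extended by earlier hypotheses of $\Gamma\ominus\M$ which are not used by $r_C$, so weakening of each $r_C$ is needed. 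I will process the variables from the last to the first, so that each $r_C$ only needs to be weakened by formulas placed after $\{\M\}$. This is allowed by the first weakening rule, since these formulas sit after the lock and before the end of the context.

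The main obstacle is that weakening and substitution must behave correctly across the lock $\{\M\}$. In particular the var rule requires $[\Delta]=\varepsilon$ after the variable, so variables of $\Gamma$ are only reachable inside $r_C$ before locks are added on top, and I must check that the substitution lemma is valid when the substituted term sits in a context ending with locks. If the substitution lemma only handles $\Delta'$ containing locks through the modal substitution $P\langle\ldots\rangle$, I will fall back on a direct induction on $t$. That induction would carry the invariant ``each variable of $\Gamma\ominus\M$ is replaced by $r_C$, shifted by $\langle\ldots\rangle$ whenever we pass under a lock of $\Delta$''. The key-case of that induction would then use transitivity of $\mathsf{Ax}$, exactly as in the modal substitution lemma of the appendix.
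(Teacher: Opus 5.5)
Your proposal is correct and follows the paper's own proof. You weaken $t$ to the context $\Gamma,\{\M\},(\Gamma\ominus\M),\Delta$, build $r_C$ as $\tKey{\N}{\M}{\Var{x}}$ or $\tLock{\N\ominus\M}{\tKey{\N}{\M\cc(\N\ominus\M)}{\Var{x}}}$ according to the two ways a formula enters $\Gamma\ominus\M$, and substitute these terms for the variables of $\Gamma\ominus\M$. Your fallback induction is not needed, since ordinary variable substitution already works when locks follow the substituted variable: in the \texttt{key} case the variable either lies before the trailing locks, where you recurse, or among them, where it is simply dropped. Substituting from first to last would even spare you weakening the $r_C$.
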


\begin{proof}
	$(\Gamma \ominus \M) , \Delta \vdash t : A$ can be weakened to $\Gamma , \{\M\} , (\Gamma \ominus \M) , \Delta \vdash t : A$. We reason that for any formula $B$ from $(\Gamma \ominus \M)$ there is a lambda term $\Gamma , \{\M\} \vdash r_B : B$, and hence we can use substitutions to get a term $\Gamma , \{\M\} , \Gamma \vdash r : A$.
	Suppose $B \in (\Gamma \ominus \M)$, then either:
	\begin{itemize}
		\item $x : \N B \in \Gamma$ for some $x$, and $\N \Rrightarrow \M$. Then $\Gamma , \{\M\} \vdash \tKey{\N}{\M}{\Var{x}} : B$.
		\item $B = (\N \ominus \M)B'$ and $x : \N B' \in \Gamma$. Then $\Gamma , \{\M\} \vdash \tLock{\N \ominus \M}{\tKey{\N}{\M , \N \ominus \M}{\Var{x}}} : (\N \ominus \M)B'$.
	\end{itemize}
\end{proof}

\begin{theorem}
	If $\Gamma \vdash A$ is provable, there is a lambda term $t$ such that $\Gamma \vdash t : A$
\end{theorem}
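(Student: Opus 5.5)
We argue by induction on the derivation of $\Gamma \vdash A$ in the cut-free sequent calculus of Figure~\ref{Fig:seq}. Here $\Gamma$ is a flat context, so we read it as a Fitch context $x_1 : C_1, \dots, x_k : C_k$ with no locks. For each rule we build a term in context $\Gamma$ from the terms the induction hypothesis gives for the premises.

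The propositional rules are the usual translation of intuitionistic sequent calculus into natural deduction.
\begin{itemize}
\item (Var) gives $\Var{x}$, and (TopR) gives $*$.
\item (BotL) gives $\texttt{abs}_A(\Var{x})$, where $x:\bot$.
\item (ImpR) gives a $\lambda$-abstraction.
\item (AndR) and (OrR$i$) give a pair and an injection $\ip_i$.
\item For (AndL$i$) we have a term $P$ in context $\Gamma, A_1\wedge A_2, y:A_i$. We substitute $\pi_i(\Var{x})$ for $y$, using the variable substitution described in the appendix.
\item (OrL) becomes a $\texttt{case}$ on $\Var{x}$, with the two premise terms as branches.
\item For (ImpL) we have $Q$ proving $A$ and $R$ in context $\ldots, y:B$. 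We substitute $\App{Q}{\Var{x}}$ for $y$ in $R$.
\end{itemize}
The contexts here are flat, so every substitution involves no locks and is well typed.

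The only real case is (ModR). The premise is $\Gamma \ominus \M \vdash A$, so by the induction hypothesis we have a term $t$ with $\Gamma\ominus\M \vdash t : A$. We apply Lemma~\ref{lem:shift-to-at} with $\Delta$ empty. This gives a term $r$ with $\Gamma, \{\M\} \vdash r : A$. Then $\tLock{\M}{r}$ has type $\M A$ in context $\Gamma$, which is what we need.

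The part to check carefully is that Lemma~\ref{lem:shift-to-at} really applies: each formula of $\Gamma\ominus\M$ must be derivable behind the lock $\{\M\}$, and substitution of those terms must be sound. There are two cases for a formula of $\Gamma\ominus\M$.
\begin{itemize}
\item It is $B$ with $x:\N B\in\Gamma$ and $\N\Rrightarrow\M$. Then $\tKey{\N}{\M}{\Var{x}}$ has type $B$ behind the lock.
\item It is $(\N\ominus\M)B$. Then decomposability gives $\N \Rrightarrow \M\cc(\N\ominus\M)$, and the term $\tLock{\N\ominus\M}{\tKey{\N}{\M\cc \N\ominus\M}{\Var{x}}}$ has the right type.
\end{itemize}
Substituting these terms for the variables of the weakened context $\Gamma,\{\M\},\Gamma\ominus\M$ gives $r$. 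This uses the variable-substitution (cut) property of the Fitch calculus mentioned in the appendix and the weakening over $\Gamma,\{\M\}$. The two key rules are used with the lock pattern $[\Delta]$ equal to $\M$ and to $\M\cc(\N\ominus\M)$ respectively, as the key rule requires.
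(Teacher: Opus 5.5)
Your proposal is correct and takes the same approach as the paper: induction on the derivation in the sequent calculus. The only nontrivial case, (ModR), is handled exactly as the paper does, by applying Lemma~\ref{lem:shift-to-at} to the induction hypothesis and then forming $\tLock{\M}{r}$. You go further than the paper by writing out the propositional cases it calls ``standard'' and by re-checking the two key-term constructions inside Lemma~\ref{lem:shift-to-at}, and both of those additions are sound.
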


\begin{proof}
	By induction on proofs. Most cases are standard, we shall just consider the modal case.
	
	(ModR) Proving $\Gamma \vdash \M A$, using a proof of $\Gamma \ominus \M \vdash A$. By induction hypothesis, we have a term $\Gamma \ominus \M \vdash t : A$, and we can apply Lemma \ref{lem:shift-to-at} to get $\Gamma , \{\M\} \vdash r : A$ and hence $\Gamma \vdash \tLock{\M}{s} : A$.
\end{proof}

\begin{theorem}
	If $\Gamma \vdash A$ is constructable, there is a lambda term $t$ such that $\Gamma \vdash t : A$
\end{theorem}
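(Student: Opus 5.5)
We prove this by induction on the structure of the modal context $\Gamma$, following the inductive definition of constructability. In the base case $\Gamma$ is flat, so constructability means $\Gamma \vdash A$ is provable in the sequent calculus of Figure~\ref{Fig:seq}. The previous theorem then directly gives a lambda term $\Gamma \vdash t : A$.

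In the inductive case $\Gamma = \Gamma_0 , \{\M\} , \Gamma_1$ with $\Gamma_0$ flat. Constructability gives a flat $\Phi$ with $\Gamma_0 \vdash \Phi$ provable and $\Phi \ominus \M , \Gamma_1 \vdash A$ constructable. The context $\Phi \ominus \M , \Gamma_1$ has one fewer lock than $\Gamma$, so the induction is on the number of locks. The induction hypothesis then yields a term $(\Phi \ominus \M), \Gamma_1 \vdash t : A$. Lemma~\ref{lem:shift-to-at}, with $\Delta := \Gamma_1$, turns this into a term $\Phi , \{\M\} , \Gamma_1 \vdash r : A$. To be precise, the lemma's proof is stated for the flat context before the lock; its substitution argument works verbatim with a trailing $\Delta$ that may itself contain locks, since the terms $r_B$ live in $\Phi , \{\M\}$ and are weakened by $\Gamma_1$.

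It then remains to replace the assumptions $\Phi$ by $\Gamma_0$. For each $B \in \Phi$, the previous theorem applied to the provable sequent $\Gamma_0 \vdash B$ gives a term $\Gamma_0 \vdash s_B : B$. First weaken $r$ to the context $\Gamma_0 , \Phi , \{\M\} , \Gamma_1$, using weakening in the appendix. Then substitute each $s_B$ for the variable of $B$ in turn, using the variable substitution $P[Q/x]$ (the cut for the Fitch calculus). This yields a term $\Gamma_0 , \{\M\} , \Gamma_1 \vdash t' : A$, as required.

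The main obstacle is this substitution step. The variables of $\Phi$ sit \emph{before} the lock $\{\M\}$, whereas variable substitution as stated only applies to $\Gamma , x : A , \Delta \vdash P : B$ with $\Gamma \vdash Q : A$, where $\Delta$ may contain locks. Here $Q = s_B$ is typed in the prefix $\Gamma_0$ (weakened by the earlier $\Phi$-variables), so the format matches. We must still check that the substitution lemma really holds when $\Delta$ contains locks, because the \texttt{var} rule forbids direct access across a lock. Variables of $\Phi$ can therefore only occur inside a $\texttt{key}$ applied beneath the corresponding lock depth, and substituting $s_B$ there is well typed since $s_B$ lives in the same prefix. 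I would verify this by induction on terms, treating the $\texttt{lock}$ and $\texttt{key}$ cases explicitly. In the $\texttt{key}$ case, the subterm is typed in a prefix of the context, and that prefix still contains $x$ exactly when it extends past $x$.
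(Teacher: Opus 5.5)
Your proof is correct and matches the paper's argument: induction on the number of locks, the previous theorem for the flat base case, then the induction hypothesis plus Lemma~\ref{lem:shift-to-at} to get a term in $\Phi, \{\M\}, \Gamma_1$, followed by substituting the terms $\Gamma_0 \vdash s_B : B$ for the variables of $\Phi$. Your extra checks, that the shift lemma and variable substitution still work when locks follow the substituted variable, make explicit what the paper leaves implicit; the paper states substitution for arbitrary trailing $\Delta$ but never proves it.
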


\begin{proof}
	By induction on $\Gamma$.
	
	If $\Gamma = \Gamma$, then $\Gamma \vdash A$ is provable, and we use the previous theorem.
	
	If $\Gamma = \Gamma , \{\M\} , \Gamma'$, with provable $\Gamma \vdash \Delta$, and constructable $\Delta \ominus \M , \Gamma' \vdash A$. We use the previous theorem and the induction hypothesis to find terms $\Gamma \vdash t_D : D$ for each $D \in \Delta$, and  $\Delta \ominus \M , \Gamma' \vdash r : A$. Applying Lemma \ref{lem:shift-to-at} to the latter, we get $\Delta , \{\M\} , \Gamma' \vdash s : A$. Substituting each $t_D$ into $s$ we get the desired term $\Gamma , \{\M\} , \Gamma' \vdash t' : A$.
\end{proof}

We can conclude that $\Gamma \vdash A$ is constructible if and only if there is a lambda term $t$ such that $\Gamma \vdash t : A$.
If in particular $\Gamma$ is flat, then $\Gamma \vdash A$ is provable in the sequent calculus if and only if there is a lambda term $t$ such that $\Gamma \vdash t : A$. Hence by decidability, there is an algorithm which determines for each sequent $\Gamma \vdash A$ with $\Gamma$ a flat context, whether a lambda term exists. This can be extended to any modal context as follows.

We define the map sending pairs $(\Gamma \vdash A)$ of modal context and formulas to a formula $\langle \Gamma , A \rangle$ as follows:
\begin{itemize}
	\item $\langle \vdash A \rangle := A$,
	\item $\langle (\Gamma , x : B) \vdash A \rangle := \langle \Gamma \vdash B \Rightarrow A \rangle$,
	\item $\langle (\Gamma , \{\M\}) \vdash A \rangle :=  \langle \Gamma \vdash \M A \rangle$.
\end{itemize}
By induction, $\langle (x : B , \Gamma) \vdash A \rangle = B \Rightarrow \langle \Gamma \vdash A \rangle$ and $\langle (\{\M\}, \Gamma) \vdash A \rangle = \M \langle \Gamma \vdash A \rangle$.

\begin{lemma}
	For each $\Gamma$, $\Delta$ and $A$, the sequence $\Gamma , \Delta \vdash A$ has a lambda term if and only if $\Gamma \vdash \langle \Delta \vdash A \rangle$ has a lambda term.
\end{lemma}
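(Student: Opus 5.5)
We argue by induction on the length of $\Delta$, proving both directions simultaneously, and we reduce the statement to two one-step facts:
\begin{itemize}
\item $\Gamma , x : B , \Delta' \vdash A$ has a term iff $\Gamma , x : B \vdash \langle \Delta' \vdash A \rangle$ does, and this in turn iff $\Gamma \vdash B \Rightarrow \langle \Delta' \vdash A \rangle$ does.
\item $\Gamma , \{\M\} , \Delta' \vdash A$ has a term iff $\Gamma , \{\M\} \vdash \langle \Delta' \vdash A\rangle$ does, and this in turn iff $\Gamma \vdash \M \langle \Delta' \vdash A\rangle$ does.
\end{itemize}
These use the equations $\langle (x : B , \Delta') \vdash A \rangle = B \Rightarrow \langle \Delta' \vdash A \rangle$ and $\langle (\{\M\}, \Delta') \vdash A \rangle = \M \langle \Delta' \vdash A \rangle$ noted just above the statement. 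The base case $\Delta = \varepsilon$ is trivial, since $\langle \vdash A \rangle = A$. For the inductive step, the first equivalence of each item is the induction hypothesis applied with the larger left context $\Gamma , x : B$, respectively $\Gamma , \{\M\}$. It therefore suffices to prove the two second equivalences, namely the single-step adjunctions.

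For the variable case, the forward direction is immediate: from $\Gamma , x : B \vdash P : C$ we get $\Gamma \vdash \lambda x:B.P : B \Rightarrow C$. Conversely, given $\Gamma \vdash Q : B \Rightarrow C$, we weaken it (first weakening rule of the appendix) to $\Gamma , x : B \vdash Q : B \Rightarrow C$. Applying it to $\texttt{var}(x)$, which is typable because the trailing $\Delta$ is empty, gives $\App{\texttt{var}(x)}{Q} : C$.

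For the modal case, the forward direction is the $\texttt{lock}$ rule: from $\Gamma , \{\M\} \vdash P : C$ we get $\Gamma \vdash \tLock{\M}{P} : \M C$. For the converse we use $\texttt{key}$ with the reflexive axiom $\M \Rrightarrow \M$, which is available because we work with $\widehat{\mathsf{Ax}}$, and whose logic is equivalent to $\mathcal{L}_{\mathsf{Ax}}$ by the first lemma of the decidability section. Given $\Gamma \vdash Q : \M C$, take the trailing context to be $\{\M\}$, so that $[\{\M\}] = \M$. The key rule then yields $\Gamma , \{\M\} \vdash \tKey{\M}{\M}{Q} : C$.

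The main obstacle is bookkeeping rather than mathematics. The induction must carry the left context $\Gamma$ as a parameter, since it grows at each step, and it must peel $\Delta$ from the left rather than the right. This is exactly why the left-peeling equations for $\langle - \rangle$ are needed, and I would check that they follow by a routine induction from the right-recursive definition. I would also verify that the variable rule's side condition $[\Delta]=\varepsilon$ and the key rule's condition $[\Delta]=l$ are satisfied by the specific trailing contexts chosen above; both are, since those contexts are respectively empty and the single lock $\{\M\}$.
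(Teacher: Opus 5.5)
Your proof is correct and follows essentially the same route as the paper. Both argue by induction on the length of $\Delta$, and both handle the single-step cases the same way: $\lambda$-abstraction, or application to $\texttt{var}(x)$ after weakening, for variables; and $\texttt{lock}_{\M}$ or $\texttt{key}_{\M \Rrightarrow \M}$ for locks. The only difference is organisational: the paper proves the forward direction by peeling $\Delta$ from the right with $\Gamma$ fixed, using the defining clauses of $\langle - \rangle$ directly, and generalises over $\Gamma$ only for the converse, whereas you peel from the left in both directions and chain the equivalences.
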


\begin{proof}
	We prove both directions of the implication by induction on the length of $\Delta$. The base case is simple, since $\langle \vdash A \rangle = A$.
	First from left to right:
	\begin{itemize}
		\item If $\Delta = \Delta' , x : B$ and $\Gamma , \Delta', x : B \vdash t : A$. Then $\Gamma , \Delta' \vdash \lambda x : B.t : B \Rightarrow A$ which by induction hypothesis gives us a term $\Gamma \vdash t' : \langle \Delta' \vdash B \Rightarrow A \rangle$, where $\langle \Delta' \vdash B \Rightarrow A \rangle = \langle \Delta', x : B \vdash A \rangle$.
		\item If $\Delta = \Delta' , \{\M\}$ and $\Gamma , \Delta , \{\M\} \vdash t : A$. Then $\Gamma , \Delta' \vdash \tLock{\M}{t} : \M A$ which by induction hypothesis gives us a term $\Gamma \vdash t' : \langle \Delta' \vdash \M A \rangle$, where $\langle \Delta' \vdash \M A \rangle = \langle \Delta', \{\M\} \vdash A \rangle$.
	\end{itemize}
	The converse:
	\begin{itemize}
		\item If $\Delta = x : B , \Delta'$ and $\Gamma \vdash t : \langle x : B , \Delta' \vdash A \rangle$, then since $\langle x : B , \Delta' \vdash A \rangle = B \Rightarrow \langle  \Delta' \vdash A \rangle$ we have $\Gamma , x : B \vdash t \cdot \texttt{var}(x) : \langle  \Delta' \vdash A \rangle$. This gives us by induction hypothesis a term $\Gamma , x : B , \Delta' \vdash t' : A$ as desired.
		\item If $\Delta = \{\M\} , \Delta'$ and $\Gamma \vdash t : \langle \{\M\} , \Delta' \vdash A \rangle$, then since $\langle \{\M\} , \Delta' \vdash A \rangle = \M \langle  \Delta' \vdash A \rangle$ we have $\Gamma , \{\M\} \vdash \tKey{\M}{\M}{t} : \langle  \Delta' \vdash A \rangle$. This gives us by induction hypothesis a term $\Gamma , \{\M\} , \Delta' \vdash t' : A$ as desired.
	\end{itemize}
\end{proof}

We conclude that $\Gamma \vdash A$ has a lambda term if and only if $\vdash \langle \Gamma \vdash A \rangle$ has a lambda term if and only if it has a proof in the decidable sequent calculus.
Hence it is decidable whether $\Gamma \vdash A$ has a lambda term.

\section{Additional Proofs}

\begin{proof}[Proof of Lemma \ref{lem:Nfac}]
	If $\N_S(a,b,d)$ then $(a,\alpha,b,\beta,d) \in S$ for some possibly empty $\alpha,\beta \in \Ag^*$. If $\N_S(b,c,d)$ then $(b,\gamma,c,\delta,d) \in S$ for some $\gamma , \delta \in \Ag^*$. By property 2 of a forwarding network, we can replace $\beta$ in $(a,\alpha,b,\beta,d)$ with $(\gamma,c,\delta)$, creating $(a,\alpha,b,\gamma,c,\delta,d) \in S$. This shows that $\N_S(a,c,d)$ as desired. Moreover, using property 1 of forwarding networks, we can show that $(b,\gamma,c,\delta,d) \in S$ as well, and hence $\N_S(b,c,d)$.
	The second property has a similar proof.
\end{proof}

\begin{proof}[Proof of Lemma \ref{lem:Jaxiom}]
	First note that if $\M$ and $\N$ satisfy axiom K and necessity, then $\M \N$ satisfies axiom K and necessity, since firstly $\vdash A$ implies $\vdash \N A$ implies $\vdash \M \N A$. 
	Secondly, $\vdash \N A \wedge \N (A \Rightarrow B) \Rightarrow \N B$ so $\M \N A \wedge \M \N (A \Rightarrow B) \Rightarrow \M \N B$ by necessity and axiom K for $\M$.
	We conclude that $\I_\alpha$ satisfies axiom K and necessity for any appropriate $\alpha$.
	
	Proving Axiom K, suppose $\J_{a \ot b}A$ and $\J_{a \ot b}(A \Rightarrow B)$, we want to show that $\J_{a \ot b}B$.
	Let $\gamma \in \Ag^*$ such that $(a,\gamma,b) \in S$, then $\I_{a \ot \gamma \ot b}A$ and $\I_{a \ot \gamma \ot b}(A \Rightarrow B)$, hence by axiom K on $\I_{a \ot \gamma \ot b}$ we get $\I_{a \ot \gamma \ot b}B$. This is for all relevant $\gamma$, hence $\J_{a \ot b}B$. 
	Proving necessity, if $\vdash A$, then $\vdash \I_{a \ot \gamma \ot b}A$ for all $\gamma$ such that $(a,\gamma,b) \in S$, hence $\vdash \J_{a \ot b}A$.
	
	For $\J_{a \ot b}A \Rightarrow \B_a\J_{a \ot b}A$ and $\J_{a \ot b}A \Rightarrow \J_{a \ot b}\B_bA$, simply note: 1) that $\I_{a \ot \gamma \ot b}$ starts with some $\I_{a \ot a'}$ for which $\I_{a \ot a'}B \Rightarrow \B_a \I_{a \ot a'}B$ for any $B$, hence $\I_{a \ot \gamma \ot b}A \Rightarrow \B_a \I_{a \ot \gamma \ot b}A$.
	2) $\I_{a \ot \gamma \ot b}$ ends with some $\I_{b' \ot b}$ for which $\I_{b' \ot b}A \Rightarrow \I_{b' \ot b}\B_bA$. Using axiom K on the rest of $\I_{a \ot \gamma \ot b}$ (which may be empty), we get $\I_{a \ot \gamma \ot b}A \Rightarrow \I_{a \ot \gamma \ot b}\B_bA$.
\end{proof}

\begin{lemma}
	$S_\G$, the set of shortest paths on a finite graph is a forwarding network.
\end{lemma}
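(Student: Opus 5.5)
We check the two defining conditions of a forwarding network directly. Recall $S_\G$ consists of the one-element lists $(a)$ together with all lists $(a,\beta,c)$ that are shortest paths from $c$ to $a$ in $\G$, read so that consecutive entries $x,y$ correspond to an edge $y \to x$. Shortest paths never repeat a vertex, since removing a cycle gives a strictly shorter path. So every element of $S_\G$ is non-empty and non-repeating. Write $d(x,y)$ for the length of a shortest path from $y$ to $x$, with $d(x,x)=0$ and $d(x,y)=\infty$ when no path exists.

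For condition 1 (closure under sublists) we use the standard fact that a subpath of a shortest path is shortest. Suppose $(a,\beta,c)$ is a shortest path from $c$ to $a$, and $(\beta,c)$ is not shortest from $c$ to its first vertex $b$. Then we could replace it by a strictly shorter path from $c$ to $b$ and prepend $a$. The result is a walk from $c$ to $a$ shorter than $(a,\beta,c)$, and it can be shortened to a simple path if needed. This contradicts minimality. The argument for $(a,\beta)$ is symmetric. Degenerate cases where a sublist is a singleton $(x)$ are covered because singletons lie in $S_\G$ by definition.

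For condition 2 (replacement of segments), let $(\alpha,a,\beta,b,\gamma) \in S_\G$ be a shortest path from its last vertex $z$ to its first vertex $x$. By condition 1, the segment $(a,\beta,b)$ is shortest, so its length is $d(a,b)$. Now take any $(a,\beta',b) \in S_\G$; it has the same length $d(a,b)$. The spliced list $(\alpha,a,\beta',b,\gamma)$ is then a walk from $z$ to $x$ of exactly the same length as the original, namely $d(x,z)$. A walk of minimal length cannot repeat a vertex, since excising the repeated loop would give a strictly shorter walk. Hence the spliced list is itself a shortest path and belongs to $S_\G$. The case $a=b$ is trivial, as both segments are then $(a)$.

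The main obstacle is only bookkeeping. We must confirm that the edge-direction convention, with lists read from receiver back to sender, is applied consistently when concatenating. We must also confirm that the spliced walk has length computed additively, so equal segment lengths give equal totals. No result beyond elementary graph facts is needed, so the earlier lemmas of the paper are not invoked.
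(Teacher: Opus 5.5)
Your proof is correct and follows essentially the same route as the paper: closure under sublists holds because a non-shortest subpath would yield a shorter path overall, and splicing in another shortest segment of equal length gives a walk of the original minimal length, which cannot repeat a vertex and is therefore itself a shortest path in $S_\G$. Your explicit treatment of singletons, edge direction and additivity of lengths spells out details that the paper leaves implicit.
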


\begin{proof}
	Note first that a shortest path between two vertices is given by a non-empty list of distinct vertices. The shortest path from $a$ to $a$ is simply given by $(a)$. We prove the three additional properties:
	\begin{enumerate}
		\item Suppose given a shortest path $(a, b_1\dots,b_n,c)$ from $c$ to $a$, then $(a, b_1\dots,b_n)$ has to be a shortest path from $b_n$ to $a$; if not, there would be a shorter path from $c$ to $a$.
		Hence $(a, b_1\dots,b_n) \in S_\G$, and by a similar argument, $ (b_1\dots,b_n,c) \in S$.
		\item Suppose $(a_1,\dots,a_n,b,c_1,\dots,c_m,d,e_1,\dots,e_k)$ is a shortest path, we leave ambiguous which elements are the source and target, since $n$ and/or $k$ could be zero. This is a path of length $n+m+k+1$.  We know that $(b,c_1,\dots,c_m,d)$ is the shortest path from $d$ to $b$. Taking some other shortest path $(b,c'_1,\dots,c'_p,d)$ from $d$ to $b$, we note that it has to have the same length, hence $p = m$. Hence $(a_1,\dots,a_n,b,c'_1,\dots,c'_p,d,e_1,\dots,e_k)$ is also a path of length $n+m+k+1$. Note in particular that $(a_1,\dots,a_n,b,c'_1,\dots,c'_p,d,e_1,\dots,e_k)$ cannot have repeats, since otherwise we could cut out the cycle and create a shorter path, contradicting that that the shortest path in of length $n+m+k+1$. We conclude that $(a_1,\dots,a_n,b,c'_1,\dots,c'_p,d,e_1,\dots,e_k)$ is also a shortest path, and hence in $S_\G$.
	\end{enumerate}
\end{proof}

\begin{lemma}
	If $\G$ is acyclic, $S'_\G$ is a forwarding network.
\end{lemma}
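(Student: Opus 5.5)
We must show that $S'_\G$, the set of all vertex-non-repeating paths in an acyclic graph $\G$, satisfies the definition of a forwarding network. As in the proof for $S_\G$, we first note that every element of $S'_\G$ is by definition a non-empty, non-repeating list of agents, including the trivial paths $(a)$. It remains to check the two closure properties, each by a direct path argument.

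\textbf{Property 1 (sublists).} Suppose $(a,\beta,c)$ is a non-repeating path. Then $(\beta,c)$ and $(a,\beta)$ are obtained by deleting an endpoint. Consecutive elements remain joined by edges of $\G$, and removing an element cannot create a repetition. So both are again non-repeating paths, hence in $S'_\G$. Acyclicity is not needed here.

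\textbf{Property 2 (segment replacement).} Suppose $(\alpha,a,\beta,b,\gamma) \in S'_\G$ and $(a,\beta',b) \in S'_\G$. The list $(\alpha,a,\beta',b,\gamma)$ is a walk in $\G$, since every consecutive pair is an edge: the pairs inside $\alpha,a$ and $b,\gamma$ come from the first path, and the pairs inside $a,\beta',b$ come from the second.

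The only real issue, and the step I expect to be the main obstacle, is showing this walk does not repeat a vertex. Here acyclicity is used. We argue by contradiction. Suppose some vertex $v$ occurs twice. Since $(\alpha,a,\beta,b,\gamma)$ and $(a,\beta',b)$ are each non-repeating, one occurrence of $v$ must lie in $\beta'$ and the other in $\alpha$ or $\gamma$; the cases $v = a$ or $v = b$ are absorbed into this, as $a,b$ occur only once each. The segment of the walk between the two occurrences of $v$ is then a closed walk of positive length, which gives a directed cycle in $\G$. This contradicts acyclicity.

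For the edge direction to work, we read the list $(x_1,\dots,x_n)$ as a path from $x_n$ to $x_1$, so every consecutive pair is an edge $x_{i+1}\to x_i$. With this reading, any contiguous subwalk inherits consistently oriented edges, and a closed subwalk is a genuine directed cycle. Hence the walk is non-repeating and lies in $S'_\G$, which completes the proof.
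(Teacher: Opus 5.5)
Your proof is correct and follows the same route as the paper. The paper only remarks that non-repeating paths are closed under taking subpaths and that, in an acyclic graph, splicing in an alternative segment cannot create a repeated vertex; you supply the details it leaves implicit, namely that a repeat would have to occur once in $\beta'$ and once in $\alpha$ or $\gamma$, so the segment between the two occurrences would be a directed cycle.
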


\begin{proof}
	This works since paths are closed under subpaths (property 1), and the result of replacing subpaths by alternative subpaths (property 2) cannot create a path without repeats.
\end{proof}

\end{document}